\newcommand{\tr}{\text{Tr}}
\newcommand{\Exp}{\mathop{\mathbb{E}}}
\newtheorem{lemma}{Lemma}
\newtheorem{claim}{Claim}
\newcommand{\ket}[1]{|#1\rangle}
\newcommand{\bra}[1]{\langle #1|}
\DeclarePairedDelimiter\abs{\lvert}{\rvert}
\def\vcentcolon{\mathrel{\mathop\ordinarycolon}}
\DeclareMathOperator{\supp}{supp}
\newtheorem{theorem}{Theorem}
\newtheorem*{theorem*}{Theorem}
\newtheorem{definition}[theorem]{Definition} 
\newtheorem{property}{Property}
\begin{document}
\title{Quantum advantage from measurement-induced entanglement \\ in random shallow circuits}
\author{Adam Bene Watts$^{1,2}$}
\author{David Gosset$^{2,3,4}$}
\author{Yinchen Liu$^{2,3,4}$}
\author{Mehdi Soleimanifar$^{5}$}
\affiliation{$^1$ Department of Pure Mathematics, University of Waterloo}
\affiliation{$^2$ Institute for Quantum Computing, University of Waterloo}
\affiliation{$^3$ Department of Combinatorics and Optimization, University of Waterloo}
\affiliation{$^4$ Perimeter Institute for Theoretical Physics, Waterloo}
\affiliation{$^5$ California Institute of Technology}
\begin{abstract}

We study random constant-depth quantum circuits in a two-dimensional architecture.
While these circuits only produce entanglement between nearby qubits on the lattice, long-range entanglement can be generated by measuring a subset of the qubits of the output state. It is conjectured that this \textit{long-range measurement-induced entanglement} (MIE) proliferates when the circuit depth is at least a constant critical value $d^*$.
For circuits composed of Haar-random two-qubit gates, it is also believed that this coincides with a \textit{quantum advantage phase transition} in the classical hardness of sampling from the output distribution.

Here we provide evidence for a quantum advantage phase transition in the setting of random \textit{Clifford} circuits. 
Our work extends the scope of recent separations between the computational power of constant-depth quantum and classical circuits, demonstrating that this kind of advantage is present in canonical random circuit sampling tasks. In particular, we show that in any architecture of random shallow Clifford circuits, the presence of long-range MIE gives rise to an unconditional quantum advantage. In contrast, any depth-$d$ 2D quantum circuit that satisfies a short-range MIE property can be classically simulated efficiently and with depth $O(d)$. 
Finally, we introduce a two-dimensional, depth-$2$, ``coarse-grained" circuit architecture, composed of random Clifford gates acting on $O(\log(n))$ qubits, for which we prove the existence of long-range MIE and establish an unconditional~quantum~advantage.

\end{abstract}
\maketitle

\section{Introduction}
Identifying computational tasks where quantum computers yield an advantage compared to classical ones is a central goal of quantum information science. To make effective progress, one hopes to understand which families of quantum circuits admit efficient classical simulation algorithms, with a focus on quantum circuit architectures that are experimentally feasible in the near-term.

A key task involving quantum circuits is simulating measurement of their output state in the standard basis. In the worst case, classically sampling from this output distribution is intractable---even for the 2D brickwork architecture, and with circuit depth as small as $d=3$~\cite{TerhalConstantDepth2004}.
However, the classical hardness of this problem for circuit architectures composed of \textit{random local gates} remains less well understood. Such random quantum circuits have diverse applications:  they underpin quantum supremacy experiments \cite{google2019supremacy, harrow2017supremacy, Aaronson2017supremacy}, provide benchmarking schemes for near-term quantum devices \cite{boixo2018characterizing, liu2021benchmarking}, and model typical quantum states arising from the dynamics or ground states of locally-interacting quantum~systems~\cite{fisher2023random}.
Random instances of quantum circuits can also help identify \emph{generic} features that relate to the classical hardness of simulating quantum circuits.

Perhaps surprisingly, it has been conjectured that the complexity of sampling from the output distribution of random geometrically local quantum circuits in a two-dimensional architecture undergoes a \emph{quantum advantage phase transition}: polynomial-time classical simulation is believed to be possible if and only if the circuit depth is 
below a constant critical value $d^*$ \cite{NappShallow2022}.
Recent works have proposed an intriguing physical origin for this computational transition, linking it to a change in the amount of \emph{measurement-induced entanglement} (MIE) in random quantum circuits \cite{NappShallow2022, bao2021finite, Liu2022MeasurementInduced}.

Measurement-induced entanglement, which enables quantum protocols such as teleportation and entanglement swapping, occurs when spatially separated subsystems of a many-body state become entangled due to measurements on other subsystems. When combined with unitary dynamics, MIE gives rise to emergent phases with distinctive entanglement patterns \cite{Skinner2019Monitored, Chan2019Monitored, Li2018Monitored}. 
In 2D constant-depth quantum circuits, where only nearby qubits may be entangled prior to measurements, MIE can lead to the formation of entanglement between more distant qubits. This in turn can affect the cost of classically simulating measurement of all qubits in the output state of the quantum circuit.  Based on numerical simulations and heuristic mappings to statistical models, previous works \cite{bao2021finite, NappShallow2022} have suggested that for circuit depth $d < d^*$, MIE results in \emph{short-range} entanglement, making the circuit amenable to classical simulation techniques. 
However, for depths $d \geq d^*$, MIE induces \emph{long-range} entanglement, which poses an obstacle for the efficient classical simulation of random quantum circuits. \cref{fig:numerics} illustrates the change in behaviour of MIE  at the conjectured critical circuit depth $d^{*}=6$ for two-dimensional random Clifford circuits. 
A similar transition in MIE has been recently observed in random 2D circuits composed of one- and two-qubit gates that form a universal gate set \cite{google2023measurementInduced}.

Establishing this quantum advantage of random quantum circuits---and more broadly, a separation between efficient quantum and classical computation---has remained elusive. 
A series of recent works \cite{bravyi2018quantum, bravyi2020quantum, BeneWatts2019Seperation, Coudron2021Certifiable, LeGall2019AverageCase, Grier2020Interactive, caha2022single, caha2023colossal} have provided unconditional results addressing a more limited notion of quantum advantage. 
In these works, a computational problem is introduced that can be solved by geometrically-local constant-depth quantum circuits, but not by any classical probabilistic circuit whose depth grows sub-logarithmically with the input size. These works also provide distributions over instances such that classical shallow circuits cannot even solve an instance in the average case (see e.g., \cite{bravyi2018quantum,LeGall2019AverageCase}).
Existing results along these lines make use of certain specially tailored families of shallow quantum circuits. This raises the question of whether \textit{generic} shallow quantum circuits over a fixed architecture demonstrate a similar quantum advantage. 

In this paper, we answer this question in the affirmative. We describe a family of random quantum circuits whose measurement distributions cannot be simulated with shallow classical circuits. The key to our findings is establishing a connection between the emergence of long-range MIE at the critical depth $d^*$ and the limitations of low-depth classical simulation methods. 

Let us now describe our results in more detail.

\begin{figure*}[t]
\begin{minipage}{0.45\textwidth}
\centering
\begin{tikzpicture}[scale=1.2, transform shape]
    \def\smallSquareSide{1}
    \def\padding{0.125} 

    \def\shift{-0.5} 
    \def\shiftt{-0.25} 

    \def\myfontsize{\fontsize{12pt}{13pt}\selectfont} 

    \foreach \x in {0,4} {
        \draw[teal, thick] (\x+\shift,\shift) -- (\x+\shift,4+\shift);
        \draw[teal, thick] (\shift,\x+\shift) -- (4+\shift,\x+\shift);
    }

    \draw[black, thick, fill=red, fill opacity=0.2] (1.25+\shift, 1.25+\shift) rectangle (3+\shift, 3+\shift);
 \node[magenta] at (1.3+\shift + 0.15, 3+\shift-0.25) {\myfontsize \textcolor{black}{B}};

\draw[|-|, thick] (4+\shift+0.2, 1.25+\shift) -- (4+\shift+0.2,3+\shift);
\node[] at (4+\shift + 0.4, 2.2+\shift) {$L$};
    
    \draw[black, thick, fill=blue, fill opacity=0.2] (2+\shift, 2+\shift) rectangle (2.25+\shift, 2.25+\shift);
     \node[magenta] at (0.1+\shift + 0.15, 3.9+\shift-0.15) {\myfontsize \textcolor{black}{C}};
    
    \foreach \x in {0,1,2, 3} {
        \foreach \y in {0,1,2,3} {
            \pgfmathsetmacro\startX{\x+\shiftt+ \padding+\shiftt}
            \pgfmathsetmacro\startY{\y+\shiftt + \padding+\shiftt}
            \pgfmathsetmacro\pointSpacing{(\smallSquareSide - 2 * \padding) / 3}
            \foreach \i in {0,...,3} {
                \foreach \j in {0,...,3} {
                    \pgfmathsetmacro\pointX{\startX + \i * \pointSpacing}
                    \pgfmathsetmacro\pointY{\startY + \j * \pointSpacing}
                    \ifnum\x=1
                        \ifnum\y=2
                            \ifnum\i=3
                                \ifnum\j=0
                                    \fill (\pointX, \pointY) circle (0.75pt);

                                \else
                                    \fill (\pointX, \pointY) circle (0.75pt);
                                \fi
                            \else
                                \fill (\pointX, \pointY) circle (0.75pt);
                            \fi
                        \else
                            \fill (\pointX, \pointY) circle (0.75pt);
                        \fi
                    \else
                        \fill (\pointX, \pointY) circle (0.75pt);
                    \fi
                }
            }
        }
    }
\end{tikzpicture}
\caption{A 2D grid of qubits, partitioned into three regions $A,B$ and $C$. The purple region indicates the single qubit $A$. The red region indicates the square shielding region $B$ with side length $L$ which in this example is $L=7$. All other qubits are in $C$.}
\label{fig:abc}
\end{minipage}
\hspace{0.1cm}
\begin{minipage}{0.45\textwidth}
    \centering
    \includegraphics[width=0.93\textwidth]{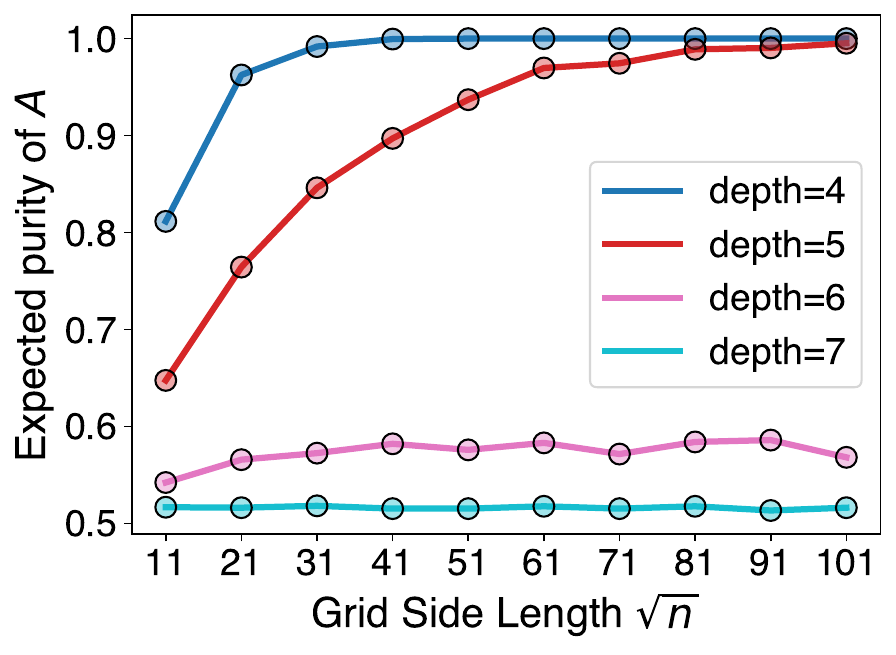}
    \caption{\label{fig:numerics} Classical simulation of MIE in 2D random Clifford circuits in the brickwork architecture (see \cref{fig:brickwork}). Here  $A$ is the qubit at the centre of the grid, $C$ contains all the qubits on the boundary of the grid, and $B$ contains all the other qubits. Each data point is estimated from $1040$ circuit instance samples. The simulation is performed using Stim \cite{gidney2021stim}.}
\end{minipage}
\end{figure*}
\section{Overview of results}
\subsection{Measurement-induced entanglement}
Consider $n$ qubits at the vertices of a 2D square lattice. 
These qubits are partitioned into three sets $A$, $B$, and $C$ where $A$ is a single qubit and $B$ is a square region that shields $A$ from $C$, see \cref{fig:abc}.
All qubits are initialized in the $\ket{0}$ state. We apply a unitary $U$ to $A\cup B\cup C$ and then measure qubits in $B$ in the computational basis. For the moment we consider the case where $U$ is a random depth-$d$ circuit in the brickwork architecture, see \cref{fig:brickwork}.  We study the setting where the side length of $B$ is much larger than $2d$. By a lightcone argument, this ensures that qubit $A$ is unentangled from qubits in $C$ in the output state of the quantum circuit, and only may become entangled as a result of the measurement of qubits in $B$. 

In this setting, one can ask: \emph{What is the shallowest depth $d^*$ at which there is measurement-induced entanglement between $A$ and $C$?}

For simplicity we can imagine that we have already taken the  limit of an infinite lattice $n\rightarrow \infty$. Then for depth $d<d^*$ it is conjectured that  the expected purity $\mathbb{E}[\tr(\rho^2_A)]$ of qubit $A$ (averaged over the random choice of circuit and measurement outcomes on $B$) approaches $1$ exponentially fast as $|B|$ grows, while for $d\geq d^{\star}$ this expected purity is at most $1-\gamma(d)$ where $\gamma(d)>0$ is a constant that may depend on depth $d$, but is independent of the size of the shielding region $B$ \cite{NappShallow2022, bao2021finite}.

This scenario motivates the following definitions of short-range and long-range MIE for output states of 2D random quantum circuits. In the following definitions, a 2D, $n$-qubit, random quantum circuit is described by a probability distribution over quantum circuits that act on $n$ qubits arranged at the vertices of a two-dimensional square grid. A family of 2D random quantum circuits is described by a sequence of such probability distributions, with an increasing number of qubits $n_1,n_2,n_3,\ldots$. Note that here we do not restrict the gates of the circuit to act on nearest-neighbor qubits.

\begin{property}[Short-range MIE]
A family $\mathcal{F}$ of random 2D quantum circuits satisfies the short-range MIE property if the following holds. Suppose $\psi$ is a random pure state on a 2D lattice of $n$ qubits that is generated according to $\mathcal{F}$. We measure all qubits of $\psi$ in an $L\times L$ shielding region $B$ centred at qubit $A\in [n]$, obtaining outcome $x\in \{0,1\}^{|B|}$. Let $\rho_A=\rho_A(x)$ be the postmeasurement state of qubit $A$. Then
\begin{equation}
\mathbb{E}\left[\mathrm{Tr}{\rho_A^2}\right]\geq 1-e^{-\Omega(L)}.
\label{eq:srcondition}
\end{equation}
Here the expectation is taken over the randomness in generating $\psi$ as well as the distribution of random measurement outcomes on qubits in $B$.
\label{prop:shortrangemie}

\end{property}

\begin{property}[Long-range MIE] A family $\mathcal{F}$ of random 2D quantum circuits satisfies the long-range MIE property if there is a constant $c\in (0,1)$ such that the following holds. Let $\psi$ be a random pure state on a 2D lattice of $n$ qubits that is generated according to $\mathcal{F}$.  We measure all qubits of $\psi$ in an $L\times L$ shielding region $B$ centred at qubit $A\in [n]$, obtaining outcome $x\in \{0,1\}^{|B|}$. Let $\rho_A=\rho_A(x)$ be the postmeasurement state of qubit $A$. Here $L$ is any positive integer such that $|C|\neq 0$. Then
\begin{equation}
\mathbb{E}\left[\mathrm{Tr}{\rho_A^2}\right]< c,
\end{equation}
where the expectation is taken over the randomness in generating $\psi$ as well as the distribution of random measurement outcomes on qubits in $B$.
\label{prop:lrmie}
\end{property}

As noted above, we expect 2D brickwork random circuits to have the short-range MIE property for depth $d<d^{*}$ and to satisfy the long-range MIE property for $d\geq d^{*}$. In fact, in the latter case we expect that measurement-induced entanglement is ``everywhere"---that is, present with respect to more general tripartitions $[n]=ABC$ of the qubits (not just the case discussed above where $B$ is a square shielding region centred at $A$). We also expect that this proliferation of MIE gives rise to genuine multipartite entanglement. We shall use the following notion of long-range tripartite MIE, which is phrased in more general terms and can be applied to random circuits in other architectures. 

\begin{property}[Long-range tripartite MIE]
A family $\mathcal{F}$ of random quantum circuits satisfies the long-range tripartite MIE property if there are constants $c_1,c_2\in (0,1)$ such that the following holds. Suppose $\psi$ is a random $n$-qubit pure state that is generated according to $\mathcal{F}$. Let $\Omega=\{h,i,j\}\subseteq [n]$ be a uniformly random triple of qubits. Let $\rho_{\Omega}=\rho_{\Omega}(x)$ be the state of qubits in $\Omega$ after measuring all other qubits of $\psi$ in the computational basis and obtaining outcome $x\in \{0,1\}^{n-3}$. Let $\rho_{h}, \rho_{i}, \rho_{j}$ be its one-qubit reduced density matrices. Then, with probability at least $c_1>0$ we have
\begin{equation}
\mathrm{Tr}[\rho_h^2]<c_2 \quad \text{and} \quad \mathrm{Tr}[\rho_i^2]<c_2 \quad \text{and} \quad \mathrm{Tr}[\rho_j^2]<c_2.
\label{eq:1qentang}
\end{equation}
Here the probability is over the random choice of $\psi$, the choice of triple $\{h,i,j\}$, and the random measurement outcomes on qubits in $[n]\setminus \{h,i,j\}$.
\label{prop:mie}
\end{property}

Rigorously establishing the conjectures described above, and the existence of a phase transition in measurement-induced entanglement for 2D brickwork circuits at a constant depth $d^{*}$, is a challenging open problem. Part of the challenge is that the shallow circuit depth does not allow the quantum system to fully randomize in most senses of interest.  On the other hand, for larger circuit depths $d$ growing with $n$, random circuits can be easier to analyze as they begin to inherit properties of the Haar-random unitary ensemble, such as the unitary $t$-design property and anticoncentration.

Indeed, we show that long-range MIE exists whenever there is  subsystem anticoncentration--a feature of random quantum circuits which is expected to hold for depth $\Omega(\log(n))$ \cite{Dalzell2022AntiConcentration}. Let $P_{AB}(0)=\langle 0^{ABC}|U^{\dagger} |0\rangle\langle 0|_{AB} U|0^{ABC}\rangle$ be the probability of measuring all-zeros on qubits in $AB$. Define a measure of subsystem anticoncentration
\begin{equation}
\chi\equiv 4^{|AB|}\mathbb{E}_{U}[P_{AB}(0)^2]-1.
\label{eq:subsystemanti}
\end{equation}
Here $\chi$ depends on the tripartition $ABC$ as well as the random circuit ensemble.  Note that if $U$ is a Haar-random $n$-qubit unitary then 
\[
\chi_{\mathrm{Haar}}=\frac{2^{|AB|}-1}{2^{|ABC|}+1}
\]
which is exponentially small in $|C|$. We say that the subsystem $AB$ anticoncentrates with respect to a random circuit ensemble if $\chi\rightarrow 0$ as $|C|$ grows.

The following result shows that subsystem anticoncentration implies long-range MIE. 

\begin{theorem}[Informal]\label{res:anticoncentration}
For any tripartition of the qubits $ABC$ where $A$ is a single qubit, we have
\[
\mathbb{E}\left[\mathrm{Tr}{\rho_A^2}\right]\leq \frac{1}{2}+\frac{3}{2}\sqrt{\chi}.
\]
\end{theorem}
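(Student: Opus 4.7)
The plan is to decompose $\mathrm{Tr}(\rho_A^2)$ in the single-qubit Pauli basis on $A$ and bound each Pauli second moment individually. Writing $\rho_A(x) = \sigma^x_A/p(x)$ with $\sigma^x_A$ the unnormalized post-measurement state on $A$, the single-qubit Bloch-sphere identity $\mathrm{Tr}(\rho_A(x)^2) = \tfrac{1}{2}(1 + \sum_{P\in\{X,Y,Z\}}\langle P\rangle^2_{\rho_A(x)})$ reduces the theorem to showing
\[
\sum_{P\in\{X,Y,Z\}}\mathbb{E}_U \sum_x \frac{n_{x,P}^2}{p(x)} \leq 3\sqrt{\chi}, \qquad \text{where } n_{x,P} = \mathrm{Tr}[P_A\sigma^x_A],
\]
with the factor $3$ arising from summing over the three nontrivial Paulis on $A$.

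For each fixed $P$, the key trick is to absorb the awkward $1/p(x)$ via the operator-norm bound $|n_{x,P}| \leq \|P_A\|_\infty \|\sigma^x_A\|_1 = p(x)$, which gives $n_{x,P}^2/p(x) \leq |n_{x,P}|$. Cauchy--Schwarz then yields $\sum_x |n_{x,P}| \leq \sqrt{2^{|B|}\sum_x n_{x,P}^2}$, and expanding $|x\rangle\langle x|_B = 2^{-|B|}\sum_{S\subseteq B}(-1)^{x\cdot S}Z^S_B$ together with Parseval on $\{0,1\}^{|B|}$ gives $\sum_x n_{x,P}^2 = 2^{-|B|}\sum_{S\subseteq B}\langle P_A Z^S_B\rangle^2_\psi$, which exactly cancels the $\sqrt{2^{|B|}}$. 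Taking $\mathbb{E}_U$ and applying Jensen's inequality to move the expectation inside $\sqrt{\cdot}$ then produces the per-Pauli bound
\[
\mathbb{E}_U \sum_x \frac{n_{x,P}^2}{p(x)} \leq \sqrt{\sum_{S\subseteq B}\mathbb{E}_U\bigl[\langle P_A Z^S_B\rangle^2\bigr]}.
\]

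The final step identifies each such sum with $\chi$. Expanding $P_{AB}(0) = 2^{-|AB|}\sum_{T\subseteq AB}\langle Z^T\rangle_\psi$, squaring and averaging gives $\chi + 1 = \mathbb{E}_U[(\sum_T\langle Z^T\rangle)^2]$; under Pauli invariance of the random Clifford output distribution the cross terms $\mathbb{E}[\langle Z^T\rangle\langle Z^{T'}\rangle]$ with $T\neq T'$ vanish, leaving $\chi = \sum_{T\subseteq AB,\,T\neq\emptyset}\mathbb{E}[\langle Z^T\rangle^2]$. For $P = Z$ the operators $\{Z_A Z^S_B : S\subseteq B\}$ index exactly the $T\subseteq AB$ with $A\in T$, so $\sum_S\mathbb{E}[\langle Z_A Z^S_B\rangle^2] \leq \chi$ as a subsum; for $P\in\{X,Y\}$, invariance of the ensemble under single-qubit Cliffords on $A$ identifies $\mathbb{E}[\langle P_A Z^S_B\rangle^2]$ with $\mathbb{E}[\langle Z_A Z^S_B\rangle^2]$, giving the same bound. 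Assembling the three $\sqrt{\chi}$ contributions and multiplying by $\tfrac{1}{2}$ yields the claimed inequality.

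The hardest part is the last identification for $P\in\{X,Y\}$: the $P = Z$ case falls straight out of the Pauli expansion of $|0\rangle\langle 0|_{AB}$, but the off-diagonal Paulis on $A$ probe coherences that are not visible in the single computational-basis quantity $P_{AB}(0)$ defining $\chi$. The argument relies on an ensemble symmetry (local-Clifford invariance on qubit $A$), which is natural for the uniformly random Clifford architectures of interest but should be verified carefully, and in a formal version may call for a slightly more symmetric definition of $\chi$.
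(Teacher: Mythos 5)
Your proposal is correct and follows essentially the same route as the paper's proof of Theorem 1 (Theorem 6 in the formal version): the single-qubit Bloch purity decomposition, the bound $n_{x,P}^2/p(x)\leq|n_{x,P}|$, Cauchy--Schwarz over $x$ combined with the Pauli-$Z$ Parseval expansion of $|x\rangle\langle x|_B$, Jensen to move $\mathbb{E}_U$ inside the square root, and ensemble invariance to treat $P\in\{X,Y\}$ on the same footing as $Z$ all appear, in the same order, in the paper (where the $\sum_s\langle P_A Z(s)_B\rangle^2$ quantity is repackaged as $\mathbb{E}_D[|S_P|]$ after a unitary 2-design reduction to the Clifford ensemble). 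The subtlety you flag at the end is real but is exactly the point handled by the paper's 2-design step and by the local-Clifford invariance giving $\mathbb{E}_D[|S_X|]=\mathbb{E}_D[|S_Y|]=\mathbb{E}_D[|S_Z|]$, so your caveat is already resolved by the same symmetry you invoke.
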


We will use this bound to establish long-range MIE in a certain ``coarse-grained" circuit architecture which is introduced below; this describes depth-$2$ circuits with gates acting on $O(\log(n))$ qubits at a time. However, we expect long-range MIE to be present even in constant-depth 2D random quantum circuits composed of two-local gates and with depth above the critical value. Such circuits do not satisfy the subsystem anticoncentration condition considered above. Indeed, for $|C|<n$, the $\chi$ value for depth-$d$ random quantum circuits composed of two-local gates over an arbitrary architecture admits the following lower bound
\begin{align*}
\chi&=\sum_{s\in\{0,1\}^{|AB|}}\mathbb{E}_U[\bra{0^n}U^\dagger (Z(s)_{AB}\otimes I_C)\ket{0^n}^2]-1\\
&\geq\mathbb{E}_U[\bra{0^n}U^\dagger Z_A U\ket{0^n}^2]\geq\frac{1}{3}\left(\frac{2}{5}\right)^d
\end{align*}
where the last inequality follows from Lemma 2.12 in Ref.~\cite{liu2021moments}; see also \cite{barak2020spoofing}. Thus, $\chi\geq\Omega(1)$ for $d=O(1)$, demonstrating that a different proof technique is needed to establish long-range MIE in this setting.

Next we describe how MIE is related to classical simulation and quantum advantage.

\subsection{Shallow and efficient classical simulation}

We obtain a simple, efficient, and shallow classical simulation of 2D quantum circuits 
under the condition that they only generate short-range MIE. We expect that this condition is satisfied by 2D brickwork random circuits in  the low-depth regime $d < d^*$ where MIE is expected to be short-range. 

Our classical simulation algorithm is a modified version of the recently introduced ``gate-by-gate" method \cite{bravyi2022simulate}.
\begin{theorem}[Informal]
    Consider a depth-$d$ quantum circuit $U=U_tU_{t-1}\ldots U_1$ acting on $n$ qubits arranged at the vertices of a 2D grid. Suppose that each gate in the circuit is either a CNOT gate between nearest-neighbor qubits on the grid, or a single-qubit gate. Suppose that the output state of any subcircuit of $U$ satisfies the short-range MIE condition \cref{eq:srcondition}. Then there is an efficient algorithm which samples from the output distribution of $U$. Moreover, this probabilistic classical algorithm can be parallelized to depth $O(d)$ using gates that act on $O(\log^2(n))$ bits at a time. 
\label{res:sim}
\end{theorem}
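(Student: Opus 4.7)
The plan is to adapt the gate-by-gate simulation algorithm of Ref.~\cite{bravyi2022simulate}, replacing each global resampling step by a local one whose correctness is certified by the short-range MIE property applied to the relevant subcircuit. We inductively maintain a sample $x^{(k)}\in\{0,1\}^n$ whose distribution approximates $p_k(y)=|\langle y|U_k\cdots U_1|0^n\rangle|^2$, beginning with $x^{(0)}=0^n$. When the $(k{+}1)$st gate $V_S$ on qubits $S$ is processed, unitarity of $V_S$ implies that the marginal of $p_{k+1}$ on $S^c$ equals that of $p_k$, so we keep $x^{(k+1)}_{S^c}=x^{(k)}_{S^c}$ and only need to resample $x^{(k+1)}_S\sim p_{k+1}(\cdot\mid x^{(k)}_{S^c})$. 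The exact conditional equals $\langle x_S|V_S\,\rho_S^{(k)}(x^{(k)}_{S^c})\,V_S^\dagger|x_S\rangle$, where $\rho_S^{(k)}(x_{S^c})$ is the post-measurement state of $S$ in $\psi_k\vcentcolon=U_k\cdots U_1|0^n\rangle$ given outcomes $x_{S^c}$ on $S^c$.

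The crucial step is to show that this conditional can be replaced, with error $e^{-\Omega(L)}$, by the localized conditional $p_{k+1}(\cdot\mid x^{(k)}_B)$, where $B$ is an $L\times L$ shielding region around $S$ of side $L=\Theta(\log n)$. We apply the short-range MIE property (\cref{eq:srcondition}) to the subcircuit $U_{k+1}\cdots U_1$: for each qubit $a\in S$, the bound $\mathbb{E}\,\mathrm{Tr}(\rho_a(x_B)^2)\geq 1-e^{-\Omega(L)}$ implies that $a$ is nearly decoupled from $[n]\setminus(S\cup B)$ after the $B$-measurement. A short convexity argument---using that a near-pure density matrix has an almost-unique convex decomposition---then yields $\mathbb{E}\,\|\rho_a(x_B)-\rho_a(x_{S^c})\|_1\leq e^{-\Omega(L)}$, and hence an $e^{-\Omega(L)}$ total-variation error on the conditional per gate. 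For CNOT gates with $|S|=2$ we resample the two qubits of $S$ sequentially, invoking the single-qubit bound twice. Telescoping over the $O(nd)$ gates and setting $L=\Theta(\log n)$ keeps the cumulative sampling error $o(1)$ on average over the circuit, hence $o(1)$ with high probability by Markov.

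Given the localized conditional, $p_{k+1}(\cdot\mid x_B)$ depends only on the backward lightcone of $S\cup B$ in the subcircuit $U_{k+1}\cdots U_1$. For a nearest-neighbour depth-$d$ 2D circuit this lightcone contains $O((L+d)^2)=O(\log^2 n)$ qubits in the regime $d=O(\log n)$, and we compute the conditional by direct simulation on it. For the parallel implementation, we note that gates within one quantum layer act on disjoint qubits and each resampling depends only on the pre-layer sample inside its local shielding region. Hence all gates of a quantum layer can be updated simultaneously by a single classical layer of gates each reading $O(L^2)=O(\log^2 n)$ bits; composing over the $d$ quantum layers gives the claimed depth-$O(d)$ classical circuit.

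The main obstacle is the quantitative step: converting the expected single-qubit purity guarantee into a total-variation bound that accumulates cleanly over $O(nd)$ updates. My plan is to prove a per-step lemma bounding $\mathbb{E}_{x_{S^c}}\|\rho_a(x_B)-\rho_a(x_{S^c})\|_1$ in terms of the expected impurity $1-\mathbb{E}\,\mathrm{Tr}(\rho_a(x_B)^2)$, then sum over steps by linearity of expectation and convert to a high-probability statement via Markov. A secondary subtlety is that short-range MIE is stated for a shielding region centred on a single qubit, whereas we need it around the two-qubit support of a CNOT and possibly slightly off-centre; both issues are handled by resampling qubit-by-qubit and inflating $L$ by a constant factor.
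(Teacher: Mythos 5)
Your high-level plan—gate-by-gate resampling with localized conditionals, bounded per-step by short-range MIE—is the same as the paper's, and your purity-to-trace-distance step is essentially the paper's Lemma~\ref{lem:purity} (the identity $\mathbb{E}_z[F(\rho_t(z),\tilde\rho_t(z_B))]=\mathbb{E}_{z_B}[\mathrm{Tr}(\tilde\rho_t(z_B)^2)]$, then $\|\cdot\|_1\le 2\sqrt{1-F}$ and Jensen), stated more heuristically. However, there is a genuine gap in your treatment of two-qubit gates, and it is precisely the point the paper is designed around. The paper does \emph{not} resample for CNOT gates at all: the observation is that a CNOT permutes computational-basis strings, so the update $x_j\leftarrow x_i\oplus x_j$ maps a sample from $P_{t-1}$ to a sample from $P_t$ exactly, with zero error, and this is \emph{why} the theorem restricts the gate set to $\mathrm{CNOT}+\mathrm{SU}(2)$. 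Resampling is only needed when $U_t$ is a single-qubit gate, and then the resampled subsystem $A$ is a single qubit, which is exactly what the short-range MIE property and Lemma~\ref{lem:purity} address. Your proposal to ``resample the two qubits of $S$ sequentially, invoking the single-qubit bound twice'' does not go through as stated: the first resample requires the conditional $p_{k+1}(x_i\mid x_{S^c})$ with $S^c=[n]\setminus\{i,j\}$, which marginalizes over $x_j$ and is therefore \emph{not} a conditional on the full complement of a single qubit; Property~\ref{prop:shortrangemie} and the fidelity$=$purity identity only control $p_{k+1}(x_a\mid x_{\{a\}^c})$ versus $p_{k+1}(x_a\mid x_B)$. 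Closing this hole would require either a two-qubit analog of the MIE property or an extra marginalization argument you have not supplied—and all of it is unnecessary once you notice the CNOT update is exact.

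A secondary, smaller issue: your parallelization sketch asserts that all gates in a quantum layer can be updated by a single classical layer, but resamplings with overlapping shielding regions have a read-before-write dependency (the update of $A(t_1)$ must not see the already-updated $A(t_2)$ if $A(t_2)\in B(t_1)$). The paper handles this by partitioning the grid into $3L\times 3L$ blocks and sweeping through $9$ sublayers; a similar blocking device is needed to make the depth-$O(d)$ claim rigorous.
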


The classical circuit in the above can be parallelized to depth $O(d)$ even in a ``programmable" setting where it takes as input the specification of the individual gates $U_1,U_2,\ldots, U_t$ in the circuit and outputs a binary string approximately sampled from the output distribution. 

Our algorithm complements an existing technique from Ref.~\cite{NappShallow2022} for classically simulating low-depth 2D circuits which is efficient under a different criterion for short-range MIE. The `patching algorithm' described in  \cite{NappShallow2022} proceeds by first sampling from small disjoint subregions of the 2D lattice and then patching them together to obtain an overall sample from the output distribution. 
This algorithm succeeds when the short-range MIE property holds for subregions $A$ of size $O(\log^2(n))$ and when $X = A \cup B \cup C$ corresponds to a subregion $X \subset [n]$ obtained by tracing out disconnected squares of side $O(\log(n))$ from the lattice.  
In contrast, our simulation algorithm for the quantum circuit $U$ in \cref{res:sim} relies on any subcircuit of $U$ exhibiting short-range MIE when $A$ is a single qubit and  $A \cup B \cup C = [n]$ is a tripartition of the entire lattice. 

\begin{figure}
    \centering
    \begin{tikzpicture}[scale=0.76, transform shape]
    \def\gridSize{6} 
    \def\vertexSize{2pt}
    \def\edgeThickness{1.5pt} 
    \def\vertexGap{0.15} 
    \colorlet{myRed}{Thistle}
    \colorlet{myGreen}{teal}
    \colorlet{myBlue}{black}
    \colorlet{myBrown}{Mahogany}
    \foreach \i in {1,...,\gridSize} {
        \foreach \j in {1,...,\gridSize} {
            \fill[black] (\i, \j) circle (\vertexSize); 
        }
    }
    \foreach \j in {1,...,\gridSize} {
        \foreach \k in {1,...,\gridSize} {
            \ifodd\k
                \draw[myRed, line width=\edgeThickness, rounded corners] (\j, \k + \vertexGap) -- (\j, \k+1 - \vertexGap);
            \fi
        }
    }
    \foreach \j in {1,...,\gridSize} {
        \foreach \k in {1,...,\numexpr\gridSize-1} {
            \ifodd\k\else
                \draw[myGreen, line width=\edgeThickness, rounded corners] (\j, \k + \vertexGap) -- (\j, \k+1 - \vertexGap);
            \fi
        }
    }
    \foreach \j in {1,...,\numexpr\gridSize-1} {
        \foreach \k in {1,...,\gridSize} {
            \ifodd\j
                \draw[myBlue, line width=\edgeThickness, rounded corners] (\j + \vertexGap, \k) -- (\j+1 - \vertexGap, \k);
            \else
                \draw[myBrown, line width=\edgeThickness, rounded corners] (\j + \vertexGap, \k) -- (\j+1 - \vertexGap, \k);
            \fi
        }
    }
    \draw[black, thick] (0.5, 0.5) rectangle (\gridSize+0.5, \gridSize+0.5);
\end{tikzpicture}
\caption{2D brickwork architecture on a $6\times6$ grid of qubits. Edges with the same color indicate two-qubit gates that can be applied simultaneously in a single layer. A circuit of depth $d$ applies a sequence of $d$ layers that rotates through the four colors.}
\label{fig:brickwork}
\end{figure}
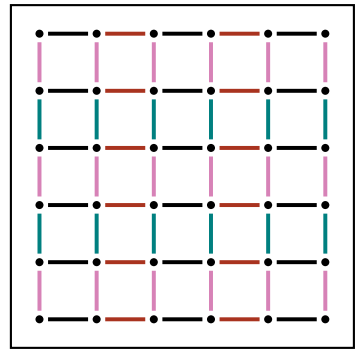

\subsection{Quantum advantage with random shallow Clifford circuits}
Next we specialize to shallow Clifford circuits. These circuits are capable of outperforming classical constant-depth circuits at certain tasks---the kind of quantum advantage described in Refs. \cite{bravyi2018quantum, bravyi2020quantum, BeneWatts2019Seperation, Coudron2021Certifiable, LeGall2019AverageCase, Grier2020Interactive, caha2022single, caha2023colossal}. More precisely, these works use classically controlled (or ``programmable") Clifford circuits: the quantum circuit takes as input a binary string $x$ and then samples from the output distribution of a Clifford circuit $C_x$ that depends on the input. 

To study quantum advantage in random Clifford circuits we view them as programmable Clifford circuits with random inputs. That is, for any depth-$d$ circuit architecture with gates acting on $k$-qubits at a time, we consider a controlled Clifford circuit that takes as input a specification of the individual gates in the circuit. Note that a $k$-qubit Clifford gate can be specified by $O(k^2)$ bits. So the programmable Clifford circuit is also depth-$d$ but has gates that act on $O(k^2)$ qubits at a time.

Can this programmable Clifford circuit, with random input, be simulated by a shallow classical circuit?  For a given choice of gates (input) the classical simulator succeeds if its output distribution is close in total variation distance to the correct output distribution. We say that a classical probabilistic circuit simulates random Clifford circuits in this architecture if it succeeds with high probability over the choice of random gates.

The following result shows that shallow random Clifford circuit families with $k=O(1)$-qubit gates that generate long-range tripartite MIE cannot be simulated by constant-depth classical circuits using gates that act on $K=O(1)$ bits at a time.

\begin{theorem}[Informal]\label{res:quantum-advantage} Consider a depth-$d$ circuit architecture with random Clifford gates. Suppose that the family of random Clifford circuits in this architecture satisfies the long-range tripartite MIE property (\cref{prop:mie}). Suppose $n$ is sufficiently large. Then a probabilistic classical circuit composed of gates with fan-in $K$ and circuit depth $D$ which simulates $n$-qubit random Clifford circuits in this architecture satisfies $D \geq \frac{1}{4}\frac{\log(n)}{\log(K)}$.
\end{theorem}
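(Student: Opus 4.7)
My plan combines three ingredients: long-range tripartite MIE together with stabilizer-state structure, a GHZ nonlocal game argument in the spirit of Bravyi-Gosset-K\"onig~\cite{bravyi2018quantum}, and a lightcone counting argument that exploits the randomness of the three special qubit positions. The starting point is two standard facts about stabilizer states: (i) for any pure stabilizer state on three qubits, each single-qubit reduced density matrix has purity equal to $1$ or to $1/2$; and (ii) computational-basis measurement of a stabilizer state produces a stabilizer state on the remaining qubits. By \cref{prop:mie} applied to the Clifford ensemble, with probability at least $c_1$ over the random circuit, the triple $\{h,i,j\}$, and the measurement outcomes on $[n]\setminus\{h,i,j\}$, all three single-qubit marginals of $\rho_{\{h,i,j\}}$ have purity strictly below $c_2<1$. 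Combined with (i) this forces all three marginals to be exactly maximally mixed; the classification of three-qubit pure stabilizer states (a short symplectic argument using that the stabilizer group has no weight-$1$ elements) then shows that $\rho_{\{h,i,j\}}=(U_h\otimes U_i\otimes U_j)\ket{\mathrm{GHZ}}\bra{\mathrm{GHZ}}(U_h\otimes U_i\otimes U_j)^\dagger$ for some single-qubit Cliffords $U_h,U_i,U_j$, where $\ket{\mathrm{GHZ}}=(\ket{000}+\ket{111})/\sqrt{2}$.

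Next, I would set up a GHZ nonlocal game by considering four variants of the Clifford circuit obtained by appending one of the Pauli triples $(P_a,P_b,P_c)\in\{(X,X,X),(X,Y,Y),(Y,X,Y),(Y,Y,X)\}$ at qubits $h,i,j$ before the final measurement. For the GHZ state each setting enforces a deterministic parity constraint on the three outcomes, and no local-hidden-variable strategy with shared randomness satisfies all four constraints simultaneously with average probability exceeding $3/4$. Any classical simulator whose TV error on the output distribution is $o(1)$ must approximately satisfy these parities on each of the four circuits, so it cannot be expressible as a local strategy $y_h=f_h(a,R)$, $y_i=f_i(b,R)$, $y_j=f_j(c,R)$ with shared randomness $R$.

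The lightcone step then extracts exactly this local structure from the hypothesized shallow classical simulator. Its input is a bitstring of length $M=\poly(n)$ specifying the random Clifford gates, and each output bit depends on at most $K^D$ input bits. The bits encoding the appended Paulis $P_a,P_b,P_c$ occupy $O(1)$ positions determined by $h,i,j$. Since $h,i,j$ are independently uniform over $[n]$ and $M\geq n$, the probability of any of the six ``cross-lightcone'' events (e.g.\ the bits encoding $P_b$ lying inside the lightcone of $y_h$) is at most $K^D/n\leq n^{-3/4}$ when $D<\tfrac{1}{4}\log(n)/\log(K)$. A union bound over these six events, combined with the MIE good event of probability $c_1$, yields a constant-probability scenario on which the classical simulator, conditioned on all input bits outside $\{a,b,c\}$ and on its random coins, computes $y_h,y_i,y_j$ as functions of $a,b,c$ respectively, violating the $3/4$ bound on at least one of the four settings and contradicting correctness in TV distance for large enough $n$.

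The main obstacle I expect is ensuring that each of the four circuits $C_{a,b,c}$ is itself a typical draw from the random Clifford ensemble so the simulator's correctness guarantee applies uniformly. The cleanest resolution is to absorb $P_a,P_b,P_c$ into a final layer of single-qubit random Cliffords present in many natural architectures (such as brickwork or the coarse-grained one introduced in the paper), or more generally to condition on specific last-layer gates at $h,i,j$ and verify that each of the four settings has nonvanishing probability under the ensemble restricted to the relevant qubits. A secondary technical step is to confirm the stabilizer-state classification used above, which is straightforward from the symplectic representation.
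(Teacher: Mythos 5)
Your high-level plan matches the paper's strategy closely: reduce long-range tripartite MIE for Clifford circuits to GHZ-type MIE via the stabilizer classification (the paper calls this \cref{prop:ghz}), set up a GHZ nonlocal-game obstruction (the paper's \cref{lem:limit}), and use a lightcone/probabilistic argument over random $\{h,i,j\}$ to show a shallow simulator must implement a forbidden local strategy (the paper's \cref{lem:n14}). Your worry about the four modified circuits being typical draws from the ensemble is resolved cleanly in the paper without a final random layer: replacing $V_h$ (the last gate acting on $h$) by $Q(b_h)V_h$ keeps it uniform over the Clifford group by translation invariance, so all $27$ (or your $4$) modified circuits occur with equal probability.

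There is, however, a genuine gap in your lightcone step, and it drives the rest of the argument. The GHZ stabilizers of the pre-measurement state are not supported only on $\{h,i,j\}$: they have the form $g_1 = X_h X_i X_j Z(s)$, $g_2 = Z_h Z_i Z(t)$, $g_3 = Z_i Z_j Z(u)$ with $s,t,u$ potentially supported on $\Omega(n)$ other qubits (see \cref{eq:g13}). Consequently the parity constraints the simulator must satisfy are $m_h m_i m_j m_s = \pm 1$ etc., where $m_s=\prod_v m_v^{s_v}$ aggregates the simulator's outputs over all of $[n]$. Your six ``cross-lightcone'' events only control whether $y_h,y_i,y_j$ depend on the wrong Pauli setting; they do not control whether some other output bit $m_v$ with $v$ in the support of $s$, $t$, or $u$ depends on two of the settings simultaneously, which would immediately break the local-strategy reduction. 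This is exactly why \cref{lem:limit} requires that \emph{every} output bit $m_r$ depend on at most one of $b_h,b_i,b_j$, and why \cref{lem:n14} is formulated in terms of $\mathcal{L}^{\rightarrow}(V_t)$, which includes all output bits depending on the gate $V_t$, not just $m_t$. The paper's counting (bipartite incidence graph, Bad/Good qubits, dependency graph $G'$, degree bound via Markov) is what makes this work, and it gives a success probability $1 - O(n^{-1/4})$, not the $1 - O(n^{-3/4})$ you estimated. Relatedly, even after controlling the lightcones, the parity bits $m_s,m_t,m_u$ can still depend on $(b_h,b_i,b_j)$ through locally decomposable functions $(-1)^{e(b_h)+f(b_i)+g(b_j)}$; the paper's Lemma~4 proof (in the appendix) shows that these can be absorbed and a contradiction still follows, which is a nontrivial algebraic step you would need to supply rather than invoking the bare $3/4$ bound on three output bits.
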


Since it is conjectured that shallow Clifford circuits in a 2D brickwork architecture with depth $d\geq d^{*}$ satisfy the long-range tripartite MIE property, this result provides evidence that such random shallow Clifford circuits attain a quantum advantage over classical shallow circuits.

The proof of \cref{res:quantum-advantage} extends the classical lower bound from Ref.~\cite{bravyi2018quantum} which can be viewed as exploiting the fact that tripartite MIE in 2D grid graph states is hard for a classical shallow circuit to simulate. While that proof made use of the structure of 2D grid graph states, here we show that long-range tripartite MIE is all that is needed for quantum advantage.

Note that \cref{res:quantum-advantage} applies to more general circuit architectures beyond the 2D brickwork architecture. Can we rigorously establish long-range MIE, long-range tripartite MIE,  and quantum advantage in some architecture of random 2D circuits?

To this end, we introduce a ``coarse-grained'' architecture  shown in \cref{fig:coarse-grained circuit}. The qubits are arranged on a $\sqrt{n} \times \sqrt{n}$ two-dimensional lattice and the circuit applies only two layers of gates.
Each gate acts on a square $\tau\times \tau$ subgrid of qubits. Note that here the depth of the quantum circuit is $2$ and $\tau$ is a parameter that describes the locality of gates. This parameter plays the same role as circuit depth in the 2D brickwork architecture---it measures the linear size of the lightcone of a single qubit \footnote{In particular, if $U$ is a coarse-grained circuit and $O_j$ is an operator acting on qubit $j$, then the support of $UO_jU^{\dagger}$ is contained within a square region of size $O(\tau^2)$ centered at $j$. }.

  \begin{theorem}
Random coarse-grained Clifford circuits with $\tau = O(\sqrt{\log(n)})$ and with $n$ sufficiently large, satisfy both the long-range MIE property, and the long-range tripartite MIE property.
\label{thm:mie_coarsegrained}
\end{theorem}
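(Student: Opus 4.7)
The plan is to use \cref{res:anticoncentration} to reduce long-range MIE to the subsystem anticoncentration condition $\chi\to 0$, and then derive the tripartite property via independence for well-separated random triples. Writing the target projector in the Pauli basis, $|0\rangle\langle 0|_{AB}\otimes I_C = 2^{-|AB|}\sum_{s\subseteq AB} Z^s$, gives $\chi+1 = \sum_{s,s'\subseteq AB} M(s,s')$ with $M(s,s') = \mathbb{E}_U[\langle 0^n|U^\dagger Z^s U|0^n\rangle\langle 0^n|U^\dagger Z^{s'}U|0^n\rangle]$. Each $\tau\times\tau$ block gate is a Clifford $2$-design, so averaging the layer-$2$ product $U_2=\bigotimes_j C_{B_j}$ blockwise---using $\mathbb{E}_C[C^\dagger P C]=0$ and $\mathbb{E}_C[C^\dagger P C\otimes C^\dagger Q C]\propto\delta_{PQ}$ for non-identity Paulis---forces $M(s,s')=0$ unless $s\cap B_j = s'\cap B_j$ for every layer-$2$ block $j$, i.e., unless $s=s'$.

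For the diagonal term $s=s'$, averaging also over $U_1$ using $\mathbb{E}_C[(C|0\rangle\langle 0|C^\dagger)^{\otimes 2}] = (I+S_{A_i})/(2^k(2^k+1))$ (with $k=\tau^2$ and $S$ the SWAP between two copies) turns $M(s,s)$ into a tensor-network trace that I interpret as the partition function of a ferromagnetic Ising-like model on the bipartite graph whose vertices are the layer-$1$ and layer-$2$ blocks, with edges weighted by overlaps $|A_i\cap B_j|=k/4$. Let $J(s):=\{j:s\cap B_j\neq\emptyset\}$. A direct expansion over SWAP configurations shows that the ``all-identity'' and the ``all-SWAP on $J(s)$'' configurations nearly cancel---because of the sign in the layer-$2$ operator $(2^k S_{B_j}-I)/(4^k-1)$---leaving $M(s,s)\le C^{|J(s)|}\,2^{-5k|J(s)|/2}$ for an absolute constant $C$, where the exponent $5/2$ comes from each overlap region having size $k/4$. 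Since the number of $s\subseteq AB$ with $|J(s)|=t$ is at most $\binom{m}{t}(2^k)^t$ with $m\le n/k$ the number of layer-$2$ blocks, this yields
\[
\chi\le\sum_{t\ge 1}\binom{m}{t}(C\cdot 2^{-3k/2})^t = O(m\cdot 2^{-3k/2}).
\]
Taking $\tau = c_0\sqrt{\log n}$ for a sufficiently large constant $c_0$ makes $\chi=n^{-\Omega(1)}$, and \cref{res:anticoncentration} then gives $\mathbb{E}[\mathrm{Tr}(\rho_A^2)]\le 1/2+O(n^{-\Omega(1)})$, proving the long-range MIE property.

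For long-range tripartite MIE, pick a uniformly random triple $\{h,i,j\}$ on the $\sqrt{n}\times\sqrt{n}$ grid. With probability $1-o(1)$ all three pairwise distances exceed $3\tau$, so the three entanglement zones---each a union of the four layer-$1$ blocks overlapping the qubit's layer-$2$ block, of total size $O(\tau^2)=O(\log n)$---are pairwise disjoint. In this event the depth-$2$ structure of $U$ makes $U|0^n\rangle$ a tensor product across the three zones and their complement, so the post-measurement single-qubit marginals $\rho_h,\rho_i,\rho_j$ are independent stabilizer reductions, each with purity in $\{1/2,1\}$. The single-qubit bound above gives $\Pr[\rho_h=I/2]\ge 1-O(\sqrt{\chi})$ and likewise for $i,j$, and independence across the three zones yields $\Pr[\rho_h=\rho_i=\rho_j=I/2]\ge 1-O(\sqrt{\chi})=1-o(1)$, which exceeds any prescribed constant for $n$ large. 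The main obstacle is the stat-mech bound in the second step: although the all-identity configuration is the obvious candidate for the ground state, the sign $\epsilon_j(I)=-1$ in the layer-$2$ operator forces a delicate cancellation, and one must carefully show that the surviving subleading contributions decay as $2^{-5k|J(s)|/2}$ uniformly in the geometry of $J(s)$, including when blocks in $J(s)$ are close enough to share layer-$1$ neighbours.
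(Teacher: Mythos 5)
Your route to the long-range MIE half --- bound $\chi$ directly and invoke \cref{res:anticoncentration} --- is in the right spirit, and your observation that the cross terms $M(s,s')$ vanish unless $s=s'$ by the blockwise $2$-design property is correct. But the key technical step, the claimed decay $M(s,s)\le C^{|J(s)|}2^{-5k|J(s)|/2}$ coming from an Ising-type cancellation, is not derived, and you flag this yourself. The paper avoids the stat-mech route entirely: it bounds $\mathbb{E}_D[|S|]$ rather than $\chi$, by first showing that $\Pr[D^\dagger P D \text{ is }Z\text{-type}]\le 2^{-a\tau^2-\frac{1}{4}l(\tau^2/12-1)}$ in terms of the \emph{size} $a$ and \emph{perimeter} $l$ of the cluster $\mathcal{G}(P)$ of second-layer blocks touched by $P$ (\cref{lem:coarse_grained_prob_Ztype}), and then running a first-moment union bound over clusters of each perimeter (\cref{lem:expected_size_S_fixed_l,lem:S_bound_course_grained}). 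This yields a bound that also decays exponentially in $|C|$, which turns out to matter below.

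The tripartite-MIE argument has a genuine and fatal error. You arrange the triple $\{h,i,j\}$ to be far separated so that their ``zones'' are disjoint, and then argue that the postmeasurement marginals $\rho_h,\rho_i,\rho_j$ are independent and hence all mixed with high probability. But in the tripartite setting we measure \emph{all} qubits outside $\Omega=\{h,i,j\}$, so the postmeasurement state $\rho_\Omega$ is a \emph{pure} three-qubit state. If the depth-$2$ structure really made $U|0^n\rangle$ a tensor product across the three zones and their complement, then after measurement $\rho_\Omega=|\phi_h\rangle\langle\phi_h|\otimes|\phi_i\rangle\langle\phi_i|\otimes|\phi_j\rangle\langle\phi_j|$ would be a product of pure single-qubit states, so $\mathrm{Tr}[\rho_h^2]=\mathrm{Tr}[\rho_i^2]=\mathrm{Tr}[\rho_j^2]=1$. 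In other words, well-separatedness gives \emph{no} tripartite entanglement rather than guaranteeing it --- the exact opposite of what \cref{prop:mie} requires. You are conflating the single-qubit MIE setup (measure $B$ only, with an unmeasured far-away region $C$ to be correlated with) with the tripartite setup (measure everything but the triple, so the three qubits must be entangled with \emph{each other}). The paper's actual strategy is nearly opposite to separation: pick a random \emph{five}-qubit set $V$, use the $|C|=4$ instance of \cref{lem:S_bound_course_grained} together with \cref{lem:post-measurement_criterion_finite} and a union bound to show that with constant probability every $v\in V$ is entangled with $V\setminus\{v\}$ after measuring $[n]\setminus V$; conclude the postmeasurement state on $V$ is locally Clifford equivalent to a graph state with no isolated vertex, hence with a connected component of size $3$ or $5$; and then apply \cref{lem:producing_GHZ_type_entanglement} to extract a GHZ-type triple inside $V$. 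Note that this step needs a bound on $\mathbb{E}_D[|S|]$ that is nontrivial when $|C|$ is a small constant, which the paper's exponential-in-$|C|$ bound supplies; the $O(m\,2^{-3k/2})$ bound you propose would also suffice numerically for the given $\tau$, but the argument you attach to it does not.
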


To prove \cref{thm:mie_coarsegrained} we first show that the coarse-grained architecture with the stated choice of $\tau$ satisfies a strong form of subsystem anticoncentration in which the parameter $\chi$ from \cref{eq:subsystemanti} decays exponentially with the size of $C$. This subsystem anticoncentration is established using a (first moment) probabilistic method. We then show that for Clifford circuits this strong form of subsystem anticoncentration is enough to guarantee both long-range and long-range tripartite MIE.

As a direct corollary of \cref{res:quantum-advantage,thm:mie_coarsegrained} we obtain an unconditional quantum advantage with random Clifford circuits in the coarse-grained architecture:
\begin{theorem}
Random coarse-grained Clifford circuits with $\tau = O(\sqrt{\log(n)})$ and with $n$ sufficiently large, cannot be simulated by any probabilistic classical circuit of depth $D < \frac{1}{4}\frac{\log(n)}{\log(K)}$ and fan-in~$K$.
\label{thm:advantage_coarsegrained}
\end{theorem}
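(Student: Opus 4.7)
The plan is to derive the theorem as an immediate corollary by composing \cref{thm:mie_coarsegrained} with \cref{res:quantum-advantage}. First, I would invoke \cref{thm:mie_coarsegrained} to conclude that for $\tau = O(\sqrt{\log n})$ and $n$ sufficiently large, the random coarse-grained Clifford circuit ensemble satisfies \cref{prop:mie}, i.e.\ the long-range tripartite MIE property, with some constants $c_1, c_2 \in (0,1)$. This places the ensemble squarely in the hypothesis class of \cref{res:quantum-advantage}.

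Second, I would verify that the coarse-grained architecture genuinely fits the framework of \cref{res:quantum-advantage}. The architecture has depth $d = 2$ with gates acting on $k = \tau^2 = O(\log n)$ qubits; viewed as a programmable Clifford circuit that accepts gate specifications as input bits, it remains a depth-$2$ classically-controlled architecture with gates on $O(k^2) = O(\log^2 n)$ bits. Crucially, the lower bound furnished by \cref{res:quantum-advantage}, namely $D \geq \tfrac{1}{4}\log(n)/\log(K)$, depends only on the number of output qubits $n$ and the classical fan-in $K$, not on the quantum gate arity, so it applies in this regime and yields exactly the stated bound.

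Granting that the proof of \cref{res:quantum-advantage} uses the long-range tripartite MIE property purely as a black box (and the informal statement suggests it does, extending the argument of Ref.~\cite{bravyi2018quantum}), plugging in the coarse-grained ensemble is mechanical. The substantive content is therefore absorbed entirely into the two ingredients being combined: \cref{thm:mie_coarsegrained}, whose proof requires establishing strong subsystem anticoncentration ($\chi$ decaying exponentially in $|C|$) for the coarse-grained ensemble via a first-moment probabilistic method, invoking \cref{res:anticoncentration}, and then upgrading long-range MIE to its tripartite version using Clifford-specific structure; and \cref{res:quantum-advantage}, which generalizes the classical shallow-circuit lower bound of Ref.~\cite{bravyi2018quantum} from graph-state relation problems to an arbitrary Clifford ensemble with tripartite MIE.

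The main obstacle to watch for is therefore not in this corollary itself but in ensuring compatibility of hypotheses across the two black boxes: I would check that the constants $c_1, c_2 \in (0,1)$ delivered by \cref{thm:mie_coarsegrained} meet whatever quantitative threshold is implicitly assumed inside \cref{res:quantum-advantage}, and that the ``sufficiently large $n$'' in both statements can be taken to mean the same thing. Assuming both can be arranged, the deduction is a single application of modus ponens and no further calculation is needed.
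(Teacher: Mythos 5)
Your proposal matches the paper's proof exactly: the paper states that \cref{thm:advantage_coarsegrained} is a direct corollary obtained by combining \cref{thm:mie_coarsegrained} (which supplies the long-range tripartite MIE property for the coarse-grained Clifford ensemble, with the explicit constant $c = 0.005$) with \cref{res:quantum-advantage} (whose formal version gives the depth lower bound $D \geq \tfrac{1}{4}\log(n)/\log(K)$ for any architecture satisfying \cref{prop:ghz} once $n \geq \max\{n_0, (10d)^4\}$). Your observations that the bound is independent of the quantum gate arity $k$ and that the ``sufficiently large $n$'' thresholds must be reconciled are the right sanity checks, and both are handled by the constants appearing in the formal statements.
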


The family of random coarse-grained Clifford circuits with $\tau = O(\sqrt{\log(n)})$ can be implemented (in a programmable sense) via a depth-$2$ quantum circuit with gates acting on $k=O(\log(n)^2)$ qubits. In contrast, \cref{thm:advantage_coarsegrained} states that any classical probabilistic circuit composed of gates with fan-in $K=\mathrm{poly}{\log(n)}$ that simulates the random circuit family must have depth $D>\Omega(\log(n)/\log\log(n))$.

\subsection{Discussion}
We have shown that an unconditional quantum advantage with random shallow Clifford circuits follows from the presence of long-range tripartite MIE. We also exhibited a random circuit architecture with this property.

Moreover, \cref{res:sim,res:quantum-advantage} provide evidence that 2D random shallow Clifford circuits in the standard brickwork architecture undergo a quantum advantage phase transition  that coincides with the emergence of long-range (and long-range tripartite) MIE at the critical depth $d^*=6$. This provides a scaled-down Clifford counterpart to the conjectured phase transition in the classical hardness of simulating random circuits composed of two-qubit Haar-random gates in the same architecture. 

There are several open questions related to this work. Can we prove the existence of a phase transition in MIE at a constant critical circuit depth $d^*$ for 2D brickwork random shallow circuits? Can we show that the quantum advantage with shallow circuits follows from the long-range MIE property (rather than the tripartite version)? Can we establish a quantum advantage in circuit depth for random circuits composed of Haar-random gates (instead of Clifford gates)? 

In the remaining sections of the paper we prove Theorems 1-4, respectively.

\section{Subsystem anticoncentration implies long-range MIE}
\label{sec:finite_clifford_entanglement}
Here we show that random circuits that anticoncentrate also exhibit long-range measurement-induced entanglement. In particular, we prove \cref{res:anticoncentration}.  

Suppose $U=U_{m}U_{m-1}\ldots U_1$ is an $n$-qubit quantum circuit where each gate $U_j$ is a Haar random $k$-qubit gate acting on the qubits in its support.
Here our results apply to general quantum circuits---i.e.,  we do not specify an architecture or require the circuit to act on a system of qubits in a 2D geometry. In the following we shall relate properties of $U$ to those of a random Clifford circuit $D=D_mD_{m-1}\ldots D_1$ where each $D_i$ is a random Clifford unitary that acts on the qubits in the support of $U_i$. 

For any $s\in\{0,1\}^n$, let $Z(s)=\prod_{j=1}^n Z_j^{s_j}$. Here $Z_j$ is the Pauli $Z$ operator acting on the $j$th qubit. Let $[n]=ABC$ be a tripartition of the qubits such that~$|A|=1$. For the output state of the Clifford circuit $D$, the presence or absence of post-measurement entanglement between $A$ and $C$ after measuring $B$ is determined by a set $S$ defined as follows (this set indexes ``entanglement-killing" stabilizers of the state $D|0^n\rangle$). For $P=X,Y,Z$, let
\begin{align}
S_P\equiv &\{s\in \{0,1\}^n: DZ(s)D^{\dagger}=(-1)^\alpha P_A\otimes Z(r)_B\otimes  I_C \nonumber\\ &\text{ for some } r\in \{0,1\}^{|B|}, \alpha\in \{0,1\}\}
\label{eq:defS1}
\end{align}
and let 
\begin{equation}
S=S_X\cup S_Y\cup S_Z.
\label{eq:defS2}
\end{equation} 

\begin{lemma}
\label{lem:post-measurement_criterion_finite}
Suppose $|\phi\rangle=D|0^n\rangle$ where $D$ is a Clifford unitary. Let $[n]=ABC$ be a tripartition of the qubits where $|A|=1$. There is post-measurement entanglement between $A$ and $C$ after measuring all qubits of $B$ in the state $\phi$, if and only if $S=\emptyset$. 
\end{lemma}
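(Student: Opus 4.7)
The plan is to invoke the stabilizer formalism and reduce the question of post-measurement entanglement to an explicit algebraic condition on the stabilizer group $G = \{D Z(s) D^\dagger : s \in \{0,1\}^n\}$ of $|\phi\rangle$.

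First I would recall the standard rule for the post-measurement stabilizer state. Let $H \subseteq G$ denote the subgroup of stabilizers whose restriction to $B$ is a $Z$-string, equivalently, those commuting with every $Z_j$ for $j \in B$. Each $g \in H$ has the form $g = (-1)^\alpha P_A \otimes Z(r)_B \otimes Q_C$ for some Paulis $P_A, Q_C$ (possibly identity), some $r \in \{0,1\}^{|B|}$, and some $\alpha \in \{0,1\}$. For any outcome $x \in \{0,1\}^{|B|}$ occurring with positive probability, the post-measurement state on $AC$ is a stabilizer state whose stabilizer group $G_x$ is the image of $H$ under the map $g \mapsto (-1)^{\alpha + r\cdot x} P_A \otimes Q_C$.

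Next I would prove the following structural fact: a pure stabilizer state $|\psi\rangle$ on $A \cup C$ with $|A| = 1$ is a product state across $A|C$ if and only if its stabilizer group contains an element of the form $\pm P_A \otimes I_C$ with $P_A \in \{X,Y,Z\}$. The forward direction is immediate, since a single-qubit pure state is stabilized by some $\pm P_A$. For the converse, the $+1$-eigenspace of $\pm P_A \otimes I_C$ equals $|\alpha\rangle_A \otimes \mathbb{C}^{2^{|C|}}$ for the appropriate single-qubit vector $|\alpha\rangle_A$, forcing $|\psi\rangle$ to factorize.

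Combining these two ingredients, absence of post-measurement $A$-$C$ entanglement is equivalent to the existence of an element $\pm P_A \otimes I_C \in G_x$ with $P_A \in \{X,Y,Z\}$, which by pullback through the projection $H \to G_x$ is equivalent to the existence of some $g = D Z(s) D^\dagger \in H$ of the form $(-1)^\alpha P_A \otimes Z(r)_B \otimes I_C$ --- precisely the condition $s \in S$. Notice that the criterion is independent of the outcome $x$, consistent with the statement. The main (minor) obstacle is careful bookkeeping of signs and a brief dimension count verifying that the image of $H$ under the above projection is genuinely the full stabilizer group of $|\phi_x\rangle$ on $AC$ (of size $2^{|A|+|C|}$) rather than a proper subgroup; this follows from standard identities relating the rank of the anticommutation map $G \to \mathbb{F}_2^{|B|}$ to the index of $H$ in $G$, together with the fact that purely $B$-supported elements of $G$ are quotiented out by the measurement.
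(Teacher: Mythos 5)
Your proposal is correct and follows essentially the same route as the paper's proof: both reduce the question to whether the stabilizer group of $\phi$ contains an element of the form $(-1)^\alpha P_A \otimes Z(r)_B \otimes I_C$, with the ``if'' direction following from the fact that such an element yields a single-qubit stabilizer $\pm P_A$ of the post-measurement state, and the ``only if'' direction from expressing the single-qubit stabilizer of the post-measurement state as a product of generators (equivalently, your pullback through the projection $H \to G_x$). The paper writes this out concretely as products of generators $G'' \cup \{(-1)^{x_j} Z_j\}$ rather than phrasing it as a group homomorphism, but the content is identical, and the bookkeeping facts you flag (surjectivity of $H \to G_x$, outcome-independence) are exactly the standard stabilizer-measurement facts the paper implicitly uses.
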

\begin{proof}
Let $G=\{DZ(s)D^{\dagger}: s\in \{0,1\}^n\}$ be the stabilizer group of $\phi$.

First suppose $S$ is nonempty. Then $(-1)^{\alpha} P_a\otimes Z(r)_B\otimes I_C$ is a stabilizer of $\phi$, for some $P\in \{X,Y,Z\}$, $r\in \{0,1\}^n$, $\alpha\in \{0,1\}$. After measuring all qubits in $B$ in the computational basis, the postmeasurement state is stabilized by $\pm P_A$ for some choice of sign $\pm 1$ determined by the measurement outcomes. In other words, the postmeasurement state is a pure 1-qubit stabilizer state.

Next suppose we measure all qubits in $B$ , obtain measurement outcome $x\in \{0,1\}^{|B|}$, and the postmeasurement state $\rho_A$ on qubit $A$ is pure. The stabilizer group of this state is generated by
\[
G'=\{(-1)^{x_j} Z_j: j\in B\}\cup G''
\]
where
\[
G''=\{P\in G: P \text{ is $Z$-type on $B$} \}.
\]
(We say that an $n$-qubit Pauli operator is \emph{$Z$-type} if it is a tensor product of $I$ and $Z$ Pauli operators.) Since $\rho_A$ is pure, it must have a stabilizer $\pm P_A\otimes I_{BC}$ for some $P\in \{X,Y,Z\}$. This stabilizer can be expressed as a product of elements from the set of generators above. That is, for some $T\subseteq G''$ and $R\subseteq B$ we have
\[
P_A=\prod_{Q\in T}Q \prod_{j\in R}(-1)^{x_j}Z_j.
\]
From the above we conclude that 
\[
(-1)^{\sum_{j\in R} x_j} P_A\otimes Z(R)_B\otimes I_C=\prod_{Q\in T}Q =DZ(s)D^{\dagger}
\]
for some binary string $s\in \{0,1\}^n$. Therefore $s\in S$ and $S\neq \emptyset$.

\end{proof}

\cref{lem:post-measurement_criterion_finite} shows that emptiness of $S$ determines postmeasurement entanglement for \textit{Clifford} circuits. We now show how this criterion can be extended to our random non-Clifford circuit $U$ of interest.
 
Define $\Pi(x)=|x\rangle \langle x|_B\otimes I_{AC}$ for $x\in \{0,1\}^{|B|}$. If we measure all qubits in $B$, we obtain $x$ with probability
\[
q(x)\equiv \langle 0^n|U^{\dagger}\Pi(x) U|0^n\rangle.
\]
The postmeasurement state on $A$ is then
\[
\rho(x)=\frac{1}{q(x)} \mathrm{Tr}_{BC}\left(\Pi(x) U|0^n\rangle \langle 0^n|U^{\dagger} \Pi(x)\right).
\]
\begin{theorem}[Formal version of \cref{res:anticoncentration}]
The expected purity of the postmeasurement state on $A$ is upper bounded as
\begin{equation}
\mathbb{E}_U \mathbb{E}_{x\sim q} \mathrm{Tr}(\rho(x)^2)\leq \frac{1}{2}+\frac{1}{2} \sqrt{3\mathbb{E}_D[|S|]}.
\label{eq:pur}
\end{equation}
\label{thm:anticonc}
\end{theorem}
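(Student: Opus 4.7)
The plan is to expand the single-qubit purity in the Bloch basis, to control each Pauli expectation via an $L^1$--$L^2$ bound that removes the troublesome $1/q(x)$ denominator, and finally to transfer from the Haar circuit $U$ to the Clifford circuit $D$ using the $2$-design property of the local Clifford group, where the resulting Pauli sum can be identified with $|S|$.

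First, for any single-qubit density matrix,
\[
\mathrm{Tr}(\rho(x)^2)\;=\;\tfrac12\Bigl(1+\sum_{P\in\{X,Y,Z\}}Y_P(x)^2\Bigr),\qquad Y_P(x):=\mathrm{Tr}(P_A\rho(x)),
\]
so it suffices to upper bound $\mathbb{E}_U\mathbb{E}_{x\sim q}[Y_P(x)^2]$ for each $P$. Writing $f_P(x):=\langle 0^n|U^\dagger P_A\Pi(x)U|0^n\rangle$ so that $Y_P(x)=f_P(x)/q(x)$, the key step is $|Y_P|\leq 1\Rightarrow Y_P^2\leq|Y_P|$, which gives
\[
\mathbb{E}_{x\sim q}[Y_P^2]\;\leq\;\mathbb{E}_{x\sim q}[|Y_P|]\;=\;\sum_x|f_P(x)|.
\]
This converts a rational function of $U$ (which cannot be averaged via the $2$-design property) into an $L^1$ norm of a polynomial in $U$. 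Cauchy--Schwarz and Plancherel, after expanding $\Pi(x)=2^{-|B|}\sum_s(-1)^{s\cdot x}Z(s)_B$, then yield
\[
\sum_x|f_P(x)|\;\leq\;\sqrt{2^{|B|}\textstyle\sum_x f_P(x)^2}\;=\;\sqrt{\textstyle\sum_s g_P(s)^2},
\]
where $g_P(s):=\langle 0^n|U^\dagger(P_A\otimes Z(s)_B\otimes I_C)U|0^n\rangle$ with $s\in\{0,1\}^{|B|}$.

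Concavity of $\sqrt{\cdot}$ gives $\mathbb{E}_U\mathbb{E}_x[Y_P^2]\leq\sqrt{\mathbb{E}_U\sum_s g_P(s)^2}$. Because $g_P(s)^2$ is a degree-$(2,2)$ polynomial in the entries of each independent local gate, and the multi-qubit Clifford group is a unitary $2$-design, $\mathbb{E}_U[g_P(s)^2]=\mathbb{E}_D[g_P(s)^2]$. For the Clifford circuit $D$, the operator $D^\dagger(P_A\otimes Z(s)_B\otimes I_C)D$ is a signed Pauli, and $\langle 0^n|\cdot|0^n\rangle$ is nonzero iff this Pauli is $\pm Z$-type, in which case $g_P(s)^2=1$. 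The map sending such an $s$ to the unique $t\in\{0,1\}^n$ with $DZ(t)D^\dagger=\pm P_A\otimes Z(s)_B\otimes I_C$ is a bijection onto $S_P$, so $\sum_s g_P(s)^2=|S_P|$ identically. The three sets $S_X,S_Y,S_Z$ are disjoint (they differ in the $A$-factor of $DZ(t)D^\dagger$), so $|S|=|S_X|+|S_Y|+|S_Z|$; a final Cauchy--Schwarz $\sum_P\sqrt{\mathbb{E}_D|S_P|}\leq\sqrt{3\,\mathbb{E}_D|S|}$, combined with the Bloch expansion, yields the claimed inequality.

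The main obstacle is precisely the rational factor $1/q(x)^2$ in the purity, which prevents any direct application of the $2$-design property of the Clifford group. The cheap bound $Y_P^2\leq|Y_P|$ bypasses this obstruction at the cost of a square root, which is exactly why the right-hand side of the theorem scales as $\sqrt{\mathbb{E}_D|S|}$ rather than linearly in $\mathbb{E}_D|S|$; the Cauchy--Schwarz over the three Paulis $X,Y,Z$ accounts for the factor of $3$ inside the square root. The remaining ingredients---Fourier expansion on $\{0,1\}^{|B|}$, Jensen's inequality, and the Pauli-counting identification already prepared by \cref{lem:post-measurement_criterion_finite}---are routine once the $L^1$ reduction is in place.
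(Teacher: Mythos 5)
Your proof is correct and follows essentially the same route as the paper: expand the purity in the Pauli (Bloch) basis, use $Y_P^2\leq|Y_P|$ to cancel the $q(x)$ denominator, apply Cauchy--Schwarz/Plancherel to pass to the Pauli expectations $g_P(s)$, invoke the $2$-design property of the local Clifford group to replace the Haar average by the Clifford average, and identify $\sum_s g_P(s)^2$ with $|S_P|$. The only small variation is in the final bookkeeping: you obtain the factor of $3$ via Cauchy--Schwarz $\sum_{P\in\{X,Y,Z\}}\sqrt{\mathbb{E}_D|S_P|}\leq\sqrt{3\,\mathbb{E}_D|S|}$ together with disjointness of the $S_P$, whereas the paper instead invokes the symmetry $\mathbb{E}_D|S_X|=\mathbb{E}_D|S_Y|=\mathbb{E}_D|S_Z|=\tfrac13\mathbb{E}_D|S|$; these give identical bounds, and your version is marginally more self-contained since it avoids the (easy but unstated) local-Clifford-rotation symmetry argument.
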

Note that the LHS describes the purity of the postmeasurement state in the universal circuit $U$, while the RHS only involves an average over the Clifford circuit $D$. Moreover, this average is upper bounded as
\begin{align}
&\frac{1}{3}\mathbb{E}_D[|S|]=\mathbb{E}_D[|S_Z|]\nonumber\\
&\leq \mathbb{E}_D\left[\sum_{s\in \{0,1\}^{AB}, s\neq 0^{|AB|}}\langle 0^n|D^{\dagger} Z(s)_{AB}\otimes I_C D|0^n\rangle^2\right]\nonumber\\
&=4^{|AB|} \mathbb{E}_D\left[|\langle 0^n|D^{\dagger}(|0\rangle\langle 0|_{AB}\otimes I_C) D |0^n\rangle|^2\right]-1\nonumber\\
&=4^{|AB|} \mathbb{E}_U\left[|\langle 0^n|U^{\dagger}(|0\rangle\langle 0|_{AB}\otimes I_C) U |0^n\rangle|^2\right]-1
\label{eq:anti}
\end{align} 
where in the last line we used the two-design property of the Clifford group. Here \cref{eq:anti} is the quantity denoted $\chi$ in the Introduction, which measures anticoncentration of subsystem $AB$.

\begin{proof}[Proof of \cref{thm:anticonc}]
Write
\[
\alpha_P(x)\equiv \frac{1}{2}\langle 0^n|U^{\dagger} \Pi(x)\cdot\left(P_A\otimes I_{BC}\right) U|0^n\rangle,
\]
so that
\[
\rho(x)=\frac{1}{2\alpha_I(x)}\sum_{P\in \{I,X,Y,Z\}} \alpha_P(x) P.
\]
Using the fact that $\sum_{x}\Pi(x)\otimes \Pi(x)=2^{-|B|} \sum_{s\in \{0,1\}^{|B|}} Z(s)\otimes Z(s)$ we get
\begin{align*}
&\sum_{x\in \{0,1\}^{|B|}} |\alpha_P(x)|^2\\
&=\frac{1}{4\cdot 2^{|B|}}\sum_{s\in \{0,1\}^{|B|}} |\langle 0^n|U^{\dagger} (P_A\otimes Z(s)_B) U|0^n\rangle|^2.
\end{align*}
Then
\begin{align*}
&\mathbb{E}_U \mathbb{E}_{x\sim q}\left(\frac{|\alpha_P(x)|}{2\alpha_I(x)}\right)
=2^{|B|}\mathbb{E}_U \mathbb{E}_{x\sim \mathrm{unif}} |\alpha_P(x)|\\
&\leq 2^{|B|}\left(\mathbb{E}_U \mathbb{E}_{x\sim \mathrm{unif}} |\alpha_P(x)|^2\right)^{1/2}\\
&=\left(\frac{1}{4}\sum_{s\in \{0,1\}^{|B|}}\mathbb{E}_U  |\langle 0^n|U^{\dagger} (P_A\otimes Z(s)_B) U|0^n\rangle|^2\right)^{1/2}.
\end{align*}
Now we use the fact that every gate in $U$ is drawn from the Haar measure, and the fact that the Clifford group is a two-design. This allows us to replace the average over $U$ with the average over the random Clifford circuit $D$ in the above. Then for $P\in \{X,Y,Z\}$ we have
\begin{align}
&\mathbb{E}_U \mathbb{E}_{x\sim q}\left(\frac{|\alpha_P(x)|}{2\alpha_I(x)}\right)\nonumber
\\
&\leq\left(\frac{1}{4}\sum_{s\in \{0,1\}^{|B|}}\mathbb{E}_D  |\langle 0^n|D^{\dagger} (P_A\otimes Z(s)_B) D|0^n\rangle|^2\right)^{1/2}\nonumber\\
&=\frac{1}{2}\sqrt{\mathbb{E}_D |S_P|}\nonumber\\
&=\frac{1}{2\sqrt{3}}\sqrt{\mathbb{E}_D |S|}.
\label{eq:sd}
\end{align}
Now the expected purity of the postmeasurement state is upper bounded as
\begin{align*}
\mathbb{E}_U \mathbb{E}_{x\sim q} \mathrm{Tr}(\rho(x)^2)&=\mathbb{E}_U \mathbb{E}_{x\sim q} \sum_{P\in \{I,X,Y,Z\}}\frac{2|\alpha_P(x)|^2}{4|\alpha_I(x)|^2}\\
&\leq \mathbb{E}_U \mathbb{E}_{x\sim q} \sum_{P\in \{I,X,Y,Z\}}\frac{|\alpha_P(x)|}{2|\alpha_I(x)|}\\
&=\frac{1}{2}+3\mathbb{E}_U \mathbb{E}_{x\sim q} \frac{|\alpha_Z(x)|}{2|\alpha_I(x)|}
\end{align*}
where we used the fact that $|\alpha_P(x)|\leq \alpha_I(x)$ for all $x\in \{0,1\}^{|B|}$. Plugging \cref{eq:sd} into the above completes the proof.

\end{proof}

\section{Efficient and shallow classical simulation of 2D circuits with short-range MIE}

In this section we prove \cref{res:sim}. In particular, we consider the output state $|\psi\rangle=U|0^n\rangle$ of a 2D, depth $d$, quantum circuit and describe an efficient classical simulation method that approximately samples from its output distribution, assuming that the state after each gate in the circuit has short-range MIE. We will also show that the algorithm can be parallelized to depth $O(d)$. The simulation algorithm is a variant of the gate-by-gate method proposed in Ref.~\cite{bravyi2022simulate}.

Here we assume the circuit is expressed as $U=U_mU_{m-1}\ldots U_{1}|0^n\rangle$ where each $U_t$ is either a $\mathrm{CNOT}$ gate or a single-qubit gate. 

In addition to the output state $\psi$ and its output distribution $P(z)=|\langle z|\psi\rangle|^2$, it will also be convenient to consider states and output distributions associated with subcircuits. That is, define
\[
|\psi_t\rangle=U_tU_{t-1}\ldots U_1|0^n\rangle\quad\text{and}\quad P_t(z)=|\langle z|\psi_t\rangle|^2,
\]
for $0\leq t\leq m$ and $z\in \{0,1\}^n$. 

Now imagine that we measure all qubits in $BC$ of the state $\psi_t$ and obtain outcomes $\omega\in \{0,1\}^{|BC|}$. The probability of this measurement outcome is $P_{t,BC}(\omega)\equiv \sum_{y_{BC}=\omega_{BC}} P_t(y)$. The postmeasurement state on $A$ is
\begin{equation}
\rho_t(\omega)=\frac{1}{P_{t,BC}(\omega)}\langle  \omega|_{BC}|\psi_t\rangle\langle \psi _t|\omega\rangle_{BC}.
\label{eq:rhoz}
\end{equation}

If instead we only measure the qubits in $B$ of the state $\psi_t$, obtaining outcome $\omega_B$, then the postmeasurement state of $A$ is

\begin{equation}
\tilde{\rho}_t(\omega_B)\equiv \frac{1}{P_{t,B}(\omega_B)}\langle \omega_B |\mathrm{Tr}_C\left(|\psi_t\rangle\langle \psi_t|\right)| \omega_B\rangle
\label{eq:tilderho}
\end{equation}
where $P_{t,B}(\omega_B)=\sum_{y: y_B=\omega_B} P_t(y)$. The following claim shows that the expected fidelity between $\rho_t(\omega)$ and $\tilde{\rho}_t(\omega_B)$ is equal to the average purity of the latter state.
\begin{lemma}
\[
\mathbb{E}_{z\sim P_{t,BC}} \left[ F(\rho_t(z), \tilde{\rho}_t(z_B))\right]=\mathbb{E}_{z_B\sim P_{t,B}} \left[\mathrm{Tr}( \tilde{\rho}_t(z_B)^2)\right]
\]
\label{lem:purity}
\end{lemma}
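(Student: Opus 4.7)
The plan is to unpack the two definitions carefully and use the fact that the fully post-measured state $\rho_t(z)$ is a pure state on $A$, which causes the fidelity to collapse to an inner-product/trace expression. After that, the identity becomes a one-line bookkeeping calculation based on summing out the register $C$.

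First, I would note that $|\psi_t\rangle$ is pure on $ABC$, so projecting onto a computational basis state $|z\rangle_{BC}$ leaves an unnormalized pure state on $A$. Hence $\rho_t(z)$ as defined in \cref{eq:rhoz} is a pure state of qubit $A$, and for any density matrix $\sigma$ on $A$ the fidelity satisfies $F(\rho_t(z),\sigma)=\mathrm{Tr}(\rho_t(z)\sigma)$. Applying this with $\sigma=\tilde\rho_t(z_B)$ turns the left-hand side into
\begin{equation*}
\sum_{z\in\{0,1\}^{|BC|}} P_{t,BC}(z)\,\mathrm{Tr}\bigl(\rho_t(z)\,\tilde\rho_t(z_B)\bigr).
\end{equation*}

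Next, I would absorb the probability into the unnormalized projector, using $P_{t,BC}(z)\,\rho_t(z)=\langle z|_{BC}|\psi_t\rangle\langle\psi_t|z\rangle_{BC}$, and then write $z=(z_B,z_C)$ and perform the sum over $z_C$ first. The key identity is
\begin{equation*}
\sum_{z_C} \langle z_B z_C|_{BC}|\psi_t\rangle\langle\psi_t|z_B z_C\rangle_{BC} = \langle z_B|_B\,\mathrm{Tr}_C(|\psi_t\rangle\langle\psi_t|)\,|z_B\rangle_B,
\end{equation*}
which by \cref{eq:tilderho} equals $P_{t,B}(z_B)\,\tilde\rho_t(z_B)$. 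Substituting this back and taking the trace over $A$ yields
\begin{equation*}
\sum_{z_B} P_{t,B}(z_B)\,\mathrm{Tr}\bigl(\tilde\rho_t(z_B)^2\bigr),
\end{equation*}
which is exactly the right-hand side.

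Since each step is an elementary manipulation, there is no real obstacle; the only thing to be careful about is the reduction $F(\rho,\sigma)=\mathrm{Tr}(\rho\sigma)$ when $\rho$ is pure, and the correct identification of the unnormalized partial traces. The whole argument takes just a few lines and uses no hypotheses about the circuit architecture or depth.
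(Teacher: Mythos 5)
Your proof is correct and follows essentially the same route as the paper: exploit the purity of $\rho_t(z)$ to reduce the fidelity to $\mathrm{Tr}(\rho_t(z)\tilde\rho_t(z_B))$, absorb the probability $P_{t,BC}(z)$ into the unnormalized post-measurement state, sum out $z_C$ to produce the partial trace over $C$, and identify the result with $P_{t,B}(z_B)\tilde\rho_t(z_B)$. The only cosmetic difference is that the paper keeps the intermediate expression in bra-ket form as $\langle\psi_t|(|z_B\rangle\langle z_B|\otimes I_C)\tilde\rho_t(z_B)|\psi_t\rangle$ rather than writing out the unnormalized operator on $A$, but the underlying manipulations are identical.
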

\begin{proof}
Using the definitions (\cref{eq:rhoz,eq:tilderho}) we have
\begin{align*}
&\mathbb{E}_{z\sim P_{t,BC}} \left[ F(\rho_t(z), \tilde{\rho}_t(z_B))\right]\\
&=\sum_{z\in \{0,1\}^{|BC|}}\langle \psi_t| (| z_B\rangle\langle z_B|\otimes |z_C\rangle\langle z_C| )\tilde{\rho}_t(z_B)|\psi_t\rangle\\
&=\sum_{z_B\in \{0,1\}^{|B|}}\langle \psi_t | \left(|z_B\rangle\langle z_B|\otimes I_C\right) \tilde{\rho}_t(z_B)|\psi_t\rangle\\
&=\sum_{z_B\in \{0,1\}^{|B|}}P_{t,B}(z_B)\mathrm{Tr}( \tilde{\rho}_t(z_B)^2).
\end{align*}
\end{proof}
Define conditional probability distributions
\begin{equation}
q_t(x_A|z_Bz_C)\equiv \langle x_A|\rho_t(z_Bz_C)| x_A\rangle
\label{eq:q}
\end{equation}
and
\begin{equation}
 \tilde{q}_t(x_A|z_B)\equiv \langle x_A|\tilde{\rho}_t(z_B)| x_A\rangle.
\label{eq:qtil}
\end{equation}

In the above discussion we fixed a partition $[n]=ABC$ but in our classical simulation algorithm---\cref{alg:sim}---we allow this partition to depend on $t$ in a simple way. In particular, let $\Gamma \subseteq [m]$ be such that $U_t$ is a single-qubit gate if and only if $t\in \Gamma$ (it is a $\mathrm{CNOT}$ otherwise).

 Then for $t\in \Gamma$ we let $A(t)$ be the qubit on which $U_t$ acts, $B(t)$ be all qubits within a square subgrid centred at $A(t)$ with side length $L$, and $C(t)=[n]\setminus(A(t)\cup B(t))$. We write $\mathrm{Square}_L(A)$ for the set of all qubits that are a horizontal or vertical distance at most $L/2$ from qubit $A$, so that $B(t)=\mathrm{Square}_L(A(t))$.
\begin{theorem}[Formal version of \cref{res:sim}]
Let $\tilde{P}$ be the output distribution of \cref{alg:sim}. Recall that $\Gamma\subseteq [m]$ are the indices of one qubit gates in the circuit, and $P_t(x)=|\langle x|U_tU_{t-1}\ldots U_1|\psi\rangle|^2$. Then
\begin{equation}
\|\tilde{P}-P_m\|_1\leq 2\sum_{t\in \Gamma} \left(1-\Exp_{z\sim P_{t,B(t)}} \left[\mathrm{Tr}(\tilde{\rho}_t(z)^2)\right]\right)^{1/2}.
\label{eq:tvd}
\end{equation}
\label{thm:simalg}
\end{theorem}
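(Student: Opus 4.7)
The plan is to bound the L1 error via a hybrid argument that isolates the approximation made at each one-qubit gate, and then to convert the per-gate error into the stated purity expression using Fuchs--van de Graaf together with Jensen's inequality and \cref{lem:purity}.

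First, I would introduce a sequence of hybrid distributions $Q^{(0)}, Q^{(1)}, \ldots, Q^{(|\Gamma|)}$ that interpolate between $P_m$ and $\tilde P$. For the $k$-th hybrid, imagine running the gate-by-gate scheme with the following modification: for the first $k$ one-qubit gates (in circuit order) use the algorithm's local conditional $\tilde q_t(x_A\mid z_B)$ from \cref{eq:qtil} to resample qubit $A(t)$, while for the remaining one-qubit gates use the \emph{exact} conditional $q_t(x_A\mid z_Bz_C)$ from \cref{eq:q}. By construction $Q^{(|\Gamma|)} = \tilde P$, and by induction $Q^{(0)} = P_m$: CNOT gates propagate samples exactly, and a resampling of $A(t)$ from the exact conditional preserves the property that the partial sample is drawn from $P_t$. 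The triangle inequality then gives
\[
\|\tilde P - P_m\|_1 \le \sum_{t\in\Gamma}\|Q^{(k_t)} - Q^{(k_t-1)}\|_1,
\]
where $k_t$ is the position of $t$ among the indices of $\Gamma$.

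Second, I would bound each individual hybrid gap. Since $Q^{(k_t)}$ and $Q^{(k_t-1)}$ arise from identical procedures except for the resampling step at gate $t$, a standard coupling argument yields
\[
\|Q^{(k_t)}-Q^{(k_t-1)}\|_1 \le \Exp_{z\sim P_{t,BC}}\bigl\|q_t(\cdot\mid z_Bz_C) - \tilde q_t(\cdot\mid z_B)\bigr\|_1.
\]
Computational-basis measurement of qubit $A(t)$ is a quantum channel, hence can only shrink L1 distance, so the right-hand side is at most $\Exp_z\|\rho_t(z) - \tilde\rho_t(z_B)\|_1$. The Fuchs--van de Graaf inequality then gives $\|\rho_t(z) - \tilde\rho_t(z_B)\|_1 \le 2\sqrt{1 - F(\rho_t(z), \tilde\rho_t(z_B))}$.

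Third, I would pull the expectation inside the square root via Jensen (concavity) and invoke \cref{lem:purity} to identify the expected fidelity with the expected purity:
\[
\Exp_z\sqrt{1 - F(\rho_t(z), \tilde\rho_t(z_B))} \le \sqrt{1 - \Exp_{z_B\sim P_{t,B}}\mathrm{Tr}(\tilde\rho_t(z_B)^2)}.
\]
Summing over $t\in\Gamma$ with the factor of $2$ inherited from Fuchs--van de Graaf yields the claimed bound. The main obstacle, in my view, is the hybrid bookkeeping in step one: one needs to verify carefully that the gate-by-gate scheme with \emph{exact} conditional resamples does reproduce $P_m$, and that substituting a single exact conditional by the approximate local one $\tilde q_t$ couples cleanly so that only the single step at $t$ contributes to $\|Q^{(k_t)}-Q^{(k_t-1)}\|_1$. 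Once this is in place, the conversion from L1 error to purity via Fuchs--van de Graaf and Jensen is routine, and the conceptual content of the theorem is that per-gate errors compose additively and are governed by the local post-measurement purity on each shielding square $B(t)$.
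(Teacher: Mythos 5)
Your overall strategy is sound, but the \emph{ordering} of your hybrid is backwards, and this invalidates the key step where you claim the per-gate error is an expectation over $P_{t,BC}$. Specifically, in your $Q^{(k)}$ the first $k$ one-qubit gates use the approximate conditional $\tilde q$, so when you compare $Q^{(k_t)}$ with $Q^{(k_t-1)}$ the \emph{common prefix} (steps $1,\ldots,t-1$) already uses $\tilde q$ at the first $k_t-1$ one-qubit gates. Hence the distribution of the partial sample $z$ before step $t$ is \emph{not} $P_{t-1}$ (and its $BC$-marginal is not $P_{t,BC}$); it is the marginal of a hybrid distribution that has accumulated approximation error. The coupling/data-processing argument is correct, but it gives
\[
\|Q^{(k_t)}-Q^{(k_t-1)}\|_1 \le \Exp_{z\sim D^{(k_t-1)}_{t-1,BC}}\bigl\|q_t(\cdot\mid z_Bz_C) - \tilde q_t(\cdot\mid z_B)\bigr\|_1,
\]
where $D^{(k_t-1)}_{t-1}$ is the (approximate) hybrid distribution after step $t-1$. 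You then cannot invoke \cref{lem:purity}, which relates expected fidelity under $P_{t,BC}$ to expected purity under $P_{t,B}$; you would need a version of that lemma for the hybrid marginal, and the resulting bound would not match the theorem statement.

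The fix is to reverse the interpolation: let $Q^{(k)}$ use the \emph{exact} conditional $q$ for the first $|\Gamma|-k$ one-qubit gates and $\tilde q$ for the last $k$. Then $Q^{(0)}=P_m$ and $Q^{(|\Gamma|)}=\tilde P$ still hold, and now the common prefix of $Q^{(k)}$ and $Q^{(k-1)}$ up to the divergence step $t$ is exactly the gate-by-gate algorithm with exact conditionals, whose distribution at time $t-1$ \emph{is} $P_{t-1}$. Since $U_t$ acts only on $A(t)$, $P_{t-1,BC}=P_{t,BC}$ and $P_{t-1,B}=P_{t,B}$, so Fuchs--van de Graaf, Jensen, and \cref{lem:purity} apply as you describe and yield the stated bound. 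With this correction your argument becomes essentially equivalent to the paper's proof, which avoids the hybrid bookkeeping altogether by directly writing the algebraic identity $W_t(x) = \sum_y (W_{t-1}(y)-P_{t-1}(y))\tilde q_t + \sum_y P_{t-1}(y)(\tilde q_t - q_t) + P_t(x)$ and taking norms; the second term there plays exactly the role of your per-gate error against the exact marginal $P_{t-1}$.
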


The proof of \cref{thm:simalg} is given in the Appendix. It is obtained by combining \cref{lem:purity} with a robustness property of the gate-by-gate algorithm established in Ref.\cite{bravyi2022simulate}, which we adapt to our setting.

The left-hand side of \cref{eq:tvd} is the approximation error of our algorithm: the total variation distance between the distribution sampled by our classical simulator, and the true output distribution of the quantum circuit.
We are successful in classically simulating the quantum circuit if we can make the right-hand-side of \cref{eq:tvd} small--- at most $\epsilon=O(n^{-c})$ for a given constant $c\geq 1$. Let us assume that MIE in the circuit is short-ranged in the sense that, at each time $1\leq t\leq m$ in the circuit, the output state $|\psi_t\rangle$ has the short-range MIE property (\cref{prop:shortrangemie}). In particular, for each $t\in \Gamma$ this gives
\begin{equation}
\Exp_{z\sim P_{t,B(t)}} \left[\mathrm{Tr}(\tilde{\rho}_t(z)^2)\right]=1-e^{-\Omega(L)}.
\end{equation}
Plugging this into \cref{eq:tvd} we see that the RHS is $O(me^{-L})$. For polynomial-sized circuits (i.e., $m=O(\mathrm{poly}(n))$) we can make this approximation error at most $\epsilon=O(n^{-c})$ by choosing $L=O(\log(n))$. 

Let us now show that \cref{alg:sim} has $O(\mathrm{poly}(n))$ runtime with this choice of $L$.  The runtime of \cref{alg:sim} is $O(mT(L))$ where $T(L)$ is the cost of classically computing the conditional probability $\tilde{q}_t(y|x_{B(t)})$ in line 9 of the algorithm. Using (\cref{eq:tilderho,eq:qtil}) we see that this conditional probability is a ratio of the two marginal probabilities 
\[
\langle y_A x_B|\mathrm{Tr}_C(|\psi_t\rangle\langle \psi|_t)| y_A x_B\rangle
\]
and $P_{t,B}(x_B)$. Since we are tracing out subsystem $C$, each of these marginals can be computed by removing all gates of $U_tU_{t-1}\ldots U_1$ that lie outside of the so-called lightcone of $AB$. This gives a depth $d$ circuit $V$ acting on qubits of a subgrid with side length $L+2d$. Each of these marginal probabilities can be computed using a classical tensor network algorithm from Ref.~\cite{Aaronson2017supremacy} (see Thm 4.3 of that work), which uses runtime $T(L)=2^{O(d(L+d))}$. Thus, for constant-depth $d=O(1)$ circuits with $m=\mathrm{poly}(n)$ gates and with the choice $L=O(\log(n))$, the runtime of \cref{alg:sim} is $\mathrm{poly}(n)$.

Finally, we observe that \cref{alg:sim}, when applied to simulate a quantum circuit of depth $d$,  can be parallelized to depth $O(d)$ if we allow classical gates that act on $K=O(L^2)$ bits. To see this, let us describe how to parallelize all updates to the bit string $x$ in step 2. of the algorithm for a single layer (depth-$1$ circuit) of quantum gates. To this end note that all updates of $x$ corresponding to gates whose support is within a square region of side length $L$ only depend on bits of $x$ in a larger square region of side length $\sim 3L$. These updates can therefore be performed by a single classical ``gate" that acts on bits in the larger square region of side length $3L$. By partitioning the grid into square regions with side length $3L$ we can update $1/9$ of the
bits of $x$ using a single layer of such gates in parallel. We can then simulate the entire depth-$1$ circuit using a classical circuit of depth $9$.  

\begin{figure}
\begin{minipage}{1\linewidth}
\begin{algorithm}[H]
	\caption{Simulation of circuits with short-range MIE (via the gate-by-gate method~\cite{bravyi2022simulate}) \label{alg:sim}}
	\hspace*{\algorithmicindent} \hspace{-26pt} \textbf{Input:}  An $n$-qubit 2D quantum circuit $U=U_m\cdots U_2 U_1$ over the gate set $\mathrm{CNOT}+\mathrm{SU}(2)$, and an integer $L$.\\
 \hspace*{\algorithmicindent} \hspace{-28pt} \textbf{Output:} $x\in \{0,1\}^n$ with prob. $\tilde{P}(x)$ satisfying \cref{thm:simalg}.
	\begin{algorithmic}[1]
	\State{$x\gets 0^n$}
			\For{$t=1$ to $m$}

			\If{$U_t=\mathrm{CNOT}_{ij}$}
				\State{$x_j\gets x_j\oplus x_i$}
			\ElsIf{$U_t\in SU(2)$ acts on qubit $k\in [n]$}
			  \State{$A(t)\gets k$}
				\State{$B(t)\gets \mathrm{Square}_L(A(t))$}
				\State{$C(t)\gets [n]\setminus A(t)\cup B(t)$}

			\State{Sample $y\in \{0,1\}$ from the conditional probability distribution $\tilde{q}_t(y|x_{B(t)})$}
					\State{$x\gets y_{A(t)}x_{B(t)}x_{C(t)}$}

			\EndIf
			      \EndFor
		\State{\textbf{return} $x$}
		\end{algorithmic}
\end{algorithm}
\end{minipage}
\end{figure}

\section{Long-range tripartite MIE implies unconditional quantum advantage in any architecture}

In this section we prove \cref{res:quantum-advantage}. We begin by describing what we mean by a circuit architecture and the associated family of random Clifford circuits. Then we consider the long-range tripartite MIE property (\cref{prop:mie}) and specialize it to Clifford circuits, using the fact that a three qubit stabilizer state that is entangled with respect to any bipartition of the qubits (Cf. \cref{eq:1qentang}), is locally Clifford equivalent to the GHZ state. Finally, we show random Clifford circuits which satisfy the long-range tripartite MIE property cannot be simulated by shallow probabilistic classical circuits. 

Our proof strategy for lower bounding classical circuits which is based on identifying GHZ-type measurement-induced entanglement is similar to the proof strategy from Ref.~\cite{bravyi2018quantum}. The difference is that here we aim to make this strategy work with Clifford circuits composed of uniformly random gates, and we find that long-range tripartite MIE is \textit{all that we require} for quantum advantage.

\subsection{Random Clifford circuits}
\begin{definition}[$n$-qubit circuit template]
An $n$-qubit circuit template $\mathcal{T}$ is a tuple of subsets $(m_1,m_2,\ldots, m_R)$ such that $m_i\subseteq [n]$ for $1\leq i\leq R$. A quantum circuit in this template is a product $U_RU_{R-1}\ldots U_2U_1$ of gates such that $U_i$ acts nontrivially only on qubits in $m_i$. The template is said to be $k$-local if $|m_i|=k$ for all $1\leq i\leq R$, and it is said to have depth $d$ if any quantum circuit in this template can be implemented with circuit depth $d$.
\end{definition}
\begin{definition}[Circuit architecture]
A $k$-local, depth-$d$ circuit architecture $\mathcal{A}=\{\mathcal{T}_{n}\}_{n\in \Gamma}$ is a sequence of $k$-local, depth-$d$ circuit templates with an increasing number of qubits $\Gamma=\{n_1,n_2,n_3\ldots\}$.
\end{definition}

Any template $\mathcal{T}=(m_1,m_2,\ldots, m_R)$ defines a random Clifford circuit, obtained by choosing the gates $U_1,U_2,\ldots, U_R$ to be uniformly random Clifford gates acting on their support. Similarly, an architecture $\mathcal{A}$ is associated with a famiy of random Clifford circuits.

We will be particularly interested in random Clifford circuits which produce long-range tripartite entanglement in the sense of \cref{prop:mie}. We can specialize this property to Clifford circuits as follows.

We say that a three-qubit stabilizer state $|\Phi\rangle$ is GHZ-type entangled iff it is locally Clifford equivalent to the GHZ state. That is, there exist single-qubit Clifford unitaries $C_1,C_2,C_3$ such that 
\[
C_1\otimes C_2\otimes C_3\ket{\Phi}= \frac{1}{\sqrt{2}}\left(|000\rangle+|111\rangle\right).
\]
It is a well-known fact that any three qubit stabilizer state which is not a product state with respect to any bipartition of the qubits, is GHZ-type entangled. 

If $\phi$ is an $n$-qubit stabilizer state, we say that three qubits $\{h,i,j\}\subseteq [n]$ exhibit GHZ-type MIE iff they share GHZ-type entanglement after measuring all qubits in $[n]\setminus\{h,i,j\}$.
\begin{property}[Long-range tripartite MIE, Clifford version]
The architecture $\mathcal{A}=\{\mathcal{T}_n\}_{n\in \Gamma}$ is said to satisfy the \textit{long-range tripartite MIE property} if there is an absolute constant $c>0$ such that the following holds. Suppose $n\in \Gamma$ and consider the circuit template $\mathcal{T}_n\in \mathcal{A}$. Suppose $U$ is a random Clifford circuit with template $\mathcal{T}_n$, i.e. it is composed of gates $U_1,U_2,\ldots, U_R$ chosen uniformly at random from the set of $k$-qubit Clifford gates. Let $|\psi\rangle=U|0^n\rangle$.  Choose a uniformly random triple of qubits $\{h,i,j\}\subseteq [n]$. Then, with probability at least $c>0$ (over the random choice of $U$ and the triple $h,i,j$), the qubits $h,i,j$ exhibit GHZ-type MIE in the state $\psi$.
\label{prop:ghz}
\end{property}

It is not hard to see that \cref{prop:ghz} is equivalent to \cref{prop:mie} for the random Clifford circuit family defined by an architecture $\mathcal{A}$. The condition \cref{eq:1qentang} in the definition of long-range tripartite MIE property implies that the three-qubit state $\rho_{\Omega}$ is not a product state with respect to any bipartition of its qubits. Since in the case at hand it is also a stabilizer state, it is GHZ-type entangled. 

\subsection{Requirements for classical simulation of GHZ-type MIE}

Let us begin by considering what kinds of classical circuits are capable of reproducing the measurement statistics of quantum states with GHZ-type MIE. 

Suppose $U$ is an $n$-qubit Clifford circuit and $|\psi\rangle=U|0^n\rangle$.  Recall that we say that three qubits $\{h,i,j\}\subseteq [n]$ exhibit GHZ-type measurement-induced entanglement iff the state $\rho_{h,i,j}$ after measuring all other qubits in the computational basis is locally Clifford equivalent to the GHZ state. We can rephrase this definition in terms of the stabilizer group of $\psi$, as follows: three qubits $\{h,i,j\}$ of $\psi$   exhibit GHZ-type measurement-induced entanglement iff,  for some tensor product $C_h\otimes C_i\otimes C_j$ of single-qubit Clifford unitaries on $h,i,j$, the stabilizer group of $C_h\otimes C_i\otimes C_j|\psi\rangle$ contains elements
\begin{align}
g_1&=X_hX_iX_jZ(s)\nonumber \\
g_2&=Z_hZ_i I_jZ(t) \nonumber\\
g_3&=I_h Z_iZ_jZ(u),
\label{eq:g13}
\end{align}
for some binary strings $s,t,u\in \{0,1\}^{n}$ such that $s_r=t_r=u_r=0$ for $r\in \{h,i,j\}$.

In the following Lemma we consider a classical function $m:\{0,1,2\}^{3}\rightarrow \{0,1\}^{n}$ which takes input $b\in \{0,1,2\}^3$ describing measurement bases $X,Y,Z$ for qubits $h,i,j$ respectively, and produces output $m(b)\in \{0,1\}^n$. We are interested in whether this function can simulate measurement of $\psi$ in the given bases $b$ on qubits $h,i,j$ and the computational basis on all other qubits. Let $Q(0),Q(1),Q(2)$ denote the single-qubit Clifford unitaries that change basis from the $Z$ basis to the $X,Y,Z$ bases respectively (i.e., $Q(0)=(X+Z)/\sqrt{2}, Q(1)=(Y+Z)/\sqrt{2}, Q(2)=I$). Let $W(b)=Q(b_h)\otimes Q(b_i) \otimes Q(b_j)$ be the tensor product of single-qubit unitaries that change basis to the one described by $b$. A necessary condition for such a simulation of $\psi$ is that
\begin{equation}
\langle m(b)|W(b)|\psi\rangle \neq 0, \quad b\in \{0,1,2\}^{3}.
\label{eq:possibilistic}
\end{equation}
We show a condition under which \cref{eq:possibilistic} is not possible.
\begin{lemma}
Suppose $h,i,j$ exhibit GHZ-type measurement-induced entanglement in the state $|\psi\rangle=U|0^n\rangle$. Let $m:\{0,1,2\}^{3}\rightarrow \{0,1\}^{n}$ and write $m=m(b)$ with $b=b_hb_ib_j\in \{0,1,2\}^3$. Suppose that each output bit $m_r$ depends on at most one of the input variables $b_h,b_i,b_j$, and that 
\begin{itemize}
\item $m_h$ is independent of $b_i, b_j$
\item $m_i$ is independent of $b_h, b_j$
\item $m_j$ is independent of $b_h, b_i$.
\end{itemize}
Then 
\begin{equation}
\mathrm{Pr}_{b\sim \mathrm{unif}(\{0,1,2\}^3)}\left[ \langle m(b)|W(b)|\psi\rangle= 0\right]\geq 1/27.
\label{eq:fails}
\end{equation}
\label{lem:limit}
\end{lemma}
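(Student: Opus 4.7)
The plan is to exhibit ten specific settings $b \in \{0,1,2\}^3$ whose parity constraints on $m(b)$ are jointly unsatisfiable; since the settings in \cref{eq:fails} are drawn uniformly from $27$ choices, this immediately forces $\Pr \ge 1/27$. First I would absorb the local Cliffords $C_h \otimes C_i \otimes C_j$ of \cref{eq:g13} into $|\psi\rangle$: a single-qubit Clifford on qubit $r$ permutes the three Pauli bases (up to signs), so its effect on qubit $r$ is equivalent to relabeling the input $b_r$ and possibly flipping the output bit $m_r$; both transformations preserve uniformity on $b$ and the locality constraints on $m$. Hence WLOG $|\psi\rangle$ itself is stabilized by $g_1, g_2, g_3$.

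The ten settings of interest are the four \emph{Mermin} bases $b^{(1)}=(X,X,X)$, $b^{(2)}=(X,Y,Y)$, $b^{(3)}=(Y,X,Y)$, $b^{(4)}=(Y,Y,X)$, together with the six \emph{pair} bases $(Z,Z,X), (Z,Z,Y), (X,Z,Z), (Y,Z,Z), (Z,X,Z), (Z,Y,Z)$. For each Mermin basis, exactly one of the products $g_1$, $g_1 g_3 = -X_h Y_i Y_j\,Z(s\oplus u)$, $g_1 g_2 g_3 = -Y_h X_i Y_j\,Z(s\oplus t\oplus u)$, $g_1 g_2 = -Y_h Y_i X_j\,Z(s\oplus t)$ is rendered diagonal by conjugation with $W(b)$, imposing on $m(b)$ a parity constraint with right-hand side $0, 1, 1, 1$ respectively. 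For each pair basis, exactly one of $g_2$, $g_3$, $g_2 g_3 = Z_h I_i Z_j\,Z(t\oplus u)$ becomes diagonal, giving a parity constraint with right-hand side $0$.

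Now partition the ancillas as $S_h \sqcup S_i \sqcup S_j \sqcup S_*$ according to which of $b_h, b_i, b_j$ (if any) the bit $m_r$ depends on, and write $\delta_r := m_r(X) \oplus m_r(Y)$ for ancillas $r$ with nontrivial dependence. XORing the four Mermin parity constraints, the LHS vanishes (each of $m_h, m_i, m_j$ takes each $X$- and $Y$-value an even number of times across the four Mermin bases), while the RHS equals $0 \oplus 1 \oplus 1 \oplus 1 = 1$ plus an ancilla tally which, after collecting terms by locality, simplifies to
\[
T \;:=\; \bigoplus_{r \in S_h} u_r \delta_r \;\oplus\; \bigoplus_{r \in S_i} (t_r \oplus u_r)\,\delta_r \;\oplus\; \bigoplus_{r \in S_j} t_r \delta_r.
\]
Hence joint Mermin satisfiability forces $T = 1$. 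Next, XORing the two constraints from $(Z,Z,X)$ and $(Z,Z,Y)$---both of which fix $b_h = b_i = Z$---cancels the $m_h$ and $m_i$ contributions and leaves only $S_j$-dependent ancillas, yielding $\bigoplus_{r \in S_j} t_r \delta_r = 0$. The other two pair sets analogously give $\bigoplus_{r \in S_h} u_r \delta_r = 0$ and $\bigoplus_{r \in S_i} (t_r \oplus u_r) \delta_r = 0$. XORing these three identities forces $T = 0$, contradicting $T = 1$. Therefore, the ten parity constraints cannot all hold, so at least one of the ten bases has $\langle m(b) | W(b) |\psi\rangle = 0$.

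The main obstacle will be careful Pauli-sign bookkeeping: the minus signs that arise from identities like $X_r Z_r = -iY_r$ (in products such as $g_1 g_3 = -XYY\cdot Z(s\oplus u)$) are exactly what ensure that the Mermin right-hand sides XOR to $1$ rather than $0$, which is the ``possibilistic'' content of the Mermin--GHZ inequality that actually drives the contradiction. A secondary subtlety is verifying, by casework on the eight elements of $\langle g_1, g_2, g_3\rangle$, that each pair basis indeed diagonalizes exactly one of $g_2, g_3, g_2 g_3$ (and no additional elements that would contribute extra, possibly conflicting, constraints), so that the pair parity conditions take the clean form used above.
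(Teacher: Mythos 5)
Your proposal is correct, and it proves the lemma by a genuinely different (though spiritually related) route. The paper proceeds by abstraction: from the constraints imposed by $g_2$, $g_3$, $g_7$ at the nine settings where two of $b_h, b_i, b_j$ equal $Z$, it deduces functional facts about $m_t, m_u$ (namely that $m_t$ is independent of $b_j$, $m_u$ of $b_h$, and $m_t m_u$ of $b_i$), writes $m_t, m_u$ as affine functions of the remaining variables, substitutes into the Mermin constraints arising from $g_1, g_4, g_5, g_6$, and reduces to a small linear system over $\mathbb{F}_2$ (the paper's final Claim) that has no solution. You instead select ten explicit measurement settings---the four Mermin triples and six ``pair'' triples---extract a parity constraint from each, and XOR them directly. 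I verified that your bookkeeping checks out: the Mermin right-hand sides XOR to $1$ because of the minus signs in $g_4 = g_1g_2$, $g_5 = g_1g_3$, $g_6 = g_1g_2g_3$; the contribution of each $m_r$ with $r \in \{h,i,j\}$ cancels because each of $m_r(X)$ and $m_r(Y)$ appears an even number of times across the four Mermin triples; the ancilla tally collects exactly to your quantity $T$; and XOR-ing each pair of pair-basis constraints isolates $\bigoplus_{r \in S_j} t_r \delta_r = 0$ (and the two analogues), which XOR to $T = 0$. The contradiction $T=1$ vs.\ $T=0$ is exactly the Mermin--GHZ ``all-versus-nothing'' obstruction, so the two arguments hit the same underlying fact. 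What your version buys is concreteness and the absence of the affine-parametrization step; what the paper's version buys is a cleaner reduction to a three-equation contradiction independent of the $s,t,u$ strings (which drop out entirely in the paper's Eq.~\eqref{eq:ms}). One minor remark: you do not actually need the ``secondary subtlety'' you flag---since \emph{any} diagonalized stabilizer gives a necessary parity constraint, having extra diagonalized elements can only add more constraints and cannot weaken your contradiction; what you do need is that the particular constraints you use are correct, and they are.
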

The proof of \cref{lem:limit} is given in the Appendix.
It says that classical circuits that are capable of simulating GHZ-type measurement induced entanglement must have certain kind of input-output correlations. In the next section we will see that this constraint is even more restrictive when $\psi$ is an $n$-qubit state and many triples of qubits exhibit GHZ-type MIE.

\subsection{Lower bound for classical probabilistic simulation}

In the following we shall view a random Clifford circuit as a fixed circuit with a random input that specifies the gates. That is, we construct a controlled Clifford circuit which takes the specification of all Clifford gates $U_1,U_2,\ldots, U_R$ as an input and then applies the corresponding Clifford circuit to an $n$-qubit register.

For any  $k$-local, $n$-qubit circuit template $\mathcal{T}_n=(m_1,m_2,\ldots, m_R)$ we define a controlled-Clifford circuit as follows. The circuit has a data register of $n$ qubits which are initialized in the $|0^n\rangle$ state. In addition, there is one input register $I_j$ of size $O(k^2)$ for each gate $j\in R$ in the circuit, that contains enough qubits to store the description of an arbitrary $k$-qubit Clifford gate. The $j$th gate in the circuit applies Clifford unitary $C$ on qubits $m_j$ of the data register, controlled on the state $|C\rangle$ of the input register $I_j$. An architecture $\mathcal{A}$ defines a family of controlled Clifford circuits in this way, which we denote $CC(\mathcal{A})$.

Suppose $\mathcal{A}=\{\mathcal{T}_n\}_{n\in \Gamma}$ is a depth-$d$, $k$-local architecture. As noted above we can construct an associated family $CC(\mathcal{A})$ of depth-$d$ quantum circuits composed of $O(k^2)$-local controlled-Clifford gates.  Under what conditions can the input/output behaviour of this family of quantum circuits, with random inputs,  be simulated by a constant-depth classical circuit?

Consider a family of probabilistic classical circuits which implement functions $\{F_n\}_{n\in \Gamma}$ with the following input/output properties:
\begin{enumerate}
\item For each $n\in \Gamma$, $F_n$ takes as input a list $U_1,U_2,\ldots, U_{R}$ of $k$-local Clifford gates that define a quantum circuit $U=U_{R}\ldots U_2U_1$ of depth $d$ in the template $\mathcal{T}_n$.
\item $F_n$ also takes as input a random string $r\in \{0,1\}^{\ell(n)}$ drawn from some probability distribution $q_n$. Here $\ell(n)$ can be any function of $n$ and $q_n$ can be any probability distribution over $\ell(n)$ bit strings.
\item The function outputs a binary string
\[
z=F_n(U_1,U_2,\ldots, U_{R},r)\in \{0,1\}^n.
\]
For ease of notation, we sometimes write $z=F_n(U,r)$.
\end{enumerate}

We shall be interested in the case where each function $F_n$ is implemented by a classical circuit with gates of fan-in $K$ and circuit depth $D$. In this case each output bit $z_j$ of $F_n(U,r)$ depends on at most $K^D$ input bits. In particular this implies that each output bit $z_j$ depends on at most $K^D$ of the gates in the circuit (each gate is specified by $O(k^2)$ bits).

If we draw $r\sim q_n$ and compute $z=F_n(U,r)$ then $z\in \{0,1\}^n$ samples from a distribution that we denote 
\[
P_U(z) \qquad z\in \{0,1\}^n.
\]
Let us also write $Q_U(z)=|\langle z|U|0^n\rangle|^2$ for the true output distribution.

\begin{lemma}
Suppose $\mathcal{A}$ is a $k$-local, depth-$d$ circuit architecture with the long-range tripartite MIE property (\cref{prop:ghz}). Suppose $n\in \Gamma $ satisfies $n\geq (10d)^4$ and let $F_n$ be a function with input/outputs as described above. Suppose further that each output bit of $z=F_n(U_1,U_2,\ldots, U_R,r)$ depends on at most $n^{1/4}$ of the $k$-qubit Clifford gates $U_1,U_2,\ldots, U_R$. Then for any $r\in \{0,1\}^{\ell(n)}$, we have
\[
\mathrm{Pr}_{U_1,U_2,\ldots, U_R \sim  \mathcal{D}_k} \left[ \langle F_n(U,r)|U|0^n\rangle=0\right]\geq \frac{c}{27}-O(n^{-1/4}),
\]
where $c>0$ is the absolute constant from \cref{prop:ghz} and $\mathcal{D}_k$ is the uniform distribution on $k$-qubit Clifford unitaries.
\label{lem:n14}
\end{lemma}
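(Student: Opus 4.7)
The plan is to reduce \cref{lem:n14} to \cref{lem:limit} by constructing, from $F_n$, a classical function $m : \{0,1,2\}^3 \to \{0,1\}^n$ that simulates measurement of $|\psi\rangle = U|0^n\rangle$ in bases $b_h, b_i, b_j$ on a random triple $(h,i,j)$. The key trick is to realize the basis-change unitary $W(b)$ as a modification of specific Clifford gates in $U$, exploiting the invariance of the uniform Clifford measure under fixed left-multiplication.

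To set this up, for each qubit $q \in [n]$ touched by at least one gate let $\sigma(q) \in [R]$ be the largest $\ell$ with $q \in m_\ell$ (the last gate in time order acting on $q$), and for each $\ell \in [n]$ let $D_\ell \subseteq [R]$ be the set of gate indices on which the $\ell$-th output bit of $F_n$ depends; by hypothesis $|D_\ell| \le n^{1/4}$. Fix a triple $(h,i,j)$ whose indices $\sigma(h), \sigma(i), \sigma(j)$ are pairwise distinct, and for $b \in \{0,1,2\}^3$ define $U'(b)$ by replacing each $U_{\sigma(q)}$ with $\tilde{Q}(b_q)\, U_{\sigma(q)}$, where $\tilde{Q}(b_q)$ is $Q(b_q)$ on qubit $q$ and identity on the other qubits of $m_{\sigma(q)}$. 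Since no gate after $U_{\sigma(q)}$ touches $q$, the single-qubit factor $\tilde{Q}(b_q)$ commutes past all later gates, and the three single-qubit factors commute with each other (distinct qubits), giving $U'(b)|0^n\rangle = W(b)\,U|0^n\rangle = W(b)|\psi\rangle$. For each fixed $b$, the distribution of $U'(b)$ coincides with that of $U$, since left-multiplication by a fixed Clifford preserves the uniform measure on the Clifford group. Set $m(b) := F_n(U'(b), r)$.

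The hypothesis of \cref{lem:limit} requires every output bit of $m(b)$ to depend on at most one of $b_h, b_i, b_j$. Bit $\ell$ of $m(b)$ depends on $b_q$ only if $\sigma(q) \in D_\ell$, so the condition reduces to: for every $\ell \in [n]$, at most one of $\sigma(h), \sigma(i), \sigma(j)$ lies in $D_\ell$; call such a triple \emph{dep-good}. Counting: the number of unordered qubit pairs $\{q_1, q_2\}$ for which some $D_\ell$ contains both $\sigma(q_1)$ and $\sigma(q_2)$ is at most $\sum_\ell \binom{|D_\ell|}{2} k^2 \le n \cdot \tfrac{n^{1/2}}{2} \cdot k^2 = O(n^{3/2})$ (using $|\sigma^{-1}(\ell)| \le k$), a fraction $O(n^{-1/2})$ of all qubit pairs; adding $O(1/n)$ for the non-distinctness of $\sigma(h), \sigma(i), \sigma(j)$, a uniformly random triple is dep-good with probability $1 - O(n^{-1/2})$. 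By the long-range tripartite MIE property (\cref{prop:ghz}), the triple exhibits GHZ-type MIE in $\psi$ with probability $\ge c$; combining, with probability $\ge c - O(n^{-1/2})$ the triple is both dep-good and MIE-exhibiting, whence \cref{lem:limit} yields $\Pr_b[\langle F_n(U'(b), r) | U'(b) | 0^n\rangle = 0] \ge 1/27$.

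To conclude, let $U$ be uniform random and $b$ uniform in $\{0,1,2\}^3$; then $V := U'(b)$ is also uniform random (for each $b$ marginally, hence jointly), so averaging the joint event over $(U, b, (h,i,j))$ gives
\[
\Pr_V\!\left[\langle F_n(V,r) | V | 0^n\rangle = 0\right] \ge \tfrac{1}{27}\bigl(c - O(n^{-1/2})\bigr) \ge \tfrac{c}{27} - O(n^{-1/4}).
\]
The main obstacle is the dependency-counting: enforcing the \emph{at-most-one} condition across \emph{all} $n$ output bits (not merely the three at positions $h, i, j$) requires globally using the $n^{1/4}$ light-cone bound via a pair-counting argument, rather than a pointwise one. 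A secondary delicate point is the commutation in the second paragraph: it relies crucially on $\sigma(q)$ being the \emph{last} gate touching $q$, so the inserted basis change can be freely pushed past subsequent gates, together with the Clifford-group invariance ensuring $U'(b)$ remains a uniformly random circuit in the architecture.
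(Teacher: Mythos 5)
Your overall strategy is the same as the paper's: replace the last gates $V_h, V_i, V_j$ by $Q(b_\cdot)V_\cdot$, use invariance of the uniform Clifford measure, deduce $U'(b)\ket{0^n}=W(b)\ket\psi$, feed $m(b)=F_n(U'(b),r)$ into \cref{lem:limit}, and average over $b$. The differences are in how you establish that a random triple satisfies the hypotheses of \cref{lem:limit}, and there you have a genuine gap.

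\cref{lem:limit} has \emph{two} hypotheses: (i) every output bit depends on at most one of $b_h,b_i,b_j$, and (ii) $m_h$ is independent of $b_i$ and $b_j$ (and the two symmetric statements). You write ``the hypothesis of \cref{lem:limit} requires every output bit of $m(b)$ to depend on at most one of $b_h,b_i,b_j$'' and your ``dep-good'' condition --- at most one of $\sigma(h),\sigma(i),\sigma(j)$ lies in $D_\ell$ for every $\ell$ --- captures only (i). It does \emph{not} rule out, say, $\sigma(i)\in D_h$ with $\sigma(h),\sigma(j)\notin D_h$, i.e.\ $m_h$ depending on $b_i$ but on neither $b_h$ nor $b_j$. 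This satisfies ``at most one'' but violates (ii), and (ii) is genuinely used in the proof of \cref{lem:limit} (e.g.\ the step ``since $m_h,m_i$ are independent of $b_j$ we infer $m_t$ is independent of $b_j$''). The paper avoids this by defining $\mathcal{L}^{\rightarrow}(U_w)$ to include $\mathrm{supp}(U_w)$ as well as the dependent output bits; then disjointness of $\mathcal{L}^{\rightarrow}(V_h),\mathcal{L}^{\rightarrow}(V_i),\mathcal{L}^{\rightarrow}(V_j)$ forces $h\notin\mathcal{L}^{\rightarrow}(V_i)$, i.e.\ output bit $h$ cannot depend on gate $V_i$, giving (ii) simultaneously with (i). The gap is fixable within your framework --- for a random triple, $\Pr[\sigma(i)\in D_h]\le k\,n^{1/4}/n=O(n^{-3/4})$ by the same counting, so adding a union bound over the six events $\sigma(q)\in D_{q'}$ ($q\ne q'$ in the triple) costs only $O(n^{-3/4})$ --- but as written the step ``\cref{lem:limit} yields $\Pr_b[\cdots]\ge 1/27$'' is unjustified.

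A smaller point: your $O(n^{-1/2})$ bound from the pair-counting is in fact sharper than the paper's $O(n^{-1/4})$, which uses a max-degree argument on an auxiliary graph rather than a direct edge count. But the error budget is dominated by the $n^{-1/4}$ target anyway, so this tightening doesn't change the stated conclusion. Also, your inline ``$|\sigma^{-1}(\ell)|\le k$'' should read $|\sigma^{-1}(g)|\le k$ for gate indices $g\in[R]$; the bound itself is correct.
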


\begin{proof}
In the following we fix $r\in \{0,1\}^{\ell(n)}$ and write $z=F_n(U)=F_n(U,r)$.

Let $U_1,U_2,\ldots, U_R$ be the $k$-qubit gates that describe $U$. Here $R\leq nd/k$. These are arranged in $d$ layers. For each $t\in [n]$ there is exactly one $k$-qubit gate that is the last one to act on qubit $t$. Call this gate $V_t$. Note that we may have $V_t=V_\ell$ for $t\neq \ell$.

For each gate $w\in [R]$, let $\mathcal{L}^{\rightarrow}(U_w)\subseteq [n]$ consist of $\mathrm{supp}(U_w)$ as well as all output bits of $F_n$ that depend on gate $U_w$. Here $\mathrm{supp}(U_w)$ denotes the support of $U_w$ (the 
 $k$ qubits on which it acts nontrivially). Let us consider a bipartite graph $G$ with a vertex for each gate $U_w$ with $1\leq w\leq R$,  and a vertex for each output bit $v\in [n]$, and an edge $\{U_w,v\}$ whenever $v\in \mathcal{L}^{\rightarrow}(U_w)$. Since each output bit depends on at most $n^{1/4}$ input gates, and each output bit can be in the support of at most $d$ gates in the circuit, the maximum degree of any output bit is at most $d+n^{1/4}\leq 1.1n^{1/4}$ (since $n\geq (10d)^4$), and therefore the total number of edges in $G$ is 
\begin{equation}
\sum_{w=1}^{R} |\mathcal{L}^{\rightarrow}(U_w)|\leq 1.1n^{5/4}.
\label{eq:sumofL}
\end{equation}
Let 
\[
\mathrm{Bad}=\{t\in [n]: |\mathcal{L}^{\rightarrow}(V_t)|\geq n^{1/2}\} \quad \text{and} \quad \mathrm{Good}=[n]\setminus \mathrm{Bad}.
\]
Then from \cref{eq:sumofL} we have
\[
|\mathrm{Bad}|\leq 1.1n^{3/4}
\]
and therefore 
\begin{equation}
|\mathrm{Good}|\geq n(1-1.1n^{-1/4}).
\label{eq:sizeofgood}
\end{equation}
Now let us consider a graph $G'$ with vertex set $\mathrm{Good}$ and an edge $\{s,t\}$ iff
\[
\mathcal{L}^{\rightarrow}(V_s)\cap \mathcal{L}^{\rightarrow}(V_t)\neq \emptyset.
\]
The maximum degree $\Delta$ of $G'$ satisfies 
\[
\Delta\leq \max_{t\in \mathrm{Good}}|\mathcal{L}^{\rightarrow}(V_t)|\cdot (d+n^{1/4})\leq  n^{1/2}\cdot 1.1n^{1/4}=1.1n^{3/4}.
\]
Now suppose we choose a triple of qubits $\{h,i,j\}\subseteq [n]$ uniformly at random. From \cref{eq:sizeofgood} we see that the probability that $\{h,i,j\}\subseteq \mathrm{Good}$ is $1-O(n^{-1/4})$. Conditioned on this event, the probability that $\{h,i,j\}$ form an independent set in $G'$ can also be lower bounded as $1-O(n^{-1/4})$. To see this note that since the maximum degree of $G'$ is at most $O(n^{3/4})$, two randomly chosen vertices in $\mathrm{Good}$ have an edge between them with probability at most $O(n^{-1/4})$. Then apply this to the three pairs $\{h,i\},\{i,j\}$ and $\{h,j\}$ and use the union bound.

Let $\mathcal{E}$ be the event that our three qubits $h,i,j$ are an independent set in $G'$. Then we have shown $\mathrm{Pr}[\mathcal{E}]\geq 1-O(n^{-1/4})$. Moreover, if $\mathcal{E}$ occurs then 

\begin{align}
\mathcal{L}^{\rightarrow}(V_h)\cap \mathcal{L}^{\rightarrow}(V_i)&=\emptyset\label{eq:findabc1}\\
\mathcal{L}^{\rightarrow}(V_i)\cap \mathcal{L}^{\rightarrow}(V_j)&=\emptyset\label{eq:findabc2}\\
\mathcal{L}^{\rightarrow}(V_h)\cap \mathcal{L}^{\rightarrow}(V_j)&=\emptyset\label{eq:findabc3}.\end{align}

Now suppose that $U_1,U_2,\ldots, U_R \sim \mathcal{D}_k$ are uniformly random $k$-qubit Clifford gates. Let $|\psi\rangle=U|0^n\rangle$ be the output state of the circuit $U$. Let $\mathcal{E}'$ denote the event that qubits $h,i,j$ exhibit GHZ-type measurement-induced entanglement. That is, event $\mathcal{E}'$ occurs iff there is GHZ-type entanglement between qubits $h,i,j$ after measuring all other qubits of $\psi$ in the computational basis. By the long-range GHZ property, event $\mathcal{E}'$ occurs with probability at least $c>0$. 

By the union bound, the probability that both events $\mathcal{E}$ and $\mathcal{E}'$ occur is at least
\[
\mathrm{Pr}[\mathcal{E} \text{ and } \mathcal{E}']\geq c-O(n^{-1/4}).
\]

Now let us suppose that events $\mathcal{E}$ and $\mathcal{E'}$ both occur. Starting with our circuit $U_1,U_2,\ldots, U_R$ we define a set of $27$ quantum circuits as follows. For each of the qubits $h,i,j$ there are three choices for a single-qubit Pauli basis $X,Y,Z$. Let us index these choices by a tuple $b=b_hb_ib_j\in \{0,1,2\}^{3}$, where $0,1,2$ correspond to $X,Y,Z$ respectively. Recall that we write $Q(0),Q(1),Q(2)$ for the single-qubit Clifford unitaries that change basis from the $Z$ basis to the $X,Y,Z$ bases respectively (i.e., $Q(0)=(X+Z)/\sqrt{2}, Q(1)=(Y+Z)/\sqrt{2}, Q(2)=I$). For $b_h,b_i,b_j\in \{0,1,2\}^3$, let
\begin{align*}
V_h(b_h)&=Q(b_h) V_h \\
V_i(b_i)&=Q(b_i) V_i \\
V_j(b_j)&=Q(b_j) V_j.
\end{align*}
and let $U(b)$ denote the circuit obtained from $U_1,U_2,\ldots, U_R$ by making the replacements $V_h\leftarrow V_h(b_h), V_i\leftarrow  V_i(b_i), V_j\leftarrow V_j(b_j)$.

Define a function $m:\{0,1,2\}^3\rightarrow \{0,1\}^n$ by $m(b)=F_n(U(b))$. From \cref{eq:findabc1,eq:findabc2,eq:findabc3} we see that each output bit $m_j$ depends on at most one of the bits $b_h,b_i,b_j$, and that 
\begin{itemize}
\item $m_h$ is independent of $b_i, b_j$
\item $m_i$ is independent of $b_h, b_j$
\item $m_j$ is independent of $b_h, b_i$.
\end{itemize}
Therefore, from \cref{lem:limit} we conclude that 
\[
\langle F_n(U(b))|U(b)|0^n\rangle=0 \quad \quad \text{for some } b\in \{0,1,2\}^3.
\]
Note that each of the circuits $U(b)$ with $b\in \{0,1,2\}^3$ occurs in our random ensemble with the same probability as $U$. Therefore 
\begin{align}
\mathrm{Pr}_{U_1,U_2,\ldots, U_R \sim  \mathcal{D}_k} \left[ \langle F_n(U)|U|0^n\rangle=0\right]&\geq \frac{\mathrm{Pr}[\mathcal{E} \text{ and } \mathcal{E}']}{27}\nonumber\\
&=c/27-O(n^{-1/4}).\nonumber
\end{align}
\end{proof}

\begin{theorem}[Formal version of \cref{res:quantum-advantage}]
Suppose $\mathcal{A}=\{\mathcal{T}_n\}_{n\in \Gamma}$ is a $k$-local, depth-$d$ circuit architecture with the long-range tripartite MIE property (\cref{prop:ghz}). There exists a positive constant $\alpha >0$ and positive integer $n_0$ such that the following holds. Suppose $n\in \Gamma$ satisfies $n\geq \max\{n_0,(10d)^4\}$. Suppose there is a depth-$D$ probabilistic classical circuit composed of gates of fan-in $K$ with output distribution $P_U$ on input circuit $U$ satisfying
\[
\mathbb{E}_{U_1,U_2,\ldots, U_R \sim  \mathcal{D}_k} \left[\|P_U-Q_U\|_1\right]< \alpha,
\]
where $\mathcal{D}_k$ is the uniform distribution on $k$-qubit Clifford unitaries, and $Q_U(z)=|\langle z|U|0^n\rangle|^2$ is the true output distribution. Then $D\geq \frac{1}{4}\frac{\log(n)}{\log(K)}$.
\end{theorem}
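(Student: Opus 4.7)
The plan is to apply \cref{lem:n14} essentially as a black box and then convert its zero-amplitude conclusion into a total variation lower bound. The strategy proceeds in three stages; none of them require new technical machinery beyond what has already been established.

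First, I would translate the classical circuit complexity bound into the combinatorial input-locality hypothesis required by \cref{lem:n14}. In a depth-$D$ fan-in-$K$ classical circuit, each output bit depends on at most $K^D$ input bits. If $D < \tfrac{1}{4}\log(n)/\log(K)$ then $K^D < n^{1/4}$, and because the input registers $I_1,\ldots,I_R$ encoding the individual Clifford gates of the controlled-Clifford circuit $CC(\mathcal{A})$ are pairwise disjoint, ``few input bits'' immediately implies ``few gates'': every output bit of $F_n$ depends on at most $n^{1/4}$ of the gates $U_1,\ldots,U_R$. The hypotheses of \cref{lem:n14} are then satisfied as long as $n\ge (10d)^4$.

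Second, I would invoke \cref{lem:n14} pointwise in the internal randomness $r\in\{0,1\}^{\ell(n)}$, and then observe that any output string $z$ with $Q_U(z)=0$ but $P_U(z)>0$ contributes fully to the total variation distance, so
\[
\|P_U-Q_U\|_1 \;\ge\; \sum_{z:\,Q_U(z)=0} P_U(z) \;=\; \mathrm{Pr}_{r\sim q_n}\!\bigl[Q_U(F_n(U,r))=0\bigr].
\]
Taking the expectation over $U\sim\mathcal{D}_k^{\otimes R}$ and swapping expectations via Fubini yields
\[
\E_U\!\bigl[\|P_U-Q_U\|_1\bigr] \;\ge\; \mathrm{Pr}_{U,r}\!\bigl[\langle F_n(U,r)|U|0^n\rangle = 0\bigr] \;\ge\; \tfrac{c}{27}-O(n^{-1/4}),
\]
where $c>0$ is the absolute constant provided by \cref{prop:ghz}.

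Third, I would fix $\alpha := c/54$ and choose $n_0$ large enough that the $O(n^{-1/4})$ correction is bounded by $c/54$ for all $n\ge n_0$. The contrapositive of the displayed inequality then immediately gives the claimed depth bound $D\ge \tfrac{1}{4}\log(n)/\log(K)$. The substantive obstacle has really already been handled inside \cref{lem:n14} (and, one level below, in the GHZ basis-change argument of \cref{lem:limit}); the only minor subtlety left for the present theorem is ensuring that ``input bits for distinct gates are disjoint'' step, which is guaranteed by the definition of the controlled-Clifford encoding $CC(\mathcal{A})$. The remainder is routine: linearity of expectation, Fubini, and the standard reduction from zero-support disagreements to TVD.
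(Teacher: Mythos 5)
Your proposal is correct and follows essentially the same route as the paper's own proof: contrapositive reduction of the depth bound to the locality hypothesis of \cref{lem:n14}, the pointwise lower bound $\|P_U-Q_U\|_1\ge\sum_{z:Q_U(z)=0}P_U(z)$, a Fubini swap of expectations over $U$ and $r$, and setting $\alpha=c/54$ with $n_0$ absorbing the $O(n^{-1/4})$ error. The only difference is that you make explicit the (correct) observation that the input registers $I_1,\ldots,I_R$ of $CC(\mathcal{A})$ are disjoint so ``few input bits'' implies ``few gates,'' which the paper leaves implicit.
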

\begin{proof}
Let $F_n$ be the function which describes the classical circuit. Recall that $F_n$ takes as input the gates $U_1,U_2,\ldots, U_R$ in the circuit as well as a random string $r\in \{0,1\}^{\ell(n)}$, and outputs a binary string $z=F_n(U_1,U_2,\ldots, U_R, r)=F_n(U,r)$. 

Suppose $D<\frac{1}{4}\frac{\log(n)}{\log(K)}$. Then every output bit of the classical circuit depends on at most $K^D< n^{1/4}$ input bits. In particular, each output bit depends on at most $n^{1/4}$ of the input gates $U_1,U_2,\ldots, U_R$. Then
\begin{align*}
&\mathbb{E}_{U_1,U_2,\ldots, U_R \sim  \mathcal{D}_k} \bigg(\sum_{x\in \{0,1\}^n} |P_U(x)-Q_U(x)|\bigg)\\
&\geq \mathbb{E}_{U_1,U_2,\ldots, U_R \sim  \mathcal{D}_k} \bigg(\sum_{x:Q_U(x)=0} P_U(x)\bigg)\\
&=\mathbb{E}_{U_1,U_2,\ldots, U_R \sim  \mathcal{D}_k} \mathrm{Pr}_{r} \left[\langle F_n(U,r)|U|0^n\rangle=0\right]\\
&=\mathbb{E}_{r} \mathrm{Pr}_{U_1,U_2,\ldots, U_R \sim  \mathcal{D}_k} \left[\langle F_n(U,r)|U|0^n\rangle=0\right]\\
&\geq \mathbb{E}_{r}\frac{c-O(n^{-1/4})}{27}\\
&= \frac{c-O(n^{-1/4})}{27}
\end{align*}
where $c>0$ is the constant appearing in the definition of \cref{prop:ghz}, and where we used \cref{lem:n14}. We choose $n_0$ large enough so that $O(n^{-1/4})$ is at most $c/2$ and then set $\alpha=c/54$.
\end{proof}

\section{Coarse-grained architecture}
\label{sec:Coarse-grained model}

Here we discuss the ``coarse-grained" two-dimensional circuit architecture where we are able to establish rigorous statements concerning measurement-induced entanglement and quantum advantage. In particular, we prove \cref{thm:mie_coarsegrained}. The coarse-grained architecture shares salient features of shallow quantum circuits in the usual brickwork architecture; most importantly, there is a locality parameter $\tau$ that determines the linear size of the lightcone of a given qubit.

Consider a family of random quantum circuits, shown in \cref{fig:coarse-grained circuit}, acting on $n$ qubits arranged on a $\sqrt{n} \times \sqrt{n}$ grid lattice. Each circuit $U$ in this family consist of two layers of gates $V$ and $W$ applied consecutively such that 
$$
U \ket{0^n} = \underbrace{\left(W_{m^2}\otimes \cdots \otimes W_1\right)}_{W} \cdot \underbrace{\left(V_{(m-1)^2}\otimes \cdots \otimes V_1\right)}_{V} \ket{0^n}.
$$
We refer to the gates $V_1,\dots, V_{(m-1)^2}$ in $V$ as the first layer and $W_1,\dots, W_{m^2}$ in $W$ as the second layer gates.

We shall consider two families of quantum circuits in this architecture. The first is obtained by choosing each gate to be a Haar-random unitary acting on a square of $\tau \times \tau$ qubits such that $m^2 \tau^2 = n$. We will call a circuit $U$ drawn from this family a \emph{random coarse-grained universal circuit}. The second family we consider is obtained by choosing each gate to be a uniformly random $\tau^2$-qubit Clifford unitary. We refer a circuit $D$ drawn from this family as a \emph{random coarse-grained Clifford circuit.}

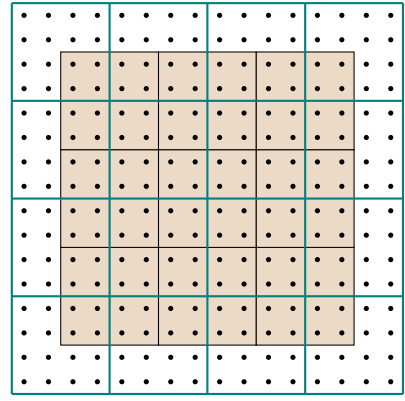
\begin{figure}
\centering
\begin{tikzpicture}[scale=1.3, transform shape]
    \def\smallSquareSide{1}
    \def\padding{0.125}
    
    \def\shift{-0.5} 
    \def\shiftt{-0.25} 

    \foreach \x in {0,1,2} {
        \foreach \y in {0,1,2} {
            \fill[brown!30] (\x,\y) rectangle (\x+\smallSquareSide,\y+\smallSquareSide);
            \draw (\x,\y) rectangle (\x+\smallSquareSide,\y+\smallSquareSide);
        }
    }
    
    \def\myfontsize{\fontsize{7pt}{8pt}\selectfont} 

    \foreach \x in {0,...,4} {
        \draw[teal, thick] (\x+\shift,\shift) -- (\x+\shift,4+\shift);
        \draw[teal, thick] (\shift,\x+\shift) -- (4+\shift,\x+\shift);
    }

    \foreach \x in {0,1,2, 3} {
        \foreach \y in {0,1,2,3} {
            \pgfmathsetmacro\startX{\x+\shiftt+ \padding+\shiftt}
            \pgfmathsetmacro\startY{\y+\shiftt + \padding+\shiftt}
            \pgfmathsetmacro\pointSpacing{(\smallSquareSide - 2 * \padding) / 3}
            \foreach \i in {0,...,3} {
                \foreach \j in {0,...,3} {
                    \pgfmathsetmacro\pointX{\startX + \i * \pointSpacing}
                    \pgfmathsetmacro\pointY{\startY + \j * \pointSpacing}
                    \ifnum\x=1
                        \ifnum\y=2
                            \ifnum\i=3
                                \ifnum\j=0
                                    \fill (\pointX, \pointY) circle (0.75pt);

                                \else
                                    \fill (\pointX, \pointY) circle (0.75pt);
                                \fi
                            \else
                                \fill (\pointX, \pointY) circle (0.75pt);
                            \fi
                        \else
                            \fill (\pointX, \pointY) circle (0.75pt);
                        \fi
                    \else
                        \fill (\pointX, \pointY) circle (0.75pt);
                    \fi
                }
            }
        }
    }
\end{tikzpicture}
\caption{Coarse-grained circuit with two layers of gates for $m=4$ and $\tau=4$. 
The \textcolor{brown!100}{shaded region} denotes the support of the first layer gates $V_1,\dots, V_{9}$. 
The gates with \textcolor{teal}{green boundaries} represent the second-layer gates $W_1,\dots, W_{16}$.}
\label{fig:coarse-grained circuit}
\end{figure}

\subsection{Subsystem anticoncentration and long-range MIE}

Below we show that random coarse-grained Clifford circuits exhibit subsystem anticoncentration in the sense described in the Introduction. In particular, we consider a tripartition $[n]=ABC$ of the qubits such that $|A|=1$, and the corresponding set $S$ defined in \cref{eq:defS1,eq:defS2}. We then give an upper bound on $\mathbb{E}_D[S]$ that approaches zero as $|C|,n\rightarrow \infty$. From our upper bound it follows directly that the family of random coarse-grained Clifford circuits satisfies the long-range MIE property. We use \cref{thm:anticonc} to infer that the family of random coarse-grained universal circuits also satisfies the long-range MIE property (\cref{prop:lrmie}).

To upper bound the expected size of $S$, we first bound the probability that a random Pauli operator becomes $Z$-type when it is evolved backwards in time through a random coarse-grained Clifford circuit. 

Given a Pauli operator $P$ defined on the qubits $[n]$, let $\supp(P) \subseteq [n]$ be set of qubits on which $P$ takes a value other than the identity. For any Pauli operator $P$ and random coarse-grained Clifford circuit $D$, it is relatively easy to see that the probability of $D^{\dagger} P D$ being $Z$-type is determined entirely by the overlap of $\supp(P)$ with the second layer of gates $W_1,\ldots, W_{m^2}$. To keep track of this overlap, for any Pauli $P$ we let the \emph{cluster} $\mathcal{G}(P)$ denote the set of all second layer gates with non-trivial overlap with $\supp(P)$. Such a cluster can be specified by filling in squares of an $m \times m$ grid. This grid representation leads naturally to the notion of the \emph{size} and \emph{perimeter} of a cluster $\mathcal{G}(P)$, which we define next. 

\begin{definition}
    Given any cluster $\mathcal{G}(P)$, identify it with a subset of the $m \times m$ grid in the natural way. Then let the \emph{size} of the cluster, denoted $\abs{\mathcal{G}(P)}$ be the number of grid squares contained in the cluster. Also let the \emph{perimeter} of the cluster, denoted $\mathrm{Per}(\mathcal{G}(P))$ be the number of grid edges on the boundary of the cluster, excluding edges running along the outside of the grid (see \cref{fig:cluster size and perimeter}). 
\end{definition}

\begin{figure}
\centering
\begin{tikzpicture}[every node/.style={minimum size=1cm-\pgflinewidth, outer sep=0pt}]
    \draw[step=1cm,color=black, thick] (0,0) grid (5,5);
    \node[fill=lightgray] at (0.5,4.5) {};
    \node[fill=lightgray] at (1.5,4.5) {};
    \node[fill=lightgray] at (2.5,4.5) {};
    \node[fill=lightgray] at (3.5,4.5) {};
    \node[fill=lightgray] at (0.5,3.5) {};
    \node[fill=lightgray] at (2.5,3.5) {};
    \node[fill=lightgray] at (4.5,3.5) {};
    \node[fill=lightgray] at (4.5,2.5) {};
    \node[fill=lightgray] at (4.5,1.5) {};
    \node[fill=lightgray] at (3.5,2.5) {};
    \node[fill=lightgray] at (1.5,1.5) {};

    \draw[teal, ultra thick] (0,3) -- (1,3) -- (1,4) -- (2,4) -- (2,3) -- (3,3) -- (3,4) -- (4,4) -- (4,5); 
    \draw[teal, ultra thick] (5,4) -- (4,4) -- (4,3) -- (3,3) -- (3,2) -- (4,2) -- (4,1) -- (5,1); 
    \draw[teal, ultra thick] (1,1) -- (1,2) -- (2,2) -- (2,1) -- (1,1);
\end{tikzpicture}
\caption{A cluster $\mathcal{G}(P)$ indicated in \textcolor{gray}{gray}, with size $\abs{\mathcal{G}(P)} = 11$ and \textcolor{teal}{perimeter} $\mathrm{Per}(\mathcal{G}(P)) = 19$. Note edges of the cluster running along the exterior of the grid do not contribute to the perimeter. }
\label{fig:cluster size and perimeter}
\end{figure}
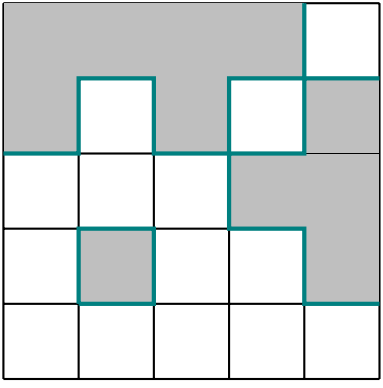

The following theorem bounds the probability that a Pauli is $Z$-type after conjugation by a random coarse-grained Clifford circuit.

\begin{lemma}
\label{lem:coarse_grained_prob_Ztype}
Let $P$ be a Pauli operator, and let $D$ be a random coarse-grained Clifford circuit. Also let $a = \abs{\mathcal{G}(P)}$ and $l = \mathrm{Per}(\mathcal{G}(P))$ denote the size and perimeter of $\mathcal{G}(P)$, respectively.  Then 
\begin{align}
    \Pr[D^{\dagger}  P D \text{ is $Z$-type}] \leq \left(\frac{1}{2} \right) ^{a \tau^2 + \frac{1}{4}l (\tau^2 / 12 - 1)}.
\end{align}
\end{lemma}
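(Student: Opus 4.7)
The plan is to analyze $D^\dagger P D = V^\dagger W^\dagger P W V$ by first conditioning on the second-layer gates $W$. Let $B_k$ denote the $\tau\times\tau$ block on which $W_k$ acts and $U_j$ the block on which $V_j$ acts. By the Clifford $2$-design property, for $k \in \mathcal{G}(P)$ the restricted conjugate $Q_k := W_k^\dagger(P|_{B_k})W_k$ is a uniformly random non-identity Pauli on $B_k$, the $Q_k$'s are independent across $k$, and $Q_k = I$ for $k \notin \mathcal{G}(P)$. Thus the intermediate Pauli $P' := W^\dagger P W$ has independent uniform non-identity action on the $a$ blocks of $\mathcal{G}(P)$ and trivial action elsewhere.

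Next I would compute the conditional probability over $V$. The blocks $U_j$ are disjoint and do not cover the width-$\tau/2$ strip along the grid boundary, so $V^\dagger P' V$ is $Z$-type iff (i) $P'$ is already $Z$-type on the qubits outside $\bigcup_j U_j$, and (ii) for each $V_j$ with $P'|_{U_j}\neq I$, the conjugate $V_j^\dagger P'|_{U_j} V_j$ is $Z$-type on $U_j$, which by the $2$-design property occurs with probability $(2^{\tau^2}-1)/(4^{\tau^2}-1)\le 2^{-\tau^2}$. Writing $A(W):=\{j:P'|_{U_j}\neq I\}$ and using the independence of the $V_j$'s yields
\begin{equation*}
\Pr[D^\dagger P D \text{ is } Z\text{-type}] \;\leq\; \mathbb{E}_W\!\left[\mathbf{1}[(\mathrm{i})] \cdot 2^{-\tau^2 |A(W)|}\right].
\end{equation*}

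The bulk of the work is to bound this expectation using the independence of the $Q_k$'s. I would classify each $V_j$ by $c_j := |\{k\in\mathcal{G}(P) : B_k\cap U_j\neq\emptyset\}|\in\{0,\dots,4\}$. Interior $V_j$'s ($c_j=4$) are activated except on an event of probability $\le 2^{-2\tau^2}$ and so each effectively contributes a $2^{-\tau^2}$ factor, and a per-cluster-block accounting of these activation contributions produces the dominant $a\tau^2$ term in the exponent. To obtain the refinement $(l/4)(\tau^2/12-1)$, I would decompose each $(\tau/2)\times(\tau/2)$ corner overlap $U_j\cap B_k$ into three parallel strips of area $\tau^2/12$ and argue that each perimeter edge of $\mathcal{G}(P)$ forces at least a $2^{-\tau^2/12}$ suppression, coming either from the activation of an adjacent perimeter $V_j$ (with $1\le c_j\le 3$), or from an identity restriction of $Q_k$ on such a strip that is forced by condition (i) on the grid-boundary portion of a perimeter $W_k$.

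The principal obstacle is this last step: $|A(W)|$ does not decouple cleanly across $k$ since a single $V_j$ can touch up to four cluster blocks, so the naive inequality $|A(W)|\ge \tfrac14\sum_k |A_k(Q_k)|$ recovers only $2^{-a\tau^2}$ and loses the perimeter gain entirely. Extracting the $l\tau^2/48$ contribution requires the strip-level decomposition above together with a charging scheme that assigns each perimeter edge of $\mathcal{G}(P)$ to a specific strip whose $Z$-type or identity constraint factors independently across $k$; the $-l/4$ correction in the exponent then absorbs slack factors of the form $1/(1-2^{-\tau^2/2})$ and $(4^{\tau^2/4}-1)/4^{\tau^2/4}$ that appear when one replaces marginal distributions of the $Q_k$'s on the strips by the corresponding uniform $2$-design averages.
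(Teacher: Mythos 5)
Your proof shares its first step with the paper: conjugating $P$ by the second layer $W$ produces $P'=W^\dagger P W$, a string with an independent uniformly random non-identity Pauli on each block of $\mathcal{G}(P)$ and identity elsewhere. But you then diverge at the second step, and your route leaves a genuine gap.

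You condition on $W$, bound the probability over $V$ exactly as $\mathbf{1}[(\mathrm{i})]\cdot 2^{-\tau^2|A(W)|}$, and then must control $\mathbb{E}_W[\cdot]$. You correctly identify that this hits a correlation obstacle (each $Q_k$ is shared among up to four first-layer gates), that the naive bound ``recovers only $2^{-a\tau^2}$ and loses the perimeter gain entirely,'' and you propose a strip decomposition and charging scheme to recover the $\frac{l}{4}(\tau^2/12-1)$ term. However, the charging scheme is only described, not carried out: it is not shown that a given perimeter edge forces a factor-independent $2^{-\tau^2/12}$ suppression, that the constraints assigned by the charging scheme factor across $k$ despite the global non-identity constraint on each $Q_k$, nor that the claimed slack factors are actually absorbed by the $-l/4$ correction. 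The $\tau^2/12$ is reverse-engineered from the target rather than derived. As written, this is the crux of the proof and it is missing.

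The paper avoids this obstacle entirely by a different second step: it writes the probability as a telescoping product of conditional probabilities over the first-layer gates (with a virtual ``$V_0$'' on the boundary strip), and applies a conditioning-robust lemma (Lemma~\ref{lem:random Paulis and Cliffords}) stating that the Z-type probability bound for a random Clifford conjugating a tensor product of independent non-identity Paulis holds even after arbitrary conditioning on the complementary Pauli restrictions. Interior first-layer gates then contribute $2^{-\tau^2}$ each by part (a); perimeter gates, with partial overlap $\tau^2/4\le|Q_i'|\le 3\tau^2/4$, contribute $4^{-|Q_i'|}+2^{-\tau^2}\le 2^{-|Q_i'|-\tau^2/12+1}$ by part (b); multiplying and using $|\mathcal{V}_{\mathrm{perim}}|\ge l/4$ gives the stated bound. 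This is precisely where the $\tau^2/12$ comes from ($|Q_i'|/3\ge\tau^2/12$), not from a strip decomposition. To complete your argument you would need to either prove the charging claims you sketch, or adopt the chain-rule decomposition and a conditioning lemma in the paper's style.
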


Now let us consider the measurement-induced entanglement scenario in the case where $A$ and $C$ are single qubits. We will use first-moment methods to establish the existence of measurement-induced entanglement in the coarse-grained architecture for $\tau=\Omega(\sqrt{\log(m)})$.

We first bound the expected number of binary strings $s \in S$ such that the perimeter of the cluster $\mathcal{G}(DZ(s)D^{\dagger})$ has a given length $l$.

\begin{lemma} \label{lem:expected_size_S_fixed_l}
    Let $D$ be a random coarse-grained Clifford circuit  and let $l$ be a positive integer. Let $A \subseteq[n]$ contain a single qubit and $C \subseteq [n]$ contain at least one qubit, with $A$ and $C$ disjoint. Then 
    \begin{align*}
    &\mathbb{E}_D\left[\abs{\{s \in S: \mathrm{Per}(\mathcal{G}(DZ(s)D^{\dagger})) = l\}} \right] 
    \nonumber \\
    &\hspace{100pt} \leq 2^{l(2\log(m)-\tau^2/48+5)}.
    \end{align*}
\end{lemma}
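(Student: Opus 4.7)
The plan is to decompose $|S|$ according to the target Pauli shape of $DZ(s)D^\dagger$, invoke \cref{lem:coarse_grained_prob_Ztype} cluster-by-cluster, and then perform a Peierls-style contour count over clusters grouped by perimeter. Using $|S|\le|S_X|+|S_Y|+|S_Z|$ it suffices to bound $\mathbb{E}_D[|\{s\in S_P:\mathrm{Per}(\mathcal{G})=l\}|]$ for a fixed $P\in\{X,Y,Z\}$. The condition $s\in S_P$ with $DZ(s)D^\dagger=(-1)^\alpha P_A\otimes Z(r)_B\otimes I_C$ is equivalent to $D^\dagger Q_rD$ being $Z$-type up to sign, where $Q_r:=P_A\otimes Z(r)_B\otimes I_C$, and this sets up a bijection between $S_P$ and $\{r:D^\dagger Q_r D\text{ is }Z\text{-type}\}$. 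Moreover $\mathcal{G}(DZ(s)D^\dagger)=\mathcal{G}(Q_r)$ is determined entirely by $\supp(Q_r)=A\cup\supp(r)$, so the expectation above equals $\sum_{r:\mathrm{Per}(\mathcal{G}(Q_r))=l}\Pr_D[D^\dagger Q_r D\text{ is }Z\text{-type}]$.

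Next I would apply \cref{lem:coarse_grained_prob_Ztype} to each term in this sum, getting an upper bound of $2^{-a\tau^2-l(\tau^2/12-1)/4}$ where $a=|\mathcal{G}(Q_r)|$, and regroup the sum by the cluster $\mathcal{G}$ itself. For each fixed admissible $\mathcal{G}$ (which necessarily contains the unique second-layer gate $W_A$ covering $A$), the $r$'s realizing $\mathcal{G}(Q_r)=\mathcal{G}$ have $\supp(r)$ contained in the $a\tau^2$ qubits covered by $\mathcal{G}$, so there are at most $2^{a\tau^2}$ such $r$. The two $2^{\pm a\tau^2}$ factors cancel, leaving
\[
\mathbb{E}_D\!\left[|\{s\in S_P:\mathrm{Per}=l\}|\right]\le N_l\cdot 2^{-l(\tau^2/12-1)/4},
\]
where $N_l$ counts subsets of the $m\times m$ grid with perimeter exactly $l$ that contain $W_A$.

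Bounding $N_l$ is the main combinatorial step. Any such subset has at most $l/4$ connected components, since each component has perimeter at least $4$. A standard lattice-animal/Peierls argument---encoding a component's boundary as a closed walk on the dual graph---shows that the number of connected polyominoes of perimeter $p$ anchored at a given cell is at most $C^p$ for an absolute constant $C$. After anchoring the component containing $W_A$, placing each of the other $k-1$ components freely on the grid contributes $(m^2)^{k-1}$, and summing over integer compositions $p_1+\cdots+p_k=l$ with $p_i\ge 4$ yields at most $\binom{l-1}{k-1}$ terms. Summing over $k\le l/4$ gives $N_l\le(2C)^l m^{2l}$.

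Plugging this into the display above and multiplying by the factor $3$ from $|S|\le|S_X|+|S_Y|+|S_Z|$ produces an exponent of the form $l(2\log m+O(1)-\tau^2/48)$, which, after absorbing the $O(1)$ constants into the ``$5$'', yields the claimed inequality. The main obstacle is the polyomino count: standard lattice-animal bounds directly control only connected polyominoes, and the $m^{2l}$ factor arises precisely from the need to locate up to $l/4$ disjoint connected components, each at cost $m^2$ on the grid.
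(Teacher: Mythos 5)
Your reduction (from $|S|$ to $|S_P|$, the bijection $s\leftrightarrow r$, applying \cref{lem:coarse_grained_prob_Ztype}, regrouping by cluster, and the cancellation of $2^{\pm a\tau^2}$) exactly mirrors the paper's argument. Where you diverge is in the final combinatorial step of bounding the number $N_l$ of clusters with perimeter $l$. The paper does this by a one-line crude count: a cluster is specified by choosing which of the $2m(m-1)$ internal grid edges lie on its boundary (at most $\binom{2m(m-1)}{l}$ choices) and then a factor of $2$ to pick which side of the boundary is ``inside''; this immediately gives $N_l \le 2\binom{2m(m-1)}{l}\le 2\cdot(2m^2)^l$, which is all that is needed. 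Your Peierls/lattice-animal route is genuinely different, noticeably heavier, and has a couple of gaps. First, ``each component has perimeter at least $4$'' is false here because the paper's perimeter excludes edges along the outer boundary of the grid---a single corner cell has perimeter $2$, and a component running along one side can have perimeter anywhere between $1$ and $m$---so the bound $k\le l/4$ on the number of components fails; you would need to fall back to $k\le l$ (perimeter $\ge 1$ per proper nonempty component), which your crude $m^{2l}$ absorbs, but the stated justification is wrong. Second, ``the number of connected polyominoes of perimeter $p$ anchored at a given cell is at most $C^p$'' is not a routine lattice-animal fact: the usual lattice-animal bounds are exponential in the number of cells, which can be $\Theta(p^2)$; to get an exponential bound in perimeter you have to argue via self-avoiding polygons for the boundary walk and then separately handle multiply-connected polyominoes (holes) and components touching the grid boundary, which you do not do and which would need additional composition-over-boundary-cycles bookkeeping. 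None of this is unfixable, and the slack in the constant ``$5$'' would likely absorb the corrected constants, but the paper's edge-subset count sidesteps all of it and is the cleaner proof.
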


\begin{proof}
    Expanding out the definition of $S$ gives 
    \begin{align}
        &\abs{\{s \in S: \mathrm{Per}(\mathcal{G}(DZ(s)D^{\dagger})) = l\}} \nonumber \\
        &\hspace{10pt}=|\{s\in \{0,1\}^n: DZ(s)D^{\dagger}=(-1)^\alpha P_A\otimes Z(r)_B\otimes  I_C \nonumber\\ 
        &\hspace{24pt}\text{ for some } r\in \{0,1\}^{|B|}, \alpha\in \{0,1\}, P \in \{X,Y,Z\}\nonumber\\
        &\hspace{24pt}\textbf{ and } \mathrm{Per}(\mathcal{G}(DZ(s)D^{\dagger})) = l\}|.\nonumber
    \end{align}

    But for any fixed $r$ and $P$ we have
    \begin{align}
        &\exists s \in \{0,1\}^n : DZ(s)D^{\dagger}=(-1)^\alpha P_A\otimes Z(r)_B\otimes  I_C \nonumber \\
        &\hspace{40pt} \Leftrightarrow D^{\dagger} \left(P_A\otimes Z(r)_B\otimes  I_C\right) D \text{ is $Z$-type.}\nonumber
    \end{align}

    Then we also have 
    \begin{align}
        &\abs{\{s \in S: \mathrm{Per}(\mathcal{G}(DZ(s)D^{\dagger})) = l\}} \nonumber \\
        &=|\{(r,P): \mathrm{Per}(\mathcal{G}(P_A\otimes Z(r)_B\otimes  I_C)) = l, \textbf{ and } \nonumber \\
        &\hspace{30pt} D^{\dagger} \left(P_A\otimes Z(r)_B\otimes  I_C\right) D \text{ is $Z$-type}  \} |.
    \end{align}
    We can then bound the expected size of this set as follows. For a given cluster $\mathcal{G'}$, let $\Gamma(\mathcal{G'})$ denote the set of all pairs $r\in \{0,1\}^{|B|}$ and $P\in \{X,Y,Z\}$ such that $\mathcal{G}(P_A\otimes Z(r)_B\otimes  I_C) = \mathcal{G}'$. Then
    \begin{align}
    &\mathbb{E}_D\left[\abs{\{s \in S: \mathrm{Per}(\mathcal{G}(DZ(s)D^{\dagger})) = l\}}\right] \nonumber \\
    &= \sum_{\substack{\mathcal{G}' : \\  \mathrm{Per}(\mathcal{G}') = l}} \; \sum_{\substack{(r,P)\\\in \Gamma(\mathcal{G'})}} \Pr[D^{\dagger} \left(P_A\otimes Z(r)_B\otimes  I_C\right) D \text{ is $Z$-type.}] \nonumber \\
    &\leq \sum_{\substack{\mathcal{G}'  : \\ \mathrm{Per}(\mathcal{G}') = l}} \; \sum_{(r,P)\in \Gamma(\mathcal{G'})} 2^{-\abs{\mathcal{G}'} \tau^2 - \frac{1}{4}l(\tau^2/12 - 1)} \nonumber\\
    &\leq 3 \sum_{\substack{ \mathcal{G}'  : \\ \mathrm{Per}(\mathcal{G}') = l}} 2^{-\frac{1}{4}l(\tau^2/12 - 1)} \nonumber\\
    &\leq 6 \binom{2m(m-1)}{l} 2^{-\frac{1}{4}l(\tau^2/12 - 1)} \nonumber\\
    &\leq 2^{l( 2\log(m) - \tau^2/48 + 5)}.
    \end{align}
    where we used \cref{lem:coarse_grained_prob_Ztype} to go from the first line to the second and that there are at most $2^{\abs{\mathcal{G}'}\tau^2}$ strings $r \in \{0,1\}^n$  such that $\mathcal{G}(P_A\otimes Z(r)_B\otimes  I_C) = \mathcal{G}'$ to go from the second line to the third. To go from the third line to the fourth we count the number of clusters $\mathcal{G}'$ with perimeter $l$. Any such cluster can be specified by first choosing the location of the $l$ edges of the cluster, and then specifying the interior and exterior of the cluster. A crude upper bound gives that there are at most $ \binom{2m(m-1)}{l}$ ways to chose the $l$ edges of the cluster, and then two possible options for how to specify the interior and exterior of the cluster once the edges are fixed. This argument gives $\abs{\{\mathcal{G}'  : \mathrm{Per}(\mathcal{G}') = l \}} \leq 2 \binom{2m(m-1)}{l}$, which is the bound used above.
\end{proof}

Finally, we upper bound the expected size of $S$.

\begin{lemma}
\label{lem:S_bound_course_grained}
    In the setting of \cref{lem:expected_size_S_fixed_l} and choosing $\tau\geq\sqrt{1000\log_2(m)}$, we have
    \begin{align}
        \mathbb{E}_D[\abs{S}] \leq \frac{3}{2^{\abs{C} + 1}} + 2\cdot 2^{-\tau^2/100}.\label{eq:expected_size_of_S}
    \end{align}
\label{thm:expected_size_of_S}
\end{lemma}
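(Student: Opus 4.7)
\textbf{Proof plan for \cref{lem:S_bound_course_grained}.} The natural approach is to split the sum defining $\mathbb{E}_D[|S|]$ according to the perimeter of the cluster $\mathcal{G}(DZ(s)D^{\dagger})$, handling the perimeter-zero contribution by hand and bounding the tail via the previous lemma:
\[
\mathbb{E}_D[|S|] = \sum_{l=0}^{2m(m-1)} \mathbb{E}_D\!\left[\left|\{s \in S : \mathrm{Per}(\mathcal{G}(DZ(s)D^{\dagger})) = l\}\right|\right].
\]

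The first step is to dispose of the $l=0$ term, which is the source of the $3/2^{|C|+1}$ contribution. A cluster on an $m\times m$ grid with zero interior perimeter is either empty or the whole grid; since every $s \in S$ has $P_A$ nontrivial and therefore has at least the second-layer gate containing $A$ in its cluster, the cluster must be the entire grid, giving $a = m^2$. Plugging into \cref{lem:coarse_grained_prob_Ztype} yields $\Pr[D^\dagger(P_A \otimes Z(r)_B \otimes I_C)D \text{ is $Z$-type}] \leq 2^{-m^2\tau^2} = 2^{-n}$. The number of $(r,P)$ pairs to sum over is at most $3\cdot 2^{|B|}$ (three choices of $P \in \{X,Y,Z\}$ and a crude count of $2^{|B|}$ choices for $r$). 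Using $|B| = n - |C| - 1$, this bounds the $l=0$ contribution by $3 \cdot 2^{|B|} \cdot 2^{-n} = 3/2^{|C|+1}$.

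The second step is to sum the $l \geq 1$ terms using \cref{lem:expected_size_S_fixed_l}:
\[
\sum_{l=1}^{2m(m-1)} 2^{l(2\log(m) - \tau^2/48 + 5)}.
\]
With the hypothesis $\tau^2 \geq 1000\log_2(m)$, we have $2\log(m) \leq \tau^2/500$, so the exponent per $l$ is at most $\tau^2/500 - \tau^2/48 + 5$, which for $m$ (hence $\tau$) sufficiently large is at most $-\tau^2/50 + 5 \leq -\tau^2/100 - 1$. The sum is then a geometric series bounded by $2 \cdot 2^{-\tau^2/100}$, a standard manipulation $\sum_{l\geq 1} 2^{-l\gamma} \leq 2 \cdot 2^{-\gamma}$ valid whenever $2^{-\gamma} \leq 1/2$.

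The only slightly delicate point is checking that the constants line up: one must verify that $\tau^2/500 - \tau^2/48 + 5 \leq -\tau^2/100$ under the hypothesis $\tau^2 \geq 1000\log_2(m)$ (equivalently, choosing $m$ large enough), and that the geometric series constant fits inside the factor of $2$ in the stated bound. These are routine arithmetic checks rather than genuine obstacles. Combining the two contributions yields
\[
\mathbb{E}_D[|S|] \leq \frac{3}{2^{|C|+1}} + 2 \cdot 2^{-\tau^2/100},
\]
as desired.
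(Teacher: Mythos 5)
Your proposal follows essentially the same route as the paper: split $\mathbb{E}_D[|S|]$ by perimeter $l$ of the cluster, handle $l=0$ by noting that since $P_A$ is nontrivial the cluster must be the entire $m\times m$ grid (giving $a=m^2$, hence contribution $3\cdot 2^{|B|}\cdot 2^{-n} = 3/2^{|C|+1}$), and bound $l\geq 1$ with \cref{lem:expected_size_S_fixed_l} plus a geometric series. This matches the paper's proof.

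One arithmetic slip worth flagging: you assert that $\tau^2/500 - \tau^2/48 + 5 \leq -\tau^2/50 + 5$, but $1/500 - 1/48 \approx -1/53.1 > -1/50$, so that intermediate inequality is false as written. The end claim $\tau^2/500 - \tau^2/48 + 5 \leq -\tau^2/100 - 1$ does hold directly once $\tau^2 \gtrsim 680$ (which follows from $\tau^2 \geq 1000\log_2 m$ for $m\geq 2$), so the conclusion is unaffected; you should just prove that inequality directly rather than route through the incorrect $-\tau^2/50$ step. (The paper sidesteps this by splitting $-\tau^2/48$ as $-\tau^2/96 - \tau^2/96$ and absorbing $2\log_2 m + 5$ into one of the halves.)
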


\begin{proof}
We first observe 
\begin{align}
    \mathbb{E}_D[\abs{S}] \leq \sum_{l=0}^{\infty} \mathbb{E}_D\left[\abs{\{s \in S: \mathrm{Per}(\mathcal{G}(DZ(s)D^{\dagger})) = l\}}\right].\nonumber
\end{align}
Then \cref{lem:expected_size_S_fixed_l} and our assumption on the size of $d$ gives
\begin{align*}
\sum_{l=1}^{\infty} \mathbb{E}_D[\abs{\{s \in S: &\mathrm{Per}(\mathcal{G}(DZ(s)D^{\dagger})) = l\}}] \\
&\leq  \sum_{l=1}^{\infty} 2^{l(2\log_2(m)-\tau^2/48+5)} \\
&=\sum_{l=1}^{\infty} 2^{l(2\log_2(m)-\tau^2/96+5)-l\tau^2/96}\\
&\leq  \sum_{l=1}^{\infty} 2^{l(-\tau^2/96)}\\
&\leq 2\cdot 2^{-\tau^2/100}
\end{align*}
where in the second-to-last line we used the fact that $2\log_2(m)-\tau^2/96+5)\leq 2\log_2(m)-10\log_2(m)+5\leq  0$.
It remains to show a bound when $l = 0$. In this case, by the same logic what was used in the proof of \cref{lem:expected_size_S_fixed_l}, we have
\begin{align}
&\mathbb{E}_D\left[\abs{\{s \in S: \mathrm{Per}(\mathcal{G}(DZ(s)D^{\dagger})) = 0\}}\right] \nonumber \\
&\hspace{20pt}=\mathbb{E}_D\big[ |\{ r\in\{0,1\}^{\abs{B}}, P \in \{X,Y,Z\} : \nonumber \\
&\hspace{40pt} \mathrm{Per}(\mathcal{G}(P_A\otimes Z(r)_B\otimes  I_C)) = 0 \textbf{ and } \nonumber \\
&\hspace{40pt} D^{\dagger} \left(P_A\otimes Z(r)_B\otimes  I_C\right) D \text{ is $Z$-type}  \} | \big].
\end{align}
But we have $\mathrm{Per}(\mathcal{G}(P_A\otimes Z(r)_B\otimes  I_C)) = 0$ only if  the cluster $\mathcal{G}(P_A\otimes Z(r)_B \otimes  I_C)$ is the entire $m \times m$ grid. This cluster has size $m^2$ and there are at most 
\begin{align}
    3 (2^{\abs{B}}) = 3 ( 2^{n - \abs{A} - \abs{C} } ) = 3 ( 2^{n - 1 - \abs{C}})\nonumber
\end{align}
Pauli strings satisfying this condition. Then a straightforward application of \cref{lem:coarse_grained_prob_Ztype} gives 
\begin{align}
&\mathbb{E}_D\left[\abs{\{s \in S: \mathrm{Per}(\mathcal{G}(DZ(s)D^{\dagger})) = 0\}}\right] \nonumber \\
&\hspace{40pt}\leq 3 ( 2^{n - 1 - \abs{C}}) (2^{- m^2 \tau^2}) = 3 (2^{-\abs{C} - 1})
\end{align}
and the result follows. 
\end{proof}

\begin{theorem} [Formal version of \cref{thm:mie_coarsegrained}, part 1]\label{thm:mie-coarsegrained-formal}
The family of random coarse-grained universal circuits with $\tau\geq \sqrt{1000\log_2(m)}$ and $n$ sufficiently large satisfies the long-range MIE property. The same statement holds for random coarse-grained Clifford circuits.
\end{theorem}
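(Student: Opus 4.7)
My plan is to combine \cref{thm:expected_size_of_S} (which bounds $\E_D[|S|]$ in a random coarse-grained Clifford circuit) with the appropriate purity bound for each family: \cref{lem:post-measurement_criterion_finite} for the Clifford case and \cref{thm:anticonc} for the universal case. Throughout, fix a tripartition $(A,B,C)$ with $|A|=1$ and $|C|\ge 1$, as in the definition of long-range MIE.

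In the Clifford case the argument is short: by \cref{lem:post-measurement_criterion_finite}, $\rho_A$ has purity $1$ when $S\ne\emptyset$ and purity $1/2$ when $S=\emptyset$, so by Markov's inequality
\[
\E_D\E_x\!\bigl[\mathrm{Tr}(\rho_A^2)\bigr] = \tfrac12 + \tfrac12\Pr_D[S\ne\emptyset] \;\le\; \tfrac12 + \tfrac12\E_D[|S|].
\]
\cref{thm:expected_size_of_S} gives $\E_D[|S|]\le 3/2^{|C|+1}+O(m^{-10}) \le 3/4 + O(m^{-10})$ uniformly in $|C|\ge 1$, so $\E[\mathrm{Tr}(\rho_A^2)] \le 7/8 + O(m^{-10})$, bounded away from $1$ for $n$ large; this establishes long-range MIE with any constant $c\in (7/8,1)$.

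In the universal case, \cref{thm:anticonc} gives $\E_U\E_x[\mathrm{Tr}(\rho_A^2)] \le 1/2 + (1/2)\sqrt{3\E_D[|S|]}$, which is $<1$ provided $\E_D[|S|]<1/3$; via \cref{thm:expected_size_of_S}, this holds as soon as $|C|\ge 3$. The edge cases $|C|\in\{1,2\}$ can only arise when the shielding side length $L$ is within $1$ or $2$ of the maximum $\sqrt{n-1}$. For these I would invoke the depth-$2$ coarse-grained lightcone: the backward lightcone $\Lambda(A)$ of a single qubit consists of the at-most-four first-layer $\tau\times\tau$ blocks overlapping $A$'s second-layer block, so $|\Lambda(A)|=O(\tau^2)=O(\log n)$ has diameter $O(\tau)$. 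Choosing $B'\supseteq\Lambda(A)$ with $|B'|=O(\log n)$ so that every $b\in B\setminus B'$ satisfies $\Lambda(b)\cap\Lambda(A)=\emptyset$, the plan is to show $\rho_A(x_B)=\rho_A(x_{B'})$ almost surely, effectively replacing $(A,B,C)$ by $(A,B',(B\setminus B')\cup C)$ whose new ``$C$'' has $\Omega(n)$ qubits. \cref{thm:anticonc} together with \cref{thm:expected_size_of_S} applied to the reduced partition then yield $\E[\mathrm{Tr}(\rho_A^2)] \le 1/2 + o(1)$.

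The \textbf{main obstacle} is justifying the lightcone reduction in the universal case. The marginal statement $\rho_{A,B\setminus B'} = \rho_A\otimes\rho_{B\setminus B'}$ is immediate from the Heisenberg-picture commutativity of operators with disjoint backward lightcones, but the conditional analogue---that projecting onto $|x_{B'}\rangle\langle x_{B'}|_{B'}$ still leaves $A$ uncorrelated with $B\setminus B'$---is a genuine quantum-Markov-chain condition. I would verify it directly from the structure $U=WV$: $V|0^n\rangle$ factorizes over the disjoint first-layer gates and $W$ is a tensor product over disjoint second-layer gates, so one can track the projector $|x_{B'}\rangle\langle x_{B'}|_{B'}$ through the two-layer circuit and establish the desired factorization between the ``near-$A$'' and ``far-from-$A$'' regions by choosing $B'$ to be a closed set with respect to the overlap pattern of first- and second-layer gates.
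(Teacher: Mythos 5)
Your argument correctly assembles the main ingredients---\cref{thm:anticonc} (anticoncentration $\Rightarrow$ low purity) and \cref{lem:S_bound_course_grained} ($\mathbb{E}_D[|S|]$ bound) for the universal case, and \cref{lem:post-measurement_criterion_finite} plus Markov for the Clifford case. The Clifford case is clean and works for any $|C|\ge 1$.

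However, there is a genuine gap in how you handle the universal case, and it stems from a missed geometric observation. You are right that $\mathbb{E}_D[|S|]\le 3/2^{|C|+1}+2\cdot 2^{-\tau^2/100}$ only drops below $1/3$ once $|C|\ge 3$, and you flag $|C|\in\{1,2\}$ as problematic. But these ``edge cases'' never arise: in the long-range MIE scenario (\cref{prop:lrmie}), $B$ is an $L\times L$ square shielding region centred at $A$ inside the $\sqrt{n}\times\sqrt{n}$ grid, and increasing $L$ by one changes $|B|$ by $O(\sqrt{n})$. Consequently the smallest nonzero $|C|$ achievable is already $\Omega(\sqrt{n})$; the paper simply notes ``due to the 2D geometry, this implies $|C|=\Omega(\sqrt{n})$'' and proceeds. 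So $\mathbb{E}_D[|S|]=O(n^{-5})$ and the purity bound is $1/2+o(1)$ directly, with no edge cases to patch.

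The ``lightcone reduction'' you propose as a workaround is not merely unnecessary---it is problematic as stated. The key claim, that $\rho_A(x_B)=\rho_A(x_{B'})$ almost surely for a small $B'\supseteq\Lambda(A)$, asserts that further measuring the far region $B\setminus B'$ does not change the conditional state of $A$. In the graph-state picture this holds only if $A$ has no graph-neighbour in $B\setminus B'$ (after the appropriate local Clifford), which is a nontrivial structural condition: it is exactly a form of short-range (absence of) MIE between $A$ and $B\setminus B'$, whereas long-range MIE is what you are trying to establish. You correctly call this your ``main obstacle'' and do not resolve it. It would also, if proved in the strong almost-sure form, give an equality and hence the wrong direction of a Jensen-type bound is not even needed---but the premise itself is doing a lot of unjustified work. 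Fortunately, the obstacle disappears once you observe $|C|=\Omega(\sqrt{n})$.
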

\begin{proof}
Let $[n]=ABC$ be a tripartition of the qubits such that $|A|=1$ and $B$ is a square shielding region with side length $L$, centred at $A$ (see \cref{fig:abc}). Here $L$ is chosen so that $|C|\neq 0$. Note that due to the 2D geometry, this implies $|C|=\Omega(\sqrt{n})$.

Let $U$ be a random coarse-grained universal circuit with $\tau\geq \sqrt{1000\log_2(m)}$. Plugging the upper bound from \cref{thm:expected_size_of_S} into \cref{thm:anticonc} we get 
\[
\mathbb{E}_U \mathbb{E}_{x} \mathrm{Tr}(\rho(x)^2)\leq \frac{1}{2}+O(1/\mathrm{poly}(m))=\frac{1}{2}+O(1/\mathrm{poly}(n)).
\]
Taking $n$ sufficiently large we can ensure the RHS is less than any constant $c\in (1/2,1)$.

If we replace $U$ with a random coarse-grained Clifford circuit, then we can use \cref{thm:expected_size_of_S} along with \cref{lem:post-measurement_criterion_finite} and Markov's inequality to arrive at the same conclusion. 
\end{proof}
\subsection{Long-range tripartite MIE}

We have shown that random coarse-grained universal or Clifford circuits have the long-range MIE property (\cref{prop:lrmie}), as a consequence of the subsystem anticoncentration expressed in \cref{thm:expected_size_of_S}.  Below we specialize to random coarse-grained Clifford circuits, and we show that \cref{thm:expected_size_of_S} also implies long-range tripartite MIE. As discussed in the Introduction, this gives rise to an unconditional quantum advantage.

 We first review some facts about entanglement in stabilizer states.  For any graph $G = (V,E)$, the $\abs{V}$-vertex graph state $\ket{\Phi_G}$ is defined to be the stabilizer state with stabilizer group generated by 
\begin{align}
    g_v = X_v \left(\prod_{u : (u,v) \in E} Z_u\right).
\end{align}
It is known that any (finite dimensional) stabilizer state is locally Clifford equivalent to a graph state~\cite{van2004graphical}. 

The following lemma shows that GHZ-type entanglement can always be produced by making computational basis measurements on a connected graph state with sufficiently many vertices.

\begin{lemma}
\label{lem:producing_GHZ_type_entanglement}
Let $G = (V,E)$ be a connected graph with at least 3 vertices and let $\ket{\Phi_G}$ be the associated graph state. Then there exist indices $h,i,j \in V$ such that, for any $x \in \{0,1\}^{\abs{V} - 3}$, the state 
\begin{align}
    \ket{\Phi_{h,i,j}} 
    = \bra{x}_{V \backslash \{h,i,j\}} \ket{\Phi_G}
\end{align}
is locally Clifford equivalent to a GHZ state (up to normalization).
\end{lemma}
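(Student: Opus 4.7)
The plan is to reduce the lemma to a statement about three-vertex graph states by invoking the standard correspondence between $Z$-basis measurements on graph states and vertex deletion, and then to verify by hand that every connected three-vertex graph state is locally Clifford equivalent to the GHZ state.

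\textbf{Step 1 (Measurement as vertex deletion).} Writing $\ket{\Phi_G} = 2^{-\abs{V}/2}\sum_x (-1)^{\sum_{\{u,w\}\in E} x_u x_w} \ket{x}$ and projecting onto $x_v = b$ yields the identity
\[
\bra{b}_v \ket{\Phi_G} \;=\; \tfrac{1}{\sqrt{2}}\,\Bigl(\prod_{u \in N(v)} Z_u^{\,b}\Bigr) \ket{\Phi_{G\setminus v}}.
\]
Iterating this across all vertices in $V \setminus \{h,i,j\}$ shows that for every outcome $x$, the unnormalized state $\bra{x}_{V \setminus \{h,i,j\}}\ket{\Phi_G}$ equals, up to a global scalar, a tensor product of single-qubit Pauli $Z$ operators on $\{h,i,j\}$ (with signs determined by $x$ and the edges of $G$) acting on the graph state $\ket{\Phi_{G[\{h,i,j\}]}}$ on the induced subgraph.

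\textbf{Step 2 (Choice of triple).} Since $G$ is connected with $\abs{V} \geq 3$, $G$ contains a path of length two: there exist distinct $h,i,j \in V$ with $\{h,i\},\{i,j\} \in E$. I fix this triple. The induced subgraph $G[\{h,i,j\}]$ is then either a two-edge path (if $\{h,j\} \notin E$) or a triangle (if $\{h,j\} \in E$).

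\textbf{Step 3 (Equivalence to GHZ).} I will verify directly that both three-vertex graph states are locally Clifford equivalent to the three-qubit GHZ state. For the path $h$--$i$--$j$, conjugating the canonical generators $\{X_hZ_i,\,Z_hX_iZ_j,\,Z_iX_j\}$ by Hadamards on the two leaves $h$ and $j$ transforms the stabilizer group into $\{Z_hZ_i,\,X_hX_iX_j,\,Z_iZ_j\}$, which stabilizes the GHZ state. For the triangle, a single local complementation at any vertex produces the path graph, reducing to the previous case. Combining with the Pauli correction from Step~1 (which is itself a local Clifford on $\{h,i,j\}$) proves the claim for every outcome $x$. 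The only subtlety---rather than a real obstacle---is tracking the outcome-dependent $Z$ corrections in Step~1 so that the conclusion holds \emph{uniformly} in $x$ rather than merely on average; since these corrections are supported on $\{h,i,j\}$, they are absorbed into the local Clifford equivalence without difficulty.
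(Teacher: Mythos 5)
Your proof takes essentially the same route as the paper: treat computational-basis measurement on a graph state as vertex deletion with local $Z$ corrections, pick a triple whose induced subgraph is connected, and invoke the fact that every connected three-vertex graph state is locally Clifford equivalent to GHZ. You fill in two small details the paper leaves implicit---exhibiting a length-two path $h$--$i$--$j$ as the connected triple (which exists since any connected graph on at least three vertices has a vertex of degree at least two) and carrying out the stabilizer computation for the path and triangle---both of which are correct.
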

\begin{proof}
The post-measurement state after measuring a qubit $v$ of a graph state $\Phi_G$ in the computational basis and obtaining outcome $y\in \{0,1\}$ is
\[
\bigg(\prod_{w:\{v,w\}\in E(G)} Z^y_w\bigg)|\Phi_{G(V\setminus\{v\})}\rangle 
\]
where $G(V\setminus\{v\})$ is the induced subgraph of $G$ on all $V\setminus\{v\}$,  obtained from $G$ by removing vertex $v$ and all edges incident to it. Starting from a connected graph $G$ with at least three vertices, we choose $3$ vertices $\Omega=\{h,i,j\}$ such that the subgraph $G(\Omega)$ of $G$ induced by $\Omega$ is a connected three-vertex graph. Using the above fact about computational basis measurement, we see that after measuring all vertices in $V\setminus\{h,i,j\}$ the postmeasurement state is locally Clifford equivalent to the graph state $|\Phi_{G(\Omega)}\rangle$ where $G(\Omega)$ is a connected three-vertex graph. Finally, we use the well-known fact that any 3 vertex connected graph state is locally Clifford equivalent to a GHZ state (this can be confirmed by a direct calculation).
\end{proof}

Now consider a random coarse-grained Clifford circuit with $\tau\geq \sqrt{1000\log_2(m)}$ as in \cref{thm:expected_size_of_S}. In this setting, a straightforward consequence of \cref{lem:post-measurement_criterion_finite,lem:S_bound_course_grained} is that measuring all but two output qubits of the randomly chosen circuit in the computational basis produces a two-qubit entangled stabilizer state on the unmeasured qubits with probability at least $1/4 - \exp(-\Omega(\tau^2))$. The next theorem shows that tripartite GHZ-type entanglement is also ubiquitous.

\begin{theorem}[Formal version of \cref{thm:mie_coarsegrained}, part 2]
Random coarse-grained Clifford circuits with $\tau\geq \sqrt{1000\log_2(m)}$ satisfy the long-range tripartite MIE property (\cref{prop:ghz}). In particular, let $D$ be an $n$-qubit random coarse-grained Clifford circuit, with $\tau\geq \sqrt{1000\log_2(m)}$. Let $\{h,i,j\}$ be a uniformly random triple of qubits. Then $\{h,i,j\}$ exhibit GHZ-type MIE with probability at least $0.005$ over the random choice of $D$ and the random choice of $\{h,i,j\}$.
\end{theorem}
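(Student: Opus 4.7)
The plan is to reduce the GHZ-type MIE event to a stabilizer non-existence condition, apply the bound on $\mathbb{E}_D[|S|]$ from \cref{thm:expected_size_of_S}, and exploit a structural identity for 3-qubit pure stabilizer states. For each $v\in\{h,i,j\}$, let $E_v$ denote the event that the 1-qubit marginal $\rho_v$ of the post-measurement state $\rho_{hij}$ is pure. By \cref{lem:post-measurement_criterion_finite} applied with $A=\{v\}$, $B=[n]\setminus\{h,i,j\}$, $C=\{h,i,j\}\setminus v$, the event $E_v$ is equivalent to $|S^{(v)}|\ge 1$, and a 3-qubit pure stabilizer state is GHZ-type if and only if none of $E_h, E_i, E_j$ occurs. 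A key structural fact---true for any 3-qubit pure state---is that the number of pure 1-qubit marginals always lies in $\{0,1,3\}$, corresponding respectively to GHZ-type, Bell-plus-product, and fully product states; in particular, $E_v\cap E_w\subseteq\{\rho_{hij}\text{ is product}\}$ for distinct $v\ne w$. Inclusion--exclusion then gives
\[
\Pr[\rho_{hij}\text{ GHZ-type}] \;=\; 1 - \sum_v \Pr[E_v] + 2\,\Pr[\rho_{hij}\text{ product}].
\]

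Applying \cref{thm:expected_size_of_S} with $|C|=2$ together with Markov's inequality gives $\Pr[E_v]\le \tfrac{3}{8}+O(2^{-\tau^2/100})$, so $\sum_v\Pr[E_v]\le \tfrac{9}{8}+O(2^{-\tau^2/100})$. Substituting into the display above shows that it suffices to establish a matching lower bound $\Pr[\rho_{hij}\text{ product}]\ge c'$ for some constant $c'>\tfrac{1}{16}$, since this yields $\Pr[\rho_{hij}\text{ GHZ-type}]\ge 2c'-\tfrac{1}{8}-O(2^{-\tau^2/100})$, which can be made at least $0.005$ for $\tau$ sufficiently large. To obtain this lower bound I would perform a cross-moment analysis of $\mathbb{E}_D[|S^{(h)}|\,|S^{(i)}|\,|S^{(j)}|]$ using that each second-layer gate is a uniformly random Clifford on $\tau^2$ qubits, together with the combinatorics of the two-layer coarse-grained architecture, and then convert the moment estimate into a probabilistic lower bound via a Paley--Zygmund-style inequality.

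The main obstacle is this last step. The expected-count bounds underlying \cref{thm:expected_size_of_S} are inherently one-sided upper bounds and do not directly certify that product-type stabilizer triples arise with positive probability; establishing a lower bound on $\Pr[\rho_{hij}\text{ product}]$ therefore requires managing the correlations between the three events $E_h,E_i,E_j$ that are induced by shared $W$-blocks and overlapping first-layer $V$-blocks. This forces a genuinely higher-moment calculation than the second-moment-level argument that suffices for \cref{thm:expected_size_of_S}, and I expect it to be the technical heart of the proof.
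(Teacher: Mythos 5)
Your inclusion--exclusion identity is correct, as is the structural fact that a three-qubit pure state has $0$, $1$, or $3$ pure single-qubit marginals, so that
\[
\Pr[\rho_{hij}\ \text{GHZ-type}] \;=\; 1 - \sum_{v\in\{h,i,j\}}\Pr[E_v] + 2\,\Pr[\rho_{hij}\ \text{product}].
\]
But the proposal leaves a genuine gap at exactly the point you flag, and I do not think the gap is fillable by the route you sketch. The difficulty is structural: when you apply \cref{lem:post-measurement_criterion_finite} with $A=\{v\}$ and $C=\{h,i,j\}\setminus\{v\}$ you have $|C|=2$, so \cref{lem:S_bound_course_grained} only gives $\mathbb{E}_D[|S^{(v)}|]\le 3/8+O(2^{-\tau^2/100})$, and the union bound over the three qubits is trivial ($\sum_v\Pr[E_v]\le 9/8>1$). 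Your identity then forces you to \emph{lower} bound $\Pr[\text{product}]$ by more than $1/16$, but everything in the paper's anticoncentration machinery (and intuition) points the other way: strong anticoncentration makes the post-measurement state more entangled, not more likely to be a tensor product. You would be trying to establish a lower bound on a probability that is morally small. A Paley--Zygmund-style argument would require a matching \emph{upper} bound on a second-moment quantity, but the available estimates (\cref{lem:coarse_grained_prob_Ztype}, \cref{lem:expected_size_S_fixed_l}) are all one-sided and aimed at the first moment of $|S|$; getting control of the correlations among $E_h,E_i,E_j$ that share second-layer $W$-blocks would be an entirely new analysis, not a corollary of \cref{thm:expected_size_of_S}.

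The paper sidesteps this by not working with a fixed triple at all. It fixes a uniformly random $5$-qubit set $V$ and applies \cref{lem:post-measurement_criterion_finite} with $A=\{v\}$, $C=V\setminus\{v\}$, $|C|=4$, so $\mathbb{E}_D[|S^{(v)}|]\le 3/2^5+O(2^{-\tau^2/100})<0.1$, and a union bound over the five qubits of $V$ gives that with probability at least $1/2$ \emph{every} $v\in V$ is entangled with $V\setminus\{v\}$ after measuring $[n]\setminus V$. Then the post-measurement state of $V$ is locally Clifford equivalent to a graph state with no isolated vertices, so its graph has a connected component of size $3$ or $5$; \cref{lem:producing_GHZ_type_entanglement} exhibits an explicit triple in that component which, after the two extra qubits are measured, yields a GHZ state. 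A correction factor of $1/9$ accounts for the unknown local Clifford frame (the two extra measurements need to align with the graph-state basis, which happens with probability $(1/3)^2$), giving a ``good triple'' in every $V$ with GHZ probability $\ge 1/18>0.05$. Since a uniformly random triple can be sampled by first picking a uniform $V$ and then a uniform triple inside $V$, the final probability is at least $0.05/\binom{5}{3}=0.005$. The $5$-qubit set is not a technicality: it is what makes $|C|=4$, which in turn is what makes $\mathbb{E}[|S|]$ small enough for a union bound to be useful, and it buys the freedom to \emph{choose} the right triple from the connected component rather than having to certify GHZ for a triple fixed in advance.

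So: your decomposition is a valid identity, but the route you propose for closing it aims at the wrong quantity and, more fundamentally, does not exploit the freedom to choose the triple that the theorem statement permits. You should look for an argument that averages over the triple rather than trying to prove a per-triple bound.
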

\begin{proof}
Let $D$ be an $n$-qubit random coarse-grained Clifford circuit with $\tau\geq \sqrt{1000\log_2(m)}$.

Let us say that $\{h,i,j\}\subseteq [n]$ is a \textit{good triple}, iff qubits  $\{h,i,j\}$ share GHZ-type entanglement after all other qubits have been measured with probability at least $0.05$ over the random choice of $D$.

Below we prove the following: for any set of qubits $V\subseteq [n]$ with $\abs{V} = 5$, there exists a good triple $\{h,i,j\}\subseteq V$.

The theorem then follows from this statement. To see this, consider the following procedure. First, choose a uniformly random subset of qubits $V\subseteq [n]$ of size $|V|=5$, and then choose a uniformly random triple $\{h,i,j\}\subseteq V$. By the above, the probability that we are lucky and choose a good triple is at least $1/\binom{5}{3}=1/10$.

Moreover, clearly this procedure samples a triple $\{h,i,j\}$ from the uniform distribution over all $\binom{n}{3}$ triples. 

It remains to show that for any set of qubits $V\subseteq [n]$ with $\abs{V} = 5$, there exists a good triple $\{h,i,j\}\subseteq V$.

We first consider the state of the qubits in $
V$ after the circuit $D$ is applied and all other qubits are measured in the computational basis. For any qubit $v \in V$, \cref{lem:post-measurement_criterion_finite,lem:S_bound_course_grained} give that there is no post-measurement entanglement between qubit $v$ and qubits in the set $V\backslash \{v\}$ with probability at most 
\[
3/2^5 + 2\cdot 2^{-\tau^2/100}\leq 3/2^5 + 2\cdot 2^{-10}< 0.1.
\]
Then the union bound gives that there exists a qubit $v' \in V$ with no post-measurement entanglement between $v'$ and the qubits $V\backslash \{v'\}$ with probability at most 
\begin{align}
\abs{V}\cdot 0.1 < 0.5.\nonumber
\end{align}
Equivalently, there is post measurement entanglement between each qubit $v' \in V$ and qubits $V\backslash\{v'\}$ with probability at least $1/2$. But then the post measurement state of the qubits $V$ is locally Clifford equivalent to a graph state $\ket{\Phi_G}$ corresponding to a 5 vertex graph with no isolated vertices (this is because there is no entanglement between qubits corresponding to disconnected vertices of a graph state). It follows that the graph $G$ must have either a 3 or 5 vertex connected component. If $G$ has a 3 vertex connected component the qubits corresponding to this component are in a state locally Clifford equivalent to a GHZ state, and we are done. Otherwise, the graph state has a 5 vertex connected component. But then by \cref{lem:producing_GHZ_type_entanglement} there are vertices $h,i,j \in V$ such that, for any $x \in \{0,1\}^2$ we have that
\begin{align}
    \ket{\Phi_{hij}} = \bra{x}_{V\backslash \{h,i,j\}} \ket{\Phi_G} \label{eq:corse_grained_GHZ_1}
\end{align}
is locally Clifford equivalent to a GHZ state. Now let $\ket{\Psi_V}$ denote the post measurement state of the qubits in the set $V$. Since this state is locally Clifford equivalent to $\ket{\Phi_G}$ we have 
\begin{align}
    \ket{\Psi_{V}} = \bigotimes_{v \in V} C_v \ket{\Phi_G}\nonumber
\end{align}
where each $C_v$ is a $1$-qubit Clifford unitary. Furthermore, since the distribution of coarse-grained Clifford circuits is unchanged under local Clifford operations, we can assume the Clifford unitaries $C_v$ are chosen uniformly at random. 
Then the post measurement state of qubits $h,i,j$ when the remaining two qubits of $V$ are measured in the computational basis and outcome $x\in \{0,1\}^2$ is observed is given by
\begin{align}
    \ket{\Psi_{h,i,j}} &= \bra{x}_{V\backslash \{h,i,j\}} \bigotimes_{v \in V} C_v \ket{\Phi_G}. 
\end{align}
Comparing with \cref{eq:corse_grained_GHZ_1} we see this is locally Clifford equivalent to a GHZ state provided the state 
\begin{align}
\bigotimes_{v \in V\backslash \{h,i,j\}} C_v^{\dagger}  \ket{x}_{V\backslash \{h,i,j\}}\nonumber
\end{align}
is a computational basis state. But since we can take the $C_v$ to be random Cliffords this occurs with probability exactly $1/9$. Putting this all together we see that $\{h,i,j\}$ is a good triple: the state $\ket{\Psi_{h,i,j}}$ is locally Clifford to a GHZ state with probability at least 
$\frac{1}{9}\cdot \frac{1}{2}\geq 0.05$
over the random choice of $D$.
\end{proof}

\subsection{Approximate implementation of coarse-grained circuits with two-qubit gates}

Lastly, we demonstrate that coarse-grained circuits can be approximately implemented using two-qubit gates, instead of gates that act on blocks of $\tau \times \tau$ qubits.
We shall see that this yields a family of $O(\log(n))$-depth quantum circuits composed of Haar-random two-qubit gates that satisfy the long-range MIE property.

Our construction replaces each $\tau \times \tau$ Haar random gate in a random coarse-grained universal circuit by an $\varepsilon$-approximate unitary 2-design for a sufficiently small approximation error $\varepsilon=2^{-\Omega(\tau^2)}$.
Such unitary $2$-designs can, for instance, be implemented using a 1D brickwork quantum circuit of depth $O(\tau^2)$ composed of Haar random two-qubit gates which is arranged in a snake-like pattern (or more formally, along a Hamiltonian path) on the square of $\tau \times \tau$ qubits \cite{brandao2016local,hunter2019unitary}.
This results in a distribution of $n$-qubit random quantum circuits with an overall depth of $O(\tau^2)$. We denote these compiled random coarse-grained circuits by $\tilde{\mathcal{U}}$.
We can analogously define $\tilde{\mathcal{D}}$ to be the family of random Clifford circuits obtained by replacing the Haar random two-qubit gates in  $\tilde{\mathcal{U}}$ by random two-qubit Clifford gates.
\begin{theorem} [Long-range MIE in compiled coarse-grained circuits with $2$-qubit gates]\label{thm:lrmie-compiled}
The family of complied random coarse-grained circuits $\tilde{\mathcal{U}}$ satisfies the long-range MIE property when $\tau\geq \sqrt{1000\log_2(m)}$ and $n$ sufficiently large. The same statement holds for compiled random coarse-grained Clifford circuits $\tilde{\mathcal{D}}$.
\end{theorem}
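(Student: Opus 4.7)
The plan is to adapt the proof of \cref{thm:mie-coarsegrained-formal} to the compiled setting by showing that its analysis depends on the $\tau^2$-qubit block gates only through their 2-design statistics, which are reproduced up to small additive error by the approximate 2-designs used in the 1D brickwork construction. Recall that the proof of \cref{thm:mie-coarsegrained-formal} combines the Pauli-propagation estimate \cref{lem:coarse_grained_prob_Ztype} with the first-moment bound of \cref{lem:S_bound_course_grained} on $\mathbb{E}_D[|S|]$, and then invokes either \cref{thm:anticonc} (for universal circuits) or \cref{lem:post-measurement_criterion_finite} combined with Markov's inequality (for Clifford circuits). Since \cref{lem:coarse_grained_prob_Ztype} is controlled by the fact that a uniformly random $\tau^2$-qubit Clifford conjugates any fixed nonidentity Pauli into a uniformly random nonidentity Pauli, an $\varepsilon$-approximate 2-design reproduces the same statistic with error $O(\varepsilon)$ per block.

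First I would prove a perturbed version of \cref{lem:coarse_grained_prob_Ztype} for the compiled circuit $\tilde D = \tilde W\tilde V$. The event that $\tilde D^\dagger P \tilde D$ is $Z$-type factorises into independent events across the blocks of $\tilde V$ once the randomness of $\tilde W$ is fixed, and each factor equals its exact-Clifford value up to a multiplicative $1+O(2^{\tau^2}\varepsilon)$ correction. Taking $\varepsilon = 2^{-c\tau^2}$ with $c$ a sufficiently large absolute constant keeps the total correction at most $2$ across the at most $m^2$ possible contributing blocks, so the stated bound of \cref{lem:coarse_grained_prob_Ztype} survives up to an inessential constant in the exponent. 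Such an $\varepsilon$ is achieved by a 1D brickwork of depth $O(\tau^2)$ on $\tau^2$ qubits \cite{brandao2016local,hunter2019unitary}, with the hidden constant in the depth chosen to match $c$.

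Second I would re-run the first-moment argument of \cref{lem:expected_size_S_fixed_l,lem:S_bound_course_grained} with the perturbed Pauli-propagation bound. Apart from enlarging the constant in the hypothesis $\tau \ge \sqrt{1000 \log_2(m)}$, the conclusion $\mathbb{E}_{\tilde D}[|S|] \le 3\cdot 2^{-(|C|+1)} + 2\cdot 2^{-\Omega(\tau^2)}$ carries over to $\tilde D \in \tilde{\mathcal{D}}$. The remainder of the proof is then identical to that of \cref{thm:mie-coarsegrained-formal}: \cref{lem:post-measurement_criterion_finite} with Markov's inequality gives long-range MIE for $\tilde{\mathcal{D}}$, and \cref{thm:anticonc} together with the block-level 2-design equivalence between $\tilde{\mathcal{U}}$ and $\tilde{\mathcal{D}}$ (which reduces the $\chi$-bound \cref{eq:anti} in the universal case to the Clifford estimate just obtained) yields the same conclusion for $\tilde{\mathcal{U}}$. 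The main technical obstacle is accounting for how the per-block 2-design errors accumulate multiplicatively across the cluster in the Pauli-propagation probability; this forces $\varepsilon \cdot 2^{\tau^2}$ to be summable against the cluster size $O(m^2)$, and is resolved by taking the implicit constant in the brickwork depth $O(\tau^2)$ large enough so that $c$ dominates the ratio $\tau^2 / \log m$ guaranteed by hypothesis.
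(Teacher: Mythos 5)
Your proposal matches the paper's strategy: prove a perturbed version of \cref{lem:coarse_grained_prob_Ztype} for the compiled circuit by exploiting the $\varepsilon$-approximate 2-design property of each block with $\varepsilon = 2^{-\Theta(\tau^2)}$, show the per-block errors accumulate controllably across the at most $m^2$ blocks, carry the resulting bound through \cref{lem:expected_size_S_fixed_l,lem:S_bound_course_grained}, and close via \cref{thm:anticonc} (for $\tilde{\mathcal{U}}$, applied at the two-qubit-gate level since 2-qubit Cliffords form an exact 2-design) and \cref{lem:post-measurement_criterion_finite} with Markov (for $\tilde{\mathcal{D}}$). This is essentially the same argument the paper gives in its appendix on compiling the coarse-grained architecture.
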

\begin{proof}
We begin by applying \Cref{thm:anticonc}, which allows us to reduce the analysis of the long-range MIE property in $\tilde{\mathcal{U}}$ to that in $\tilde{\mathcal{D}}$. 
To establish the long-range MIE property in $\tilde{\mathcal{D}}$, we focus on bounding the expected size of the set$~S$. 
In the Appendix, we demonstrate that the family of random Clifford circuits $\tilde{\mathcal{D}}$ satisfies the bound in \cref{eq:expected_size_of_S} of \Cref{lem:S_bound_course_grained}. 
We can then employ the same reasoning as in the proof of \Cref{thm:mie-coarsegrained-formal} to establish the long-range MIE property for circuits drawn from $\tilde{\mathcal{U}}$, as stated in \Cref{prop:lrmie}.
\end{proof}

\vspace{1em}
\emph{Note added:} As we finalized this paper, we became aware of concurrent works \cite{schuster2024random, laracuente2024approximate} that also study the 2D coarse-grained architecture introduced here. While we focus on establishing quantum advantage,  long-range MIE, and anti-concentration properties, these works demonstrate that with a suitable choice of $\log(n)$-sized coarse-grained gates, a similar architecture can also exhibit the unitary $k$-design property. 

\section{Acknowledgments}
DG acknowledges the support of
 the Natural Sciences and Engineering Research Council of Canada through grant number RGPIN-2019-04198. DG is a fellow of the Canadian Institute for
Advanced Research, in the quantum information science program. Research at Perimeter
Institute is supported in part by the Government of Canada through the Department of
Innovation, Science and Economic Development Canada and by the Province of Ontario
through the Ministry of Colleges and Universities.
MS is supported by AWS Quantum Postdoctoral Scholarship and funding from the National Science Foundation NSF CAREER award CCF-2048204.
Institute for Quantum Information and Matter is an NSF Physics Frontiers Center. 

\bibliographystyle{unsrt}
\bibliography{main}

\appendix

\onecolumngrid

\section{Analysis of \cref{alg:sim}}
Here we prove \cref{thm:simalg}. The proof is obtained by combining the fidelity bound from \cref{lem:purity} with the robustness property of the gate-by-gate algorithm established in Ref. \cite{bravyi2022simulate}. Our setting differs slightly from the one considered there (this is due to the fact that $\tilde{\rho}_t(\omega_B)$ is not a pure state in general); below we adapt the proof to our setting.   

The \textit{gate-by-gate} algorithm of Ref.~\cite{bravyi2022simulate} is obtained from \cref{alg:sim} by making a small but consequential replacement: replacing $\tilde{q}_t(y|x_{B(t)})$ with $q_t(y|x_{B(t)}x_{C(t)})$ in line 9. We shall analyze \cref{alg:sim} by comparing it to the gate-by-gate algorithm and using the performance guarantee of the latter algorithm. Specifically, in Ref.~\cite{bravyi2022simulate} it is shown that the binary string $x$ at the end of the $t$th iteration of the for loop of the gate-by-gate algorithm is distributed according to 
\[
P_t(x)\equiv |\langle x|U_tU_{t-1}\ldots U_1|0^n\rangle|^2 \qquad x\in \{0,1\}^n.
\]

Write $W_t(x)$ for the probability of sampling $x\in \{0,1\}^n$ at the end of the $t$th iteration of the for loop of \cref{alg:sim}.

First suppose that $t\in \Gamma$, so that $U_t$ is a single-qubit gate. Let $A=A(t), B=B(t)$, and $C=C(t)$. Then
\begin{align}
W_t(x)&=\sum_{y:y_{BC}=x_{BC}} W_{t-1}(y) \tilde{q}_t(x_A|x_B)\nonumber\\
&=\sum_{y:y_{BC}=x_{BC}} \left(W_{t-1}(y)-P_{t-1}(y)\right) \tilde{q}_t(x_A|x_B)+\sum_{y:y_{BC}=x_{BC}}P_{t-1}(y) \tilde{q}_t(x_A|x_B)\nonumber\\
&=\sum_{y:y_{BC}=x_{BC}} \left(W_{t-1}(y)-P_{t-1}(y)\right) \tilde{q}_t(x_A|x_B)+\sum_{y:y_{BC}=x_{BC}}P_{t-1}(y) \left(\tilde{q}_t(x_A|x_B)-q_t(x_A|x_Bx_C)\right)+P_t(x),
\label{eq:differences}
\end{align}
where we used the fact that $P_t(x)=\sum_{y:y_{BC}=x_{BC}} P_{t-1}(y)q_t(x_A|x_Bx_C)$. Now using \cref{eq:differences}, summing over $x\in \{0,1\}^{n}$, and using the triangle inequality, we get
\begin{align}
\|W_t-P_t\|_1&\leq \sum_{x_{BC}\in \{0,1\}^{BC}}\sum_{y:y_{BC}=x_{BC}} |\left(W_{t-1}(y)-P_{t-1}(y)\right)|\sum_{x_A\in \{0,1\}^{|A|}}\tilde{q}_t(x_A|x_B)\nonumber\\
&+\sum_{x_{BC}\in \{0,1\}^{BC}}\sum_{y:y_{BC}=x_{BC}}P_{t-1}(y) \sum_{x_{A}\in \{0,1\}^{|A|}}\left|\tilde{q}_t(x_A|x_B)-q_t(x_A|x_Bx_C)\right|\nonumber\\
 &= \|W_{t-1}-P_{t-1}\|_1+\mathbb{E}_{z\sim P_{t-1,BC}}\left(\|q_t(\cdot|z_Bz_C) -\tilde{q}_t(\cdot |z_B)\|_1\right).
\label{eq:wtpt}
\end{align}

Using \cref{eq:q,eq:qtil} and the relationship between trace distance and quantum fidelity, we have
\begin{equation}
\|q_t(\cdot|z_Bz_C) -\tilde{q}_t(\cdot |z_B)\|_1\leq \| \rho_t(z_Bz_C)-\tilde{\rho}_t(z_B)\|_1\leq  2\sqrt{1-F(\rho_t(z_Bz_C), \tilde{\rho}_t(z_B))}.
\label{eq:tracedist}
\end{equation}
Plugging \cref{eq:tracedist} into \cref{eq:wtpt} and using the fact that $\mathbb{E}[\sqrt{X}]\leq (\mathbb{E}[X])^{1/2}$ for any non-negative random variable $X$ gives
\begin{align}
\|W_t-P_t\|_1&\leq  \|W_{t-1}-P_{t-1}\|_1+2\left(\mathbb{E}_{z\sim P_{t-1,BC}} \left[1-F(\rho_t(z_Bz_C), \tilde{\rho}_t(z_B))\right]\right)^{1/2}\nonumber\\\
&\leq \|W_{t-1}-P_{t-1}\|_1+2\left(1-\mathbb{E}_{z_B\sim P_{t-1,B} \left[\mathrm{Tr}(\tilde{\rho}_t(z_B)^2)\right]}\right)^{1/2} \qquad t\in S.
\label{eq:ts}
\end{align}
where we used \cref{lem:purity} and the fact that $P_{t-1,BC}=P_{t,BC}$ (since the unitary gate $U_t$ acts only on qubit $A$ and does not change the marginal distribution on $BC$). 

Next suppose $t\notin \Gamma$, so $U_t=\mathrm{CNOT}_{ij}$ for some pair of qubits $i\neq j$. Let $f_t(x)$ be the function applied to $x$ in line $4$ of \cref{alg:sim}; that is, it updates $x$ by replacing $x_j\leftarrow x_i\oplus x_j$. Then
\[
W_t(x)=W_{t-1}(f_t(x)) \quad \text{ and } \quad P_t(x)=P_{t-1}(f_t(x)),
\]
so in this case 
\begin{equation}
\|W_t-P_t\|_1=\|W_{t-1}-P_{t-1}\|_1 \qquad t\notin \Gamma.
\label{eq:tnots}
\end{equation}

Applying \cref{eq:ts,eq:tnots} for $t=1,2,\ldots, m$ and using the fact that $P_0=W_0$, we arrive at
\[
\|W_m-P_m\|_1\leq 2\sum_{t\in \Gamma} \left(1-\Exp_{z\sim P_{t-1,B}} \left[\mathrm{Tr}(\tilde{\rho}_t(z)^2)\right]\right)^{1/2}=2\sum_{t\in \Gamma} \left(1-\Exp_{z\sim P_{t,B}} \left[\mathrm{Tr}(\tilde{\rho}_t(z)^2)\right]\right)^{1/2}
\]
where in the last equality we used the fact that $U_t$ is a single-qubit gate acting on qubit $A$, so $P_{t-1,B}=P_{t,B}$. This completes the proof of \cref{thm:simalg}.

\section{Proof of \cref{lem:coarse_grained_prob_Ztype}}\label{app:compiledprob_ZType}
\noindent Before proving this theorem we require a preliminary result about the effect of conjugating Pauli operators through random Clifford circuits. 

\begin{lemma}

\label{lem:random Paulis and Cliffords}
Let $P_1, P_2,\ldots, P_r$ be uniformly random non-identity Pauli strings of various lengths, chosen so that $P_1 \otimes P_2 \otimes \cdots \otimes P_r$ defines a Pauli string of length $n$. Denote this Pauli string $P_{[n]}$ and for any $Q \subseteq [n]$ let $P_Q$ denote the corresponding truncation of this string. Let $C$ be a uniformly random $t$-qubit Clifford operator. Then
\begin{enumerate}[label = (\alph*), ref = (\alph*)]
    \item \label{item:random Pauli and Clifford fully contained Pauli} For any $Q \subseteq [n]$ with $\abs{Q} = t$: 
    \begin{align}
        \Pr\left[ C^\dagger P_Q C \text{ is $Z$-type}\right] \leq \left(\frac{1}{2}\right)^t.
    \end{align}
    \item \label{item:random Pauli and Clifford partially contained Pauli}
    For any $Q \subseteq [n]$ with $\abs{Q} = k < t$:
    \begin{align}
    \Pr\left[ C^\dagger \left( P_Q \otimes I ^{\otimes t-k}\right) C \text{ is $Z$-type}\right] \leq \left(\frac{1}{4}\right)^k + \left(\frac{1}{2}\right)^t.
    \end{align}
\end{enumerate}
Moreover, these bounds still hold even after arbitrary conditioning on the value of Paulis in the complimentary string $P_{[n] \backslash Q}$. 
\end{lemma}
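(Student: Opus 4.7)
The plan is to use a single structural fact about the Clifford group: its conjugation action on the Pauli group is transitive (up to signs) on non-identity elements. Concretely, for any fixed non-identity $t$-qubit Pauli $P^{\star}$ and a uniformly random $t$-qubit Clifford $C$, the operator $C^\dagger P^{\star} C$ is uniformly distributed over the $2(4^t-1)$ signed non-identity Paulis on $t$ qubits, of which exactly $2(2^t-1)$ are $Z$-type. This gives the basic estimate
\[
\Pr_C\bigl[C^\dagger P^{\star} C \text{ is $Z$-type}\bigr] \;=\; \frac{2^t-1}{4^t-1} \;\leq\; \frac{1}{2^t},\qquad P^{\star}\neq I.
\]
Since $C$ is drawn independently of the $P_i$'s, the strategy is to combine this bound with a short case analysis of the distribution of $P_Q$.

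For part (a), the intended setting is that $|Q|=t$ coincides with the support of a single block $P_i$, so that $P_Q=P_i$ is uniformly distributed over the non-identity $t$-qubit Paulis. Conditioning on $P_{[n]\setminus Q}$ only fixes the other independent blocks $P_j$ with $j\neq i$, so the distribution of $P_Q$ is unchanged under the conditioning. Averaging the displayed estimate over the uniform non-identity distribution of $P_Q$ yields the $(1/2)^t$ bound.

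For part (b), $Q$ has size $k<t$ and lies inside the support of a single block $P_i$. I would split according to whether $P_Q\otimes I^{\otimes t-k}$ equals the identity as a $t$-qubit Pauli: on the event $P_Q=I$ the conjugate is trivially $Z$-type, and on the complementary event the displayed estimate applies to give a $Z$-type probability at most $1/2^t$. Thus
\[
\Pr\bigl[C^\dagger(P_Q\otimes I^{\otimes t-k})C \text{ is $Z$-type}\bigr] \;\leq\; \Pr[P_Q=I] \;+\; \frac{1}{2^t}.
\]
It then remains to verify that $\Pr[P_Q=I]\leq 4^{-k}$ even after arbitrary conditioning on $P_{[n]\setminus Q}$. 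I would check two subcases based on the value of the outside truncation $P_i|_{\supp(P_i)\setminus Q}$: if it is conditioned to be non-identity, then the constraint ``$P_i\neq I$'' is already satisfied and $P_i|_Q$ becomes uniform over all $4^k$ Paulis on $Q$, giving $\Pr[P_Q=I]=4^{-k}$ exactly; if it is conditioned to be the identity, then the constraint forces $P_i|_Q$ to be uniform over the non-identity $k$-qubit Paulis, so $\Pr[P_Q=I]=0$. In both subcases the desired bound holds.

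\textbf{Main obstacle.} The delicate step is the ``moreover'' clause in part (b): the conditional distribution of $P_i|_Q$ depends on whether the outside truncation $P_i|_{\supp(P_i)\setminus Q}$ is identity or not, and both subcases must be verified. Beyond this bookkeeping, the entire argument reduces to the Clifford-orbit estimate displayed above, which is the only nontrivial input.
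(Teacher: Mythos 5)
Your proof handles only a restricted special case of the lemma and this gap matters for the application. You assume in part (a) that $Q$ coincides exactly with the support of a single block $P_i$, and in part (b) that $Q$ lies strictly inside one block. But the lemma is stated for \emph{arbitrary} $Q\subseteq[n]$, and in the place where it is used (the proof of \cref{lem:coarse_grained_prob_Ztype}), $Q$ is the support of a first-layer gate $V_i$, which—because the two layers of the coarse-grained circuit are offset by half a block—overlaps four distinct blocks $P_j$ (the random Paulis produced by the second-layer gates). So the case your proof addresses is not the one the paper needs.

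Two concrete consequences. First, when $Q$ spans several blocks, the truncation $P_Q$ can equal the identity with positive probability even though each $P_i$ is non-identity (each restriction $P_i|_{Q\cap\supp(P_i)}$ can independently come out identity). Your part (a) argument, which asserts that $P_Q$ is uniform over non-identity Paulis and then averages the orbit estimate, is therefore false in that setting; part (a) also needs the split into $P_Q=I$ and $P_Q\neq I$ that you use only in part (b)—and indeed the paper performs this split for (a) as well. Second, your key inequality $\Pr[P_Q=I]\leq 4^{-|Q|}$ under arbitrary conditioning is only argued block-by-block for a single block; you would need to observe that it factors across the blocks intersecting $Q$ (by independence of the $P_i$), multiply the per-block bounds $4^{-|Q\cap\supp(P_i)|}$, and still keep track of conditioning on the complementary string. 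This extension is doable in your style, but it is not in your write-up and is precisely where the content of the ``moreover'' clause lives. The paper instead proves the bound in one shot for all $Q$ by a Bayes comparison to the fully uniform ensemble $P_1',\dots,P_r'$ (allowing identity), observing that adding the ``each $P_i'\neq I$'' conditioning can only decrease $\Pr[P_Q=I]$; that is the cleaner route. Your Clifford-orbit estimate $\frac{2^t-1}{4^t-1}\leq 2^{-t}$ itself is correct and identical to the paper's.
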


\begin{proof}
We first argue that for any $Q$ the probability of $P_Q$ being identity is at most $4^{-\abs{Q}}$, even after conditioning on an arbitrary value for $P_{[n] \backslash Q}$. If the Pauli's $P_1, P_2,\ldots, P_r$ were uniformly random (including the possibility of being the identity) then each individual Pauli in $P_{[n]}$ would be chosen uniformly at random and this statement would be immediate, with the bound $4^{-\abs{Q}}$ holding exactly. That is, if we let $P_1', P_2', \ldots, P_r'$ be uniformly random Pauli strings which may be the identity we have 
\begin{align}
\Pr\left[\left(P_1' \otimes P_2' \otimes \cdots \otimes P_r'\right)_Q = I \; | \; \left(P_1' \otimes P_2' \otimes \cdots \otimes P_r'\right)_{[n] \backslash Q} = K' \right] = 4^{-\abs{Q}}
\end{align}
for any choice of $Q$ and Pauli string $K'$. But now we observe that further enforcing that each $P_i'$ is non-identity in the above equation can only increase the probability that $\left(P_1' \otimes P_2' \otimes \cdots \otimes P_r'\right)_Q$ is non-identity since, by Bayes' rule,
\begin{align*}
    &\frac{\Pr\left[\left(P_1' \otimes P_2' \otimes \cdots \otimes P_r'\right)_Q = I \; | \; \left(P_1' \otimes P_2' \otimes \cdots \otimes P_r'\right)_{[n] \backslash Q} = K' \textbf{ and } P_1', P_2', \ldots , P_r' \neq I \right]}{\Pr\left[\left(P_1' \otimes P_2' \otimes \cdots \otimes P_r'\right)_Q = I \; | \; \left(P_1' \otimes P_2' \otimes \cdots \otimes P_r'\right)_{[n] \backslash Q} = K' \right]} \\
    &\hspace{30pt}= \frac{\Pr\left[P_1', P_2', \ldots , P_r' \neq I  \; | \; \left(P_1' \otimes P_2' \otimes \cdots \otimes P_r'\right)_{[n] \backslash Q} = K' \textbf{ and } \left(P_1' \otimes P_2' \otimes \cdots \otimes P_r'\right)_Q = I\right]}{\Pr\left[P_1', P_2', \ldots , P_r' \neq I  \; | \; \left(P_1' \otimes P_2' \otimes \cdots \otimes P_r'\right)_{[n] \backslash Q} = K' \right]}
\end{align*}
and the quantity on the on the right hand side of the above equation is clearly bounded above by one. The statement follows.

A similar logic can be applied to prove statements $(a)$ and $(b)$ directly but, for concreteness, we will take a slightly more involved route. We start with $(a)$. The random Clifford $C$ will map a non-identity Pauli to a uniformly random non-identity Pauli on $t$ qubits. There are $4^t - 1$ such Paulis and $2^t - 1$ of them are $Z$-type, so
\begin{align}
    \Pr\left[ C^\dagger P_Q C \text{ is $Z$-type}\right] &= \Pr[P_Q \text{ is identity}] + \Pr[P_Q \text{ is non-identity}] \left(\frac{2^t - 1}{4^t - 1} \right) \nonumber\\
    &\leq 4^{-t} + (1 - 4^{-t})\left(\frac{2^t - 1}{4^t - 1} \right) \nonumber\\
    &= \frac{1}{4^t} + \frac{2^t - 1}{4^t} = \frac{1}{2^t},
\end{align}
where we used our upper bound on the probability of $P_Q$ begin the identity along with the fact that $\frac{2^t - 1}{4^t - 1} \leq 1$ to go from the first line to the second. 

To prove $(b)$ we begin with a similar expression and find
\begin{align}
    \Pr\left[ C^\dagger\left(P_Q\otimes I^{\otimes t-k}\right)C \text{ is $Z$-type}\right] &= \Pr[P_Q \text{ is identity}] + \Pr[P_Q \text{ is non-identity}] \left(\frac{2^t - 1}{4^t - 1} \right) \nonumber\\
    &\leq 4^{-k} + (1 - 4^{-k})\left(\frac{2^t - 1}{4^t - 1} \right) \nonumber\\
    &\leq 4^{-k} + \left(\frac{2^t - 1}{4^t - 1} \right) \leq 4^{-k} + 2^{-t},
\end{align}
as desired. 
\end{proof}

We now proceed to the proof of \cref{lem:coarse_grained_prob_Ztype}.

\begin{proof}[Proof (\cref{lem:coarse_grained_prob_Ztype})]
We write $D = W V$ where $V$ and $W$ denote the first and second layer of random Clifford circuits, respectively. We first consider the Pauli string obtained by conjugating $P$ by just the second layer of Cliffords, which we can write as $P'=W^\dagger P W$. This string consists of a random non-identity Pauli string acting on every $\tau \times \tau$ patch of qubits in the support of $\mathcal{G}(P)$, and is identity elsewhere. There are $a$ such patches by definition, and so $a \tau^2$ qubits on which this random Pauli string can potentially be non-identity. Let $Q'$ denote the set of all these qubits.

Now we consider the first layer of random Clifford unitaries. Recall from the definition of random coarse-grained circuits that we have $V = V_1 \otimes V_2 \otimes \cdots \otimes V_{(m-1)^2}$, where each $V_i$ is a random $\tau \times \tau$ Clifford. For any $1\leq i\leq (m-1)^2$ let $Q'_i = Q' \cap \supp(V_i)$ and let $P'_i$ be the restriction of the Pauli string $P'$ to the support of $V_i$. Finally let $Q_0' = Q' \backslash (\cup_i Q'_i)$ and $P'_0$ be the restriction of $P'$ to $Q'_0$. Also, as a notational convenience, let $V_0$ denote the identity gate acting on the qubits in $Q_0'$. (This is to accommodate qubits in $Q'$ which lie on the boundary of the circuit, and thus are not in the support of any $V_i$ for $i \geq 1$). We want to bound the quantity
\begin{align}
    \Pr[D^\dagger P D \text{ is $Z$-type}] &= \Pr\left[ V_i^\dagger P'_i V_i \text{ is $Z$-type for all $0 \leq i \leq (m-1)^2$}\right]\nonumber \\
    &= \prod_{i = 0}^{(m-1)^2} \Pr\left[ V_i^\dagger P'_i V_i \text{ is $Z$-type} \; | \; V_j^\dagger P'_j V_j \text{ is $Z$-type for all $0 \leq j < i $}\right]. \label{eq:zprob first layer decomp}
\end{align}
It is immediate from definitions that 
\begin{align}
\Pr[V_0^{\dagger} P_0' V_0 \text{ is $Z$-type}] = \Pr[P_0' \text{ is $Z$-type}] \leq 2^{-\abs{Q_0'}}.
\label{eq:V_0}
\end{align}
Then let $\mathcal{V}_\text{int}$ be the set of all $V_i$ with $0 < i \leq (m-1)^2$ whose support is contained entirely in $Q'$, and $\mathcal{V}_\text{perim}$ be the set of all $V_i$ ith $0 < i \leq (m-1)^2$ whose support overlaps partially with $Q'$. 
As an immediate consequence of \cref{lem:random Paulis and Cliffords}~\labelcref{item:random Pauli and Clifford fully contained Pauli} we see that for any $V_i \in \mathcal{V}_\text{int}$
\begin{align}
\Pr\left[ V_i^\dagger P'_i V_i \text{ is $Z$-type} \; | \; V_j^\dagger P'_j V_j \text{ is $Z$-type for all $j < i $}\right] \leq 2^{-\tau^2} = 2^{- \abs{Q_i'}}.
\label{eq:V_int}
\end{align}
Similarly, for any $V_i \in \mathcal{V}_\text{perim}$, \cref{lem:random Paulis and Cliffords}~\labelcref{item:random Pauli and Clifford partially contained Pauli} gives that 
\begin{align}
\Pr\left[ V_i^\dagger P'_i V_i \text{ is $Z$-type} \; | \; V_j^\dagger P'_j V_j \text{ is $Z$-type for all $j < i $}\right] \leq 4^{-\abs{Q_i'}} + 2^{-\tau^2}.
\end{align}
For any $V_i \in \mathcal{V}_\text{perim}$ we have that at least $1/4$ and at most $3/4$'s of the qubits in the support of $V_i$ overlap with $Q'$, and hence $\tau^2/4 \leq \abs{Q_i'} \leq 3\tau^2/4$. Then we can also write 
\begin{align}
    \Pr\left[ V_i^\dagger P'_i V_i \text{ is $Z$-type} \; | \; V_j^\dagger P'_j V_j \text{ is $Z$-type for all $j < i $}\right] &\leq 4^{-\abs{Q_i'}} + 2^{-4 \abs{Q_i'} / 3}\nonumber \\
    &= 2^{-\abs{Q_i'}} \left(2^{-\abs{Q_i'}} + 2^{-\abs{Q_i'}/3} \right)\nonumber \\
    &\leq 2^{-\abs{Q_i'} - (\tau^2/12) + 1}.
\label{eq:V_perim}
\end{align}
Combining \cref{eq:V_0,eq:V_int,eq:V_perim} gives 
\begin{align}
\Pr[D^\dagger PD\text{ is $Z$-type}]&\leq 2^{-\abs{Q_0'}} \left(\prod_{V_i \in \mathcal{V}_\text{int}} 2^{-\abs{Q_i'}}\right)\left(\prod_{V_j \in \mathcal{V}_\text{perim}} 2^{-\abs{Q_j} - \tau^2/12 + 1}\right)\nonumber\\
&= 2^{- \abs{\mathcal{V}_\text{perim}} (\tau^2/12 - 1) - \sum_{i=0}^{(m-1)^2} \abs{Q'_i}}\nonumber \\
&= 2^{- \abs{\mathcal{V}_\text{perim}} (\tau^2/12 - 1) - \abs{Q'}}.
\end{align}
We observed previously that $\abs{Q'} = a\tau^2$. Every gate $V_i\in\mathcal{V}_{\text{perim}}$ intersects with four edges, and in the worst case, all four of the edges could be perimeter edges. At the same time, every perimeter edge intersects with at least one gate in $\mathcal{V}_\text{perim}$. Therefore, $|\mathcal{V}_\text{perim}|\geq\frac{1}{4}l$. Inserting those two values into the equation above completes the proof. 

\end{proof}

\section{Compiling the coarse-grained architecture to random quantum circuits}
\label{sec:compiling}

Here we consider the setup of \Cref{thm:lrmie-compiled} which involves compiled random coarse-grained circuits $\tilde{\mathcal{U}}$ and their Clifford versions $\tilde{\mathcal{D}}$ composed of two-qubit gates. 
We show that an equivalent version of \cref{lem:coarse_grained_prob_Ztype} holds for $\tilde{\mathcal{D}}$ with sufficiently small $\varepsilon$. 
The gates in the non-compiled coarse-grained circuits act on blocks of $\tau \times \tau$ qubits. 
Define $t=\tau^2$. Let $\mathcal{D}_t$ denote the uniform distribution over $t$-qubit Clifford operators. 
Let $\tilde{\mathcal{D}}_t$ be a distribution of $t$-qubit random Clifford circuits forming an $\varepsilon$-approximate unitary $2$-design w.r.t the operator norm. The approximate $2$-design property of $\tilde{\mathcal{D}}_t$ in this notion implies that for every $2^t$-by-$2^t$ matrix $A,B$ with $\lVert A\rVert_F=\lVert B\rVert_F=1$, it holds that
\begin{equation}\left|\Exp_{\tilde{C}\sim\tilde{\mathcal{D}}_t}\left[\tr(B\tilde{C}^\dagger A\tilde{C})^2\right]-\Exp_{C\sim\mathcal{D}_t}\left[\tr(BC^\dagger AC)^2\right]\right|\leq\varepsilon.
\label{eq:2_design}
\end{equation}
While the compilation is agnostic to any specific construction of $\varepsilon$-approximate $2$-design, for concreteness, we could implement the circuits in $\tilde{\mathcal{D}}_t$ using 1D brickwork random Clifford circuits with depth $O(\log(1/\varepsilon))$ \cite{brandao2016local,hunter2019unitary}. Alternatively, we can consider a local random quantum circuit architecture with $O(t\log(1/\varepsilon))$ gates where at each time step, a random $2$-qubit Clifford gate is applied to a randomly chosen qubit and one of its neighbours on the 2D grid \cite{mittal2023local}.
We note that the constructions in \cite{brandao2016local,hunter2019unitary,mittal2023local} are known to from an $\varepsilon$-approximate $2$-design if the circuits are composed of Haar random two-qubit gates or random gates drawn from a universal gate set.
However, we observe that the same constructions yield $\varepsilon$-approximate $2$-design if the gates are chosen randomly from $2$-qubit Clifford gates. 
This is the case because the $2$-design bounds in \cite{brandao2016local,hunter2019unitary,mittal2023local} are established by analyzing mathematical objects defined solely using the second moment operator associated with the $2$-qubit Haar measure $\mathcal{U}_2$
\begin{equation}
\Exp_{U\sim\mathcal{U}_2}\left[U^{\otimes 2}\otimes \left(\overline{U}\right)^{\otimes 2}\right],
\label{eq:moment_operator}
\end{equation}
and the architecture template---for example by lower bounding the spectral gap of a local Hamiltonian defined in terms of \cref{eq:moment_operator}. Since the uniform distribution over $2$-qubit Clifford gates $\mathcal{D}_2$ forms an exact unitary $2$-design,
$$\Exp_{C\sim\mathcal{D}_2}\left[C^{\otimes 2}\otimes \left(\overline{C}\right)^{\otimes 2}\right]=\Exp_{U\sim\mathcal{U}_2}\left[U^{\otimes 2}\otimes \left(\overline{U}\right)^{\otimes 2}\right].$$

Mirroring the proof strategy for \cref{lem:coarse_grained_prob_Ztype}, we begin by establishing a direct analog of \cref{lem:random Paulis and Cliffords} in \Cref{app:compiledprob_ZType}.

\begin{lemma}
Let $K$ be an arbitrary non-identity $t$-qubit Pauli operator, draw a $\tilde{C}\sim\tilde{\mathcal{D}}_t$, and define $P=\tilde{C}^\dagger K\tilde{C}$. Let $h,s\geq 1$ satisfy $h+s=t$. Decompose $P$ as $P=\pm P_1\otimes P_2$ such that $P_1$ and $P_2$ are $h$- and $s$-qubit Pauli operators respectively. Let $\emptyset\neq \mathcal{P}\subseteq\{I,X,Y,Z\}^{\otimes s}$. Then for every $\varepsilon\leq \frac{1}{3(4^t-1)}$, it holds that
\begin{equation}
\Pr(P_1=I^{\otimes h}|P_2\in\mathcal{P})\leq\frac{1}{4^h}(1+3(4^t-1)\varepsilon).
\end{equation}
\end{lemma}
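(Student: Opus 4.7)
The plan is to exploit the approximate 2-design bound (Eq.~\eqref{eq:2_design}) to control, Pauli-by-Pauli, the probability that $\tilde{C}^\dagger K \tilde{C}$ equals a given Pauli up to sign, and then to assemble the desired conditional probability from these pointwise estimates. The first step is the following dictionary. For any non-identity Pauli $Q$, set $A=K/2^{t/2}$ and $B=Q/2^{t/2}$, so $\lVert A\rVert_F=\lVert B\rVert_F=1$. A direct calculation using orthogonality of Pauli operators under the Hilbert–Schmidt inner product gives $\tr(B\tilde C^\dagger A\tilde C)^2=\mathds{1}[\tilde{C}^\dagger K\tilde{C}=\pm Q]$, so taking expectations yields $\Pr_{\tilde C}(\tilde{C}^\dagger K\tilde{C}=\pm Q)=\Exp_{\tilde C}[\tr(B\tilde{C}^\dagger A\tilde{C})^2]$. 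Doing the same for $C\sim\mathcal{D}_t$ and invoking Eq.~\eqref{eq:2_design} gives
\begin{equation*}
\bigl|\Pr_{\tilde C}(\tilde C^\dagger K\tilde C=\pm Q)-\Pr_{C}(C^\dagger KC=\pm Q)\bigr|\leq \varepsilon.
\end{equation*}
Since $C^\dagger K C$ is uniform over the $4^t-1$ nontrivial Pauli classes (mod sign), the exact-Clifford probability equals $1/(4^t-1)$ whenever $Q$ is nontrivial, and $0$ otherwise. Hence for every nontrivial $Q$, $\Pr_{\tilde C}([P]=[Q])\in [1/(4^t-1)-\varepsilon,\, 1/(4^t-1)+\varepsilon]$.

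Next I would expand the conditional probability as a ratio and plug in the pointwise estimates. Let $\mathcal{P}^\ast=\mathcal{P}\setminus\{I^{\otimes s}\}$ and $\delta=\mathds{1}[I^{\otimes s}\in\mathcal{P}]$. For the numerator, since $K$ is nontrivial one has $P\neq\pm I$, so the event $\{P_1=I^{\otimes h},P_2\in\mathcal{P}\}$ is the disjoint union over $Q_2\in\mathcal{P}^\ast$ of $\{[P]=[I^{\otimes h}\otimes Q_2]\}$, giving
\begin{equation*}
\Pr(P_1=I^{\otimes h},P_2\in\mathcal{P})\leq (|\mathcal{P}|-\delta)\!\left(\tfrac{1}{4^t-1}+\varepsilon\right).
\end{equation*}
For the denominator, the event $\{P_2\in\mathcal{P}\}$ decomposes into $4^h|\mathcal{P}|-\delta$ nontrivial Pauli classes (we subtract the identity class if and only if $I^{\otimes s}\in\mathcal{P}$), so
\begin{equation*}
\Pr(P_2\in\mathcal{P})\geq (4^h|\mathcal{P}|-\delta)\!\left(\tfrac{1}{4^t-1}-\varepsilon\right).
\end{equation*}

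The final step is to simplify the resulting ratio. The combinatorial factor $\frac{|\mathcal{P}|-\delta}{4^h|\mathcal{P}|-\delta}$ is $\leq 1/4^h$ in both cases $\delta\in\{0,1\}$ (a one-line check). The remaining factor is $\frac{1+x}{1-x}$ with $x=(4^t-1)\varepsilon\leq 1/3$ by the assumption on $\varepsilon$; writing $\frac{1+x}{1-x}=1+\frac{2x}{1-x}\leq 1+3x$ yields the claimed bound $\frac{1}{4^h}\bigl(1+3(4^t-1)\varepsilon\bigr)$. The main obstacle, and the only really delicate point, is the careful bookkeeping around the possibility $I^{\otimes s}\in\mathcal{P}$: the identity class must be excluded from both the numerator sum and the denominator count, and one has to verify that this exclusion never increases the ratio above $1/4^h$ in the exact-Clifford limit. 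Beyond that, the argument is a direct packaging of the pointwise 2-design estimate.
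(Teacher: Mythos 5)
Your proof is correct and follows essentially the same route as the paper's. You derive the pointwise estimate $\Pr_{\tilde C}(\tilde C^\dagger K\tilde C=\pm Q)\in\bigl[\tfrac{1}{4^t-1}-\varepsilon,\ \tfrac{1}{4^t-1}+\varepsilon\bigr]$ from the $2$-design property exactly as the paper does (the paper writes this as $\Pr(P=\pm W)=\tfrac{1}{4^t}\Exp[\tr(W\tilde C^\dagger K\tilde C)^2]$ and then applies Eq.~(2\_design)), and then you count nontrivial Pauli classes in the numerator and denominator. The paper handles $I^{\otimes s}\notin\mathcal{P}$ and $I^{\otimes s}\in\mathcal{P}$ as two separate displayed cases, whereas you unify them with the indicator $\delta$ and note that $\tfrac{|\mathcal{P}|-\delta}{4^h|\mathcal{P}|-\delta}\leq 4^{-h}$ for $\delta\in\{0,1\}$; this is a minor notational streamlining, not a different argument. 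The final step, $\tfrac{1+x}{1-x}\leq 1+3x$ for $x=(4^t-1)\varepsilon\leq 1/3$, is also the same inequality the paper uses. In short, a correct proof by the same method.
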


\begin{proof}
In the following, we make use of the fact that for every non-identity Pauli $W\in\{I,X,Y,Z\}^{\otimes t}\setminus\{I^{\otimes t}\}$,
\begin{equation}
\Pr(P=\pm W)=\frac{1}{4^t}\Exp_{\tilde{C}\sim\tilde{\mathcal{D}}_t}\left[\tr(W \tilde{C}^\dagger K \tilde{C})^2\right].
\end{equation}
Thus, by \cref{eq:2_design},
\begin{equation}
\frac{1}{4^t-1}-\varepsilon\leq \Pr(P=\pm W)\leq \frac{1}{4^t-1}+\varepsilon.
\end{equation}
Since $\mathcal{P}\neq\emptyset$, either $I^{\otimes s}\notin\mathcal{P}$ or $I^{\otimes s}\in\mathcal{P}$. First consider the case where $I^{\otimes s}\notin\mathcal{P}$. In this case, we have
\begin{align}
\Pr(P_1=I^{\otimes h}|P_2\in\mathcal{P})&=\frac{\Pr(P_1=I^{\otimes h}\text{ and }P_2\in\mathcal{P})}{\Pr(P_2\in\mathcal{P})}\nonumber\\
&\leq\frac{|\mathcal{P}|\left(\frac{1}{4^t-1}+\varepsilon\right)}{|\mathcal{P}|4^{h}\left(\frac{1}{4^t-1}-\varepsilon\right)}\nonumber\\
&\leq \frac{1}{4^h}(1+3(4^t-1)\varepsilon)
\end{align}
where in the last line, we use the fact that $\frac{1+x}{1-x}\leq 1+3x$ for every $0\leq x\leq\frac{1}{3}$. Secondly, for the case where $I^{\otimes s}\in\mathcal{P}$, we have
\begin{align}
&\Pr(P_1=I^{\otimes h}|P_2\in\mathcal{P})\nonumber\\
=&\frac{\Pr(P_1=I^{\otimes h}\text{ and }P_2\in\mathcal{P})}{\Pr(P_2\in\mathcal{P})}\nonumber\\\
=&\frac{\Pr(P_1=I^{\otimes h}\text{ and }P_2=I^{\otimes s})+\Pr(P_1=I^{\otimes h}\text{ and }P_2\in\mathcal{P}\setminus\{I^{\otimes s}\})}{\Pr(P_2=I^{\otimes s})+\Pr(P_2\in\mathcal{P}\setminus\{I^{\otimes s}\})}\nonumber\\
\leq&\frac{(|\mathcal{P}|-1)(\frac{1}{4^t-1}+\varepsilon)}{(4^{h}|\mathcal{P}|-1)(\frac{1}{4^t-1}-\varepsilon)}\nonumber\\
\leq&\frac{1}{4^h}(1+3(4^t-1)\varepsilon).
\end{align}
\end{proof}

\begin{lemma}
\label{lem:compiled_operator_Z_type}
Let $r\in\{1,2,3,4\}$. Let $K_1,\ldots,K_r$ be arbitrary non-identity $t$-qubit Pauli operators. For every $i\in\{1,\ldots,r\}$, draw an independent $\tilde{C}_i\sim\tilde{\mathcal{D}}_t$, define $P_i=\tilde{C}_i^\dagger K_i\tilde{C}_i$, and write $P_i=\pm E_i\otimes F_i$ such that $E_i$ and $F_i$ are $\frac{t}{4}$- and $\frac{3t}{4}$-qubit Pauli operators respectively. Define $P_Q=\bigotimes_{i=1}^r E_i$, $P_Q'=\bigotimes_{i=1}^r F_i$, and $k=\frac{rt}{4}$. Let $\emptyset\neq\mathcal{P}\subseteq\{I,X,Y,Z\}^{3k}$. Then 
\begin{enumerate}
\item for $r=4$,
\begin{equation}
\Pr_{(\tilde{C}_1,\ldots,\tilde{C}_r,\tilde{C})\sim\tilde{\mathcal{D}}_t}[\tilde{C}^\dagger P_Q\tilde{C}\text{ is $Z$-type}|P_Q'\in\mathcal{P}]\leq\left(\frac{1}{2}\right)^{t-2^{2t+2}\varepsilon},
\end{equation}

\item for $r\in\{1,2,3\}$, for every $\varepsilon\leq \frac{1}{8^t}$,
\begin{equation}
\Pr_{(\tilde{C}_1,\ldots,\tilde{C}_r,\tilde{C})\sim\tilde{\mathcal{D}}_t}[\tilde{C}^\dagger(P_Q\otimes I^{\otimes t-k})\tilde{C}\text{ is $Z$-type}|P_Q'\in\mathcal{P}]\leq\frac{1}{4^k}+\frac{2}{2^t}.
\end{equation}
\end{enumerate}
\end{lemma}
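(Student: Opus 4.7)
The plan is to mirror the analysis of \cref{lem:random Paulis and Cliffords} but carefully track the $\varepsilon$-deviations that come from replacing an exact $2$-design by an $\varepsilon$-approximate one. The key structural observation is that $\tilde C$ is drawn independently from $\tilde C_1,\dots,\tilde C_r$, and the conditioning event $P_Q'\in\mathcal P$ depends only on $\tilde C_1,\dots,\tilde C_r$. So I will condition on the value of $P_Q=\bigotimes_i E_i$ and split the probability into (i) the contribution from $P_Q=I^{\otimes k}$, which makes $\tilde C^\dagger(P_Q\otimes I^{\otimes t-k})\tilde C=\pm I^{\otimes t}$ automatically $Z$-type, and (ii) the contribution from non-identity $P_Q$, which is handled by a fresh independent draw of $\tilde C$.

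For branch (ii) I will apply the approximate $2$-design bound \cref{eq:2_design} with $A=W\otimes I^{\otimes t-k}$ (a non-identity $t$-qubit Pauli) and $B$ ranging over non-identity $Z$-type $t$-qubit Paulis. In the exact $2$-design case this gives $\Pr[\tilde C^\dagger(W\otimes I^{\otimes t-k})\tilde C\text{ is $Z$-type}]=(2^t-1)/(4^t-1)\leq 2^{-t}$, and the $\varepsilon$-approximation perturbs each of the $2^t-1$ terms by at most $\varepsilon$, giving a uniform bound $\leq 2^{-t}+2^t\varepsilon$.

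For branch (i) I will invoke the preliminary lemma (stated just before \cref{lem:compiled_operator_Z_type}) once for each $i\in\{1,\dots,r\}$ with parameters $h=t/4$, $s=3t/4$, and $\mathcal P_i=\{f_i\}$. The crucial point is that the $r$ pairs $(E_i,F_i)$ are mutually independent because each is a deterministic function of a different $\tilde C_i$. Decomposing the conditional expectation as a sum over admissible tuples $(f_1,\dots,f_r)\in\mathcal P$, the conditional probabilities $\Pr[E_i=I^{\otimes t/4}\mid F_i=f_i]$ factor and I obtain
\[
\Pr[P_Q=I\mid P_Q'\in\mathcal P]\leq \bigl(\tfrac{1}{4^{t/4}}(1+3(4^t-1)\varepsilon)\bigr)^r=\tfrac{1}{4^k}\bigl(1+3(4^t-1)\varepsilon\bigr)^r.
\]
Combining the two branches then gives the two advertised bounds. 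For $r\in\{1,2,3\}$ the hypothesis $\varepsilon\leq 1/8^t$ makes $3(4^t-1)\varepsilon$ small enough that $(1+3(4^t-1)\varepsilon)^r\leq 1+O(1/2^t)$, so branch (i) contributes at most $4^{-k}+2^{-t}$ and branch (ii) contributes at most $2^{-t}+2^t\varepsilon\leq 2^{-t}+2^{-2t}$, absorbing cleanly into $4^{-k}+2\cdot 2^{-t}$. For $r=4$ we have $k=t$ so both branches are of order $2^{-t}$ and their sum is $2^{-t}\bigl(1+O(4^t\varepsilon)\bigr)$; using $1+x\leq 2^x$ I will repackage this as $(1/2)^{t-2^{2t+2}\varepsilon}$ after choosing constants to absorb the factor $4^t$ difference.

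The main obstacle is purely bookkeeping: ensuring that the $\varepsilon$-error terms from the $r$ applications of the preliminary lemma (which scale like $(4^t\varepsilon)^r$) and the $\varepsilon$-error from the approximate $2$-design step (which scales like $2^t\varepsilon$) compose into the specific closed forms $4^{-k}+2\cdot 2^{-t}$ and $(1/2)^{t-2^{2t+2}\varepsilon}$. The exponential form in part~1 suggests using $1+x\leq 2^x$ rather than first-order expansion, and I expect the constants $2^{2t+2}$ and the additive $2/2^t$ to be mildly loose but natural choices that make the final inequalities go through.
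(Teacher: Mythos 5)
Your proposal is correct and follows essentially the same strategy as the paper's proof: you split on whether $P_Q$ is the identity, bound $\Pr[P_Q=I^{\otimes k}\mid P_Q'\in\mathcal P]$ by factoring over the independent pairs $(E_i,F_i)$ and invoking the preliminary lemma with singleton conditioning sets, and bound the other branch via the approximate $2$-design inequality applied to the independent draw of $\tilde C$. The only cosmetic difference is that you bound $\Pr[\tilde C^\dagger(P_Q\otimes I)\tilde C\text{ is $Z$-type}]$ by summing the $\varepsilon$-perturbed collision probabilities over the $2^t-1$ non-identity $Z$-type Paulis, whereas the paper obtains the same bound $\tfrac{1}{2^t+1}+2^t\varepsilon$ in one shot by taking $B=|0^t\rangle\langle 0^t|$ in the $2$-design inequality; both routes give the same estimate. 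Be a little careful in the final assembly: for the $r=4$ case the paper retains the cross term $-\Pr[P_Q=I|\cdots]\cdot(\tfrac{1}{2^t+1}+2^t\varepsilon)$ from the convex combination (rather than simply adding the two branch bounds), which is what reduces $\tfrac{1}{4^t}+\tfrac{1}{2^t+1}$ to exactly $\tfrac{1}{2^t}$ before the $\varepsilon$-terms are repackaged via $\log_2(1+x)\le 2x$; your "sum of branches" shortcut would leave an extra $\approx 4^{-t}$ that the target exponential form does not absorb for tiny $\varepsilon$. Keeping the full convex combination, as you in fact write for the split, closes that gap.
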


\begin{proof}
For every non-identity $t$-qubit Pauli operator $W\in\{I,X,Y,Z\}^{\otimes t}\setminus\{I^{\otimes t}\}$, it holds that
\begin{equation}
\Pr_{\tilde{C}\sim\tilde{\mathcal{D}}_t}\left(\tilde{C}^\dagger W\tilde{C}\text{ is $Z$-type}\right)=2^t\Exp_{\tilde{C}\sim\tilde{\mathcal{D}}_t}\left[\bra{0^n}\tilde{C}^\dagger\frac{W}{\sqrt{2^t}}\tilde{C}\ket{0^n}^2\right]\leq \frac{1}{2^t+1}+2^t\varepsilon.
\end{equation}
By the previous lemma and the independence among $F_1,\ldots,F_r$, 
\begin{align}
\Pr(P_Q=I^{\otimes k}|P_Q'\in\mathcal{P})&\leq \frac{1}{4^k}(1+3(4^t-1)\varepsilon)^r\nonumber\\
&\leq \frac{1}{4^k}+\frac{15(4^t-1)}{4^k}\varepsilon
\end{align}
where we use the fact that $(1+x)^r\leq (1+x)^4\leq 1+5x$ for every $0\leq x\leq 0.15$. Thus, for $r=4$, we get by convexity that
\begin{align}
&\Pr(\tilde{C}^\dagger P_Q \tilde{C}\text{ is $Z$-type}|P_Q'\in\mathcal{P})\nonumber\\
\leq&\Pr(P_Q=I^{\otimes k}|P_Q'\in\mathcal{P})+(1-\Pr(P_Q=I^{\otimes k}|P_Q'\in\mathcal{P}))\left(\frac{1}{2^t+1}+2^t\varepsilon\right)\nonumber\\
\leq&\frac{1}{4^t}+15\varepsilon+\left(1-\frac{1}{4^t}-15\varepsilon\right)\left(\frac{1}{2^t+1}+2^t\varepsilon\right)\nonumber\\
\leq&\frac{1}{2^t}+(2^t+15)\varepsilon\nonumber\\
=& 2^{-t+\log(1+2^t(2^t+15)\varepsilon)}\nonumber\\
\leq& 2^{-t+2^{2t+2}\varepsilon}
\end{align}
where in the last line, we use the fact that $\log_2(1+x)\leq 2x$ for every $x\geq 0$. For the second bound, we have
\begin{align}
&\Pr(\tilde{C}^\dagger(P_Q\otimes I^{\otimes t-k})\tilde{C}\text{ is $Z$-type}|P_Q'\in\mathcal{P})\nonumber\\
\leq &\frac{1}{4^k}+\frac{15(4^t-1)}{4^k}\varepsilon+\left(1-\frac{1}{4^k}-\frac{15(4^t-1)}{4^k}\varepsilon\right)\left(\frac{1}{2^t+1}+2^t\varepsilon\right)\nonumber\\
\leq &\frac{1}{4^k}+\frac{1}{2^t}+\left(\frac{15(4^t-1)}{4^k}+2^t\right)\varepsilon\nonumber\\
\leq &\frac{1}{4^k}+\frac{1}{2^t}+\frac{1}{2^t}
\end{align}
since $\varepsilon\leq\frac{1}{8^t}$.
\end{proof}

\begin{theorem}
\label{thm:compiled_circuit_Z_type}
Let $P$ be an $n$-qubit Pauli operator, and let $\tilde{D}$ be a compiled random Clifford circuit. Also let $a = \abs{\mathcal{G}(P)}$ and $l = \mathrm{Per}(\mathcal{G}(P))$ denote the size and perimeter of $\mathcal{G}(P)$, respectively. Assume $\tau\geq\Omega(\sqrt{\log(m)})$ and choose $\varepsilon=2^{-ct}=2^{-c\tau^2}$ for some large enough $c$ so that $(m-1)^22^{2t+2}\varepsilon\leq 1$. Then 
\begin{align}
\Pr[\tilde{D}^{\dagger}  P \tilde{D} \text{ is $Z$-type}] \leq \left(\frac{1}{2} \right) ^{a \tau^2 + \frac{1}{4}l (\tau^2 / 12 - 2)-1}.
\end{align}
\end{theorem}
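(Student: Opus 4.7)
The plan is to follow the structure of the proof of \cref{lem:coarse_grained_prob_Ztype} but with \cref{lem:compiled_operator_Z_type} playing the role of \cref{lem:random Paulis and Cliffords}, while carefully tracking the $\varepsilon$-corrections introduced by the approximate $2$-design property. I would first factor $\tilde{D}=\tilde{W}\tilde{V}$ into its second and first compiled layers and set $P'=\tilde{W}^{\dagger}P\tilde{W}$. By locality $P'$ is supported on $Q'$, the union of the $a$ second-layer gate supports in $\mathcal{G}(P)$, with $|Q'|=a\tau^2$. Partition the first-layer gates into $\mathcal{V}_\text{int}$ (support entirely in $Q'$) and $\mathcal{V}_\text{perim}$ (support partially in $Q'$), set $Q_i'=Q'\cap\supp(\tilde{V}_i)$, and let $Q_0'=Q'\setminus\bigcup_i Q_i'$. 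The chain rule then gives
\begin{equation*}
\Pr[\tilde{D}^\dagger P\tilde{D}\text{ is $Z$-type}]=\prod_{i=0}^{(m-1)^2}\Pr[A_i\mid A_0,\ldots,A_{i-1}],
\end{equation*}
where $A_i$ is the event that $\tilde{V}_i^\dagger P'|_{\supp(\tilde{V}_i)}\tilde{V}_i$ is $Z$-type (with $\tilde{V}_0$ understood as the identity on $Q_0'$).

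For each interior gate $\tilde{V}_i\in\mathcal{V}_\text{int}$ the four second-layer gates overlapping $\supp(\tilde{V}_i)$ all lie in $\mathcal{G}(P)$, so I would invoke part~1 of \cref{lem:compiled_operator_Z_type} with $r=4$, $\tilde{C}=\tilde{V}_i$, $\tilde{C}_k=\tilde{W}_{j_k}$, and $K_k=P|_{\supp(\tilde{W}_{j_k})}$ (non-identity by definition of $\mathcal{G}(P)$), obtaining a bound of $2^{-\tau^2+2^{2\tau^2+2}\varepsilon}$. For a perimeter gate with $r_i\in\{1,2,3\}$ second-layer gates in $\mathcal{G}(P)$ overlapping $\supp(\tilde{V}_i)$, part~2 gives $4^{-|Q_i'|}+2\cdot 2^{-\tau^2}$, which using $\tau^2/4\le|Q_i'|\le 3\tau^2/4$ I would bound by $2^{-|Q_i'|-\tau^2/12+2}$ following the arithmetic in the original proof (the factor of $2$ in $2\cdot 2^{-\tau^2}$ converts the $+1$ in the original exponent into a $+2$). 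To invoke \cref{lem:compiled_operator_Z_type} under the chain-rule conditioning, I would first marginalize over the independent first-layer gates $\tilde{V}_0,\ldots,\tilde{V}_{i-1}$; since these are disjoint from $\supp(\tilde{V}_i)$, the event $\{A_j\}_{j<i}$ thereby becomes a condition on the parts of the current second-layer conjugates lying outside $\supp(\tilde{V}_i)$---precisely an event of the form $P_Q'\in\mathcal{P}$ in the notation of \cref{lem:compiled_operator_Z_type}. The $Q_0'$ contribution is bounded by $2^{-|Q_0'|}(1+O(4^{\tau^2}\varepsilon))^{a}$ via a direct application of the approximate $2$-design property to each relevant second-layer gate.

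Multiplying these bounds and using $|\mathcal{V}_\text{perim}|\ge l/4$ and $|Q'|=a\tau^2$, the perimeter contributions combine to $2^{-\sum_{i\in\mathcal{V}_\text{perim}}|Q_i'|-(l/4)(\tau^2/12-2)}$, while the interior and $Q_0'$ contributions produce $2^{-\sum_{i\notin\mathcal{V}_\text{perim}}|Q_i'|}$ times a cumulative $\varepsilon$-correction of at most $(m-1)^2\cdot 2^{2\tau^2+2}\varepsilon\le 1$ in the exponent (by the hypothesis on $c$). Together these yield the claimed bound $(1/2)^{a\tau^2+(l/4)(\tau^2/12-2)-1}$, where the $-1$ in the exponent absorbs the $O(1)$ multiplicative factor from the accumulated $\varepsilon$-corrections. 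The main technical obstacle is the chain-rule bookkeeping: one must verify that after marginalizing over the earlier first-layer gates, the compound event $\{A_j\}_{j<i}$ genuinely reduces to an event of the form $P_Q'\in\mathcal{P}$ required by \cref{lem:compiled_operator_Z_type}, and one must confirm that the large per-gate $\varepsilon$-penalty of $2^{2\tau^2+2}$ is absorbed by the choice $\varepsilon=2^{-c\tau^2}$ with $c$ large enough to ensure $(m-1)^2\cdot 2^{2\tau^2+2}\varepsilon\le 1$, which is exactly the scaling the theorem's hypothesis enforces.
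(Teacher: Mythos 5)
Your proposal follows essentially the same route as the paper's proof: factor $\tilde{D}=\tilde{W}\tilde{V}$, apply the chain rule over the first-layer gates $\tilde{V}_0,\ldots,\tilde{V}_{(m-1)^2}$, use part~1 of \cref{lem:compiled_operator_Z_type} for interior gates to get $2^{-\tau^2+2^{2t+2}\varepsilon}$, use part~2 for perimeter gates together with $\tau^2/4\le |Q_i'|\le 3\tau^2/4$ to get $2^{-|Q_i'|-\tau^2/12+2}$, multiply, and absorb the accumulated exponent penalty $(m-1)^2 2^{2t+2}\varepsilon\le 1$ into the final "$-1$". Your explicit articulation of why the chain-rule conditioning legitimately reduces to an event of the form $P_Q'\in\mathcal{P}$ (by marginalizing over the earlier, spatially disjoint first-layer gates so that only the second-layer outputs outside $\supp(\tilde{V}_i)$ matter) is a correct and useful expansion of a step the paper leaves implicit; your $Q_0'$ bound is stated as a multiplicative $(1+O(4^{\tau^2}\varepsilon))^a$ correction rather than the paper's additive $2^{2t+2}\varepsilon$ in the exponent, but both are controlled by the same hypothesis on $c$ and yield the same conclusion.
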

\begin{proof}
It suffices to establish bounds analogous to the ones appearing in the proof of \cref{lem:coarse_grained_prob_Ztype} for compiled random Clifford circuits. We inherit all the definitions from the proof of \cref{lem:coarse_grained_prob_Ztype}. 
For every $V_i\in\mathcal{V}_\text{int}$,
\begin{align}
\Pr\left[ V_i^\dagger P'_i V_i \text{ is $Z$-type} \; | \; V_j^\dagger P'_j V_j \text{ is $Z$-type for all $j < i $}\right] \leq 2^{-\tau^2+2^{2t+2}\varepsilon} = 2^{- \abs{Q_i'}+2^{2t+2}\varepsilon}.
\end{align}
For every $V_i\in\mathcal{V}_\text{perim}$,
\begin{align}
\Pr\left[ V_i^\dagger P'_i V_i \text{ is $Z$-type} \; | \; V_j^\dagger P'_j V_j \text{ is $Z$-type for all $j < i $}\right] &\leq 4^{-\abs{Q_i'}} + 2^{-\tau^2+1}\nonumber\\
&\leq  4^{-\abs{Q_i'}} + 2^{-4 \abs{Q_i'} / 3+1}\nonumber\\
&=2^{-\abs{Q_i'}} \left(2^{-\abs{Q_i'}} + 2^{-\abs{Q_i'}/3+1} \right)\nonumber\\
&\leq 2^{-|Q'|-\tau^2/12+2}.
\end{align}
Finally,
\begin{align}
\Pr[D^\dagger PD\text{ is $Z$-type}]&\leq 2^{-\abs{Q_0'}+2^{2t+2}\varepsilon} \left(\prod_{V_i \in \mathcal{V}_\text{int}} 2^{-\abs{Q_i'}+2^{2t+2}\varepsilon}\right)\left(\prod_{V_j \in \mathcal{V}_\text{perim}} 2^{-\abs{Q_j} - \tau^2/12 + 2}\right)\nonumber\\
&\leq 2^{- \abs{\mathcal{V}_\text{perim}} (\tau^2/12 - 2) - \sum_{i=0}^{(m-1)^2} \abs{Q'_i}+(m-1)^22^{2t+2}\varepsilon}\nonumber \\
&\leq 2^{- \abs{\mathcal{V}_\text{perim}} (\tau^2/12 - 2) - \abs{Q'}+1}.
\end{align}
To finish the proof, we recall from the proof of \cref{lem:coarse_grained_prob_Ztype} that $|Q_i|=a\tau^2$ and $|\mathcal{V}_\text{perim}|\geq \frac{1}{4}l$.
\end{proof}

\section{Proof of \cref{lem:limit} \label{sec:ghzrestriction}}
\begin{proof}
Without loss of generality let us fix a local basis so that $C_h, C_i, C_j$ are identity, and the stabilizer group of $\psi$ contains stabilizers $g_1,g_2,g_3$ in \cref{eq:g13}. By taking products of these stabilizers $g_1,g_2,g_3$  we see that the stabilizer group of $\psi$ contains the following elements:
\begin{align}
g_1&= X_hX_iX_j Z(s) \nonumber\\
g_2&=Z_hZ_i I_j Z(t)\nonumber\\
g_3&=I_h Z_iZ_j Z(u)\nonumber\\
g_4&= -Y_hY_iX_j Z(s\oplus t)\nonumber\\
g_5&=-X_hY_iY_j Z(s\oplus u)\nonumber\\
g_6&=-Y_hX_iY_j Z(s\oplus t\oplus u)\nonumber\\
g_7&=Z_hI_iZ_j Z(u\oplus t).
\label{eq:gall}
\end{align}
Let us assume, to reach a contradiction, that $m$ satisfies \cref{eq:possibilistic}. Write
\[
m_s=\prod_{v\in [n]}m_v^{s_v} \qquad m_t=\prod_{v\in [n]}m_v^{t_v} \qquad m_u=\prod_{v\in [n]}m_v^{u_v}.
\]

Note that our function $m$ must only output binary strings that are consistent with stabilizers that are diagonal in the measurement basis $b$. In this way each of the stabilizers \cref{eq:gall} places a constraint on the possible outputs of $m$. 

For example, from the stabilizer $g_2$ we infer that 
\begin{equation}
m_h\cdot m_i\cdot m_t=1 \qquad \text{whenever} \quad b_h=b_i=2.
\label{eq:stabg2}
\end{equation}
Since each output bit depends on only one input bit, and  $m_h,m_i$ are independent of $b_j$,  we can infer from the above that $m_t$ is independent of $b_j$. We now explain this in more detail. Let $L_h,L_i,L_j\subseteq [n]$ be the set of output bits $m_v$ that depend on inputs $b_h,b_i,b_j$ respectively, so that
\begin{equation}
m_t=(-1)^a\prod_{v\in L_h} m_v(b_h)\prod_{v\in L_i} m_v(b_i)\prod_{v\in L_j} m_v(b_j)
\label{eq:mt}
\end{equation}
for some $a\in \{0,1\}$. Then \cref{eq:stabg2} implies
\begin{equation}
\prod_{v\in L_j} m_v(b_j)=(-1)^a m_h(2)\cdot m_i(2)\cdot \prod_{v\in L_h} m_v(2)\prod_{v\in L_i} m_v(2)\qquad b_j\in \{0,1,2\}.
\label{eq:prodc}
\end{equation}
where $b\in \{0,1\}$ is fixed. Plugging \cref{eq:prodc} into \cref{eq:mt} gives 
\[
m_t=\prod_{v\in L_h} m_v(b_h)\prod_{v\in L_i} m_v(b_i)\left[m_h(2)\cdot m_i(2)\cdot \prod_{v\in L_h} m_v(2)\prod_{v\in L_i} m_v(2)\right]
\]
which shows that $m_t$ is independent of $b_j$. 

In exactly the same way, from the stabilizers $g_3, g_7$ we infer that $m_u$ is independent of $b_h$ and that $m_u\cdot m_t$ is independent of $b_i$.

Next consider the subset of inputs $b_h,b_i,b_j\in \{0,1\}$ which correspond to measuring these qubits in the $X$ or $Y$ bases. Since $m_t$ is independent of $b_j$ and $m_U$ is independent of $b_h$, there exist affine functions $e,f,g,h:\{0,1\}\rightarrow \{0,1\}$ such that 
\begin{equation}
m_t=(-1)^{e(b_h)+f(b_i)} \quad m_u=(-1)^{g(b_i)+h(b_j)} \qquad b_h,b_i,b_j\in \{0,1\}.
\label{eq:mtmu}
\end{equation}
Since $m_t\cdot m_u$ is independent of $b_i$ we may set 
\begin{equation}
g(b_i)=f(b_i).
\label{eq:fg}
\end{equation}

From the stabilizers $g_1,g_4,g_5,g_6$ we infer
\begin{equation}
i^{b_h+b_i+b_j}m_hm_im_j m_{s}m_t^{b_h}m_u^{b_j}=1 \qquad b_h\oplus b_i\oplus b_j=0 \qquad b_h,b_i,b_j\in \{0,1\}.
\label{eq:ms1}
\end{equation}

Plugging \cref{eq:mtmu,eq:fg} into \cref{eq:ms1} gives, for $b_h,b_i,b_j\in \{0,1\}$, 
\begin{equation}
i^{b_h+b_i+b_j}m_hm_im_j m_{s}(-1)^{b_h(e(b_h)+f(b_i))}(-1)^{b_j(f(b_i)+h(b_j))}=1 \qquad b_h\oplus b_i\oplus b_j=0.
\label{eq:ms2}
\end{equation}

Now since each output bit $m_v$ depends on at most one of the variables $b_h,b_i,b_j$, from the above we infer that for $b_h,b_i,b_j\in \{0,1\}$ we have
\begin{equation}
i^{b_h+b_i+b_j}(-1)^{\alpha_0+\alpha_1b_h+\alpha_2b_i+\alpha_3b_j}(-1)^{\beta b_hb_i}(-1)^{\beta b_ib_j}=1 \qquad b_h\oplus b_i\oplus b_j=0 
\label{eq:ms}
\end{equation}
for some coefficients
\begin{equation}
\alpha_0,\alpha_1,\alpha_2,\alpha_3,\beta \in \{0,1\}
\label{eq:coefs}
\end{equation}
The claim below shows that we have reached a contradiction and we conclude that \cref{eq:possibilistic} is not satisfied by any function $m$ of the form described in the theorem statement. \cref{eq:fails} then follows directly.
\begin{claim}
\cref{eq:ms} is not satisfied for any choice of coefficients \cref{eq:coefs}.
\end{claim}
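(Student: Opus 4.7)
The plan is to evaluate \cref{eq:ms} at each of the four admissible triples $(b_h,b_i,b_j)\in\{0,1\}^3$ satisfying $b_h\oplus b_i\oplus b_j=0$, and derive a parity contradiction on the $\alpha_i,\beta$. Note that all four factors on the left-hand side lie in $\{\pm 1,\pm i\}$, so the condition that the product equals $1$ translates directly into a linear constraint mod $4$ on the $i$-exponent and mod $2$ on the $(-1)$-exponent; in each of the three nontrivial cases the factor $i^{b_h+b_i+b_j}$ equals $i^2=-1$, so the constraint reduces to a linear equation mod $2$ in the $\alpha_i$ and $\beta$.

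The four cases give:
\begin{itemize}
\item $(0,0,0)$: the equation collapses to $(-1)^{\alpha_0}=1$, forcing $\alpha_0=0$.
\item $(1,1,0)$: using $\alpha_0=0$, this yields $\alpha_1+\alpha_2+\beta\equiv 1\pmod 2$.
\item $(1,0,1)$: this yields $\alpha_1+\alpha_3\equiv 1\pmod 2$ (the $\beta$ terms both vanish since $b_i=0$).
\item $(0,1,1)$: this yields $\alpha_2+\alpha_3+\beta\equiv 1\pmod 2$.
\end{itemize}
Adding the last three equations mod $2$ gives $2(\alpha_1+\alpha_2+\alpha_3+\beta)\equiv 3\pmod 2$, i.e.\ $0\equiv 1\pmod 2$, a contradiction. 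Hence no $(\alpha_0,\alpha_1,\alpha_2,\alpha_3,\beta)\in\{0,1\}^5$ satisfies \cref{eq:ms}.

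There is no real obstacle here; the only thing to be careful about is tracking the factors of $i$ correctly (in particular noting that the only triples with $b_h\oplus b_i\oplus b_j=0$ have $b_h+b_i+b_j\in\{0,2\}$, so the $i$-phase is always real), and recognizing that the three nontrivial cases encode the three independent constraints that one expects from the GHZ non-signaling contradiction underlying Mermin's magic square. The argument is a direct $2^3$ case check and the contradiction is obtained by summing three of the four resulting linear relations.
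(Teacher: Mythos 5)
Your proof is correct and follows essentially the same argument as the paper: plug in the four admissible triples, extract $\alpha_0=0$ and three linear constraints mod $2$, then sum them to reach $0\equiv 1\pmod 2$. The equations you derive for the three nontrivial cases match the paper's exactly, and your remark that $b_h+b_i+b_j$ is always even (so the $i$-phase contributes only $\pm 1$) is a clean sanity check that the paper leaves implicit.
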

\begin{proof}
Assume, to reach a contradiction, that \cref{eq:ms} holds. We'll  plug in values of $b_h,b_i,b_j$ and infer a set of linear equations that must be satisfied by the coefficients $\alpha_0,\alpha_1,\alpha_2,\alpha_3,\beta$. By plugging in $b_h=b_i=b_j=0$ we infer $\alpha_0=0$. Next setting $(b_h,b_i,b_j)=\{(1,1,0),(1,0,1),(0,1,1)\}$ we get
\begin{align*}
\alpha_1+\alpha_2+\beta&=1 \mod 2\\
\alpha_1+\alpha_3 &=1 \mod 2\\
\alpha_2+\alpha_3+\beta&=1 \mod 2\\
\end{align*}
Summing these three equations we are led to a contradiction ($0=1 \mod 2$) and therefore there is no solution.
\end{proof}
\end{proof}

\end{document}